\def\llncs{0}
\def\fullpage{1}
\def\anonymous{0}
\def\authnote{1}
\def\notxfont{0}
\def\submission{0}
\def\llncs{1}
\definecolor{darkblue}{rgb}{0,0,0.6}
\definecolor{darkgreen}{rgb}{0,0.5,0}
\definecolor{maroon}{rgb}{0.5,0.1,0.1}
\definecolor{dpurple}{rgb}{0.2,0,0.65}
\DeclareMathAlphabet{\mathpzc}{OT1}{pzc}{m}{it}
\newtheoremstyle{thicktheorem}%
{\topsep}
{\topsep}
{\itshape}{}%
{\bfseries}%
{.}
{ }%
{\thmname{#1}\thmnumber{ #2}%
		\thmnote{ (#3)}%
}
\newtheoremstyle{remark}
{\topsep}
{\topsep}
	{}
	{}
	{}
	{.}
	{ }
	{\textit{\thmname{#1}}\thmnumber{ #2}
			\thmnote{ (#3)}%
	}
	\theoremstyle{thicktheorem}
	\newtheorem{theorem}{Theorem}[section]
	\newtheorem{lemma}[theorem]{Lemma}
	\newtheorem{definition}[theorem]{Definition}
	\theoremstyle{remark}
	\newtheorem{remark}[theorem]{Remark}
\Crefname{MyClaim}{Claim}{Claims}
	\crefname{theorem}{Theorem}{Theorems}
	\crefname{assumption}{Assumption}{Assumptions}
	\crefname{construction}{Construction}{Constructions}
	\crefname{corollary}{Corollary}{Corollaries}
	\crefname{conjecture}{Conjecture}{Conjectures}
	\crefname{definition}{Definition}{Definitions}
	\crefname{exmaple}{Example}{Examples}
	\crefname{experiment}{Experiment}{Experiments}
	\crefname{counterexample}{Counterexample}{Counterexamples}
	\crefname{lemma}{Lemma}{Lemmata}
	\crefname{observation}{Observation}{Observations}
	\crefname{proposition}{Proposition}{Propositions}
	\crefname{remark}{Remark}{Remarks}
	\crefname{claim}{Claim}{Claims}
	\crefname{fact}{Fact}{Facts}
	\crefname{note}{Note}{Notes}
 \crefname{appendix}{App.}{Appendices}
 \crefname{section}{Sec.}{Sections}
\renewcommand*{\backref}[1]{}
	\renewcommand*{\backref}[1]{(Cited on page~#1.)}
\newcommand{\mor}[1]{}
\newcommand{\minki}[1]{}
\newcommand{\takashi}[1]{}
\newcommand{\mor}[1]{$\ll$\textsf{\color{red} Tomoyuki: { #1}}$\gg$}
\newcommand{\takashi}[1]{$\ll$\textsf{\color{orange} Takashi: { #1}}$\gg$}
\newcommand{\minki}[1]{$\ll$\textsf{\color{darkgreen} Minki: { #1}}$\gg$}
\newcommand{\Tr}{\mathrm{Tr}}
\newcommand{\BQP}{\mathbf{BQP}}
\newcommand{\QMA}{\mathbf{QMA}}
\newcommand{\NP}{\mathbf{NP}}
\newcommand{\StateGen}{\mathsf{StateGen}}
\newcommand{\Mint}{\mathsf{Mint}}
\newcommand{\seteq}{\coloneqq}
\newcommand{\cA}{\mathcal{A}}
\newcommand{\cB}{\mathcal{B}}
\newcommand{\cC}{\mathcal{C}}
\newcommand{\cD}{\mathcal{D}}
\def\makeuppercase#1{
\expandafter\newcommand\csname tl#1\endcsname{\widetilde{#1}}
}
\def\makelowercase#1{
\expandafter\newcommand\csname tl#1\endcsname{\widetilde{#1}}
}
\newcommand{\regC}{\mathbf{C}}
\newcommand{\regR}{\mathbf{R}}
\newcommand{\regZ}{\mathbf{Z}}
\newcommand{\regB}{\mathbf{B}}
\newcommand{\regA}{\mathbf{A}}
\newcommand{\secp}{\lambda}
\newcommand{\A}{\entity{A}}
\newcommand*{\sk}{\keys{sk}}
\newcommand*{\pk}{\keys{pk}}
\newcommand{\ct}{\keys{ct}}
\newcommand*{\keys}[1]{\mathsf{#1}}
\newcommand*{\algo}[1]{\ensuremath{\mathsf{#1}}}
\newcommand*{\entity}[1]{\mathcal{#1}}
\newenvironment{boxfig}[2]{\begin{figure}[#1]\fbox{\begin{minipage}{0.97\linewidth}
                        \vspace{0.2em}
                        \makebox[0.025\linewidth]{}
                        \begin{minipage}{0.95\linewidth}
            {{
                        #2 }}
                        \end{minipage}
                        \vspace{0.2em}
                        \end{minipage}}}{\end{figure}}
\newcommand{\pprotocol}[4]{
\begin{boxfig}{h}{\footnotesize 
\centering{\textbf{#1}}
    #4
\vspace{0.2em} } \caption{\label{#3} #2}
\end{boxfig}
}
\newcommand{\protocol}[4]{
\pprotocol{#1}{#2}{#3}{#4} }
\newcommand{\bit}{\{0,1\}}
\newcommand{\Gen}{\algo{Gen}}
\newcommand{\KeyGen}{\algo{KeyGen}}
\newcommand{\SKGen}{\algo{SKGen}}
\newcommand{\PKGen}{\algo{PKGen}}
\newcommand{\Enc}{\algo{Enc}}
\newcommand{\Dec}{\algo{Dec}}
\newcommand{\Sign}{\algo{Sign}}
\newcommand{\Ver}{\algo{Ver}}
\newcommand\SKE{\algo{SKE}}
\newcommand{\QPKE}{\algo{QPKE}}
\newcommand{\QSKE}{\algo{QSKE}}
\newcommand{\negl}{{\mathsf{negl}}}
\newcommand{\poly}{{\mathrm{poly}}}
\DeclareRobustCommand
\title{One-Wayness in Quantum Cryptography}
\author{\empty}\institute{\empty}
\author{}
\author{
	Tomoyuki Morimae\inst{1} \and Takashi Yamakawa\inst{1,2}
}
\institute{
	Yukawa Institute for Theoretical Physics, Kyoto University, Kyoto, Japan \and NTT Social Informatics Laboratories, Tokyo, Japan
}
\author[1]{Tomoyuki Morimae}
\author[2,3,1]{\hskip 1em Takashi Yamakawa}
\affil[1]{{\small Yukawa Institute for Theoretical Physics, Kyoto University, Kyoto, Japan}\authorcr{\small tomoyuki.morimae@yukawa.kyoto-u.ac.jp}}
\affil[2]{{\small NTT Social Informatics Laboratories, Tokyo, Japan}\authorcr{\small takashi.yamakawa.ga@hco.ntt.co.jp}}
\affil[3]{{\small NTT Research Center for Theoretical Quantum Information, Atsugi, Japan}}
\date{}
\begin{document}

\maketitle

\begin{abstract}
The existence of one-way functions is one of the most fundamental assumptions in classical cryptography. 
In the quantum world, on the other hand, there are evidences that some cryptographic primitives can exist even if one-way functions do not exist 
[Kretschmer, TQC 2021; Morimae and Yamakawa, CRYPTO 2022; Ananth, Qian, and Yuen, CRYPTO 2022]. 
We therefore have the following important open problem in quantum cryptography: What is the most fundamental assumption in quantum cryptography? 
In this direction, [Brakerski, Canetti, and Qian, ITCS 2023] recently defined a notion called EFI pairs, which are pairs of efficiently generatable states that are statistically distinguishable
but computationally indistinguishable, and showed its equivalence with some cryptographic primitives including commitments, oblivious transfer, and general multi-party computations. However, their work focuses on decision-type primitives and does not cover search-type primitives like quantum money and digital signatures.  
In this paper, 
we study properties of one-way state generators (OWSGs), which are a quantum analogue of one-way functions proposed by Morimae and Yamakawa. 
We first revisit the definition of OWSGs and generalize it by allowing mixed output states.
Then we show the following results.
\begin{enumerate}
\item
We define a weaker version of OWSGs, which we call weak OWSGs, and show that they are equivalent to OWSGs.
It is a quantum analogue of the amplification theorem for classical weak one-way functions.
    \item 
    (Bounded-time-secure) quantum digital signatures with quantum public keys are equivalent to
    OWSGs. 
    \item
    Private-key quantum money schemes (with pure money states)
    imply OWSGs. 
    \item
    Quantum pseudo one-time pad schemes imply both OWSGs and EFI pairs. For EFI pairs, single-copy security suffices.
   \item
    We introduce an incomparable variant of OWSGs, which we call secretly-verifiable and statistically-invertible OWSGs,
    and show that
    they are equivalent to EFI pairs. 
\end{enumerate}

\end{abstract}

\ifnum\submission=1
\else
\newpage
\setcounter{tocdepth}{2}
\tableofcontents
\newpage
\fi

\section{Introduction}
One-way functions (OWFs) are functions that are easy to compute but hard to invert.
The existence of OWFs is one of the most fundamental assumptions in classical cryptography.
OWFs are equivalent to many cryptographic primitives,
such as commitments, digital signatures, pseudorandom generators (PRGs),
symmetric-key encryption (SKE), and zero-knowledge, etc. Moreover, almost all other cryptographic primitives,  
such as collision-resistant hashes, 
public-key encryption (PKE),
oblivious transfer (OT), multi-party computations (MPCs), etc., imply OWFs. 
In the quantum world, on the other hand, it seems that OWFs are not necessarily the most fundamental
element. In fact, recently, several quantum cryptographic primitives, such as commitments, (one-time secure) digital signatures,
quantum pseudo one-time pad (QPOTP)\footnote{QPOTP schemes are a one-time-secure SKE with quantum ciphertexts where the key length is shorter than the massage length. 
(For the definition, see \cref{def:OTP}.)}, and MPCs are constructed from pseudorandom states generators (PRSGs)~\cite{C:MorYam22,C:AnaQiaYue22}.
A PRSG~\cite{C:JiLiuSon18}, which is a quantum analogue of a PRG, is a QPT algorithm
that outputs a quantum state whose polynomially-many copies are computationally indistinguishable from the same number of copies of Haar random states.
Kretschmer~\cite{Kre21} showed that PRSGs exist even if $\BQP=\QMA$ (relative to a quantum oracle), which means that PRSGs (and all the above primitives that can be constructed from PRSGs)
could exist even if all quantum-secure (classical) cryptographic primitives including OWFs are broken.\footnote{If $\QMA=\BQP$, then $\NP\subseteq\BQP$. Because all quantum-secure classical cryptographic primitives are in $\NP$,
it means that they are broken by QPT algorithms.}
Kretschmer, Qian, Sinha, and Tal~\cite{STOC23:KreQiaSinTal} also showed that 1-PRSGs (which are variants of PRSGs secure against adversaries that get only a single copy of the state)
exist even if $\NP=\bf{P}$.
We therefore have the following important open problem in quantum cryptography: 
\begin{center}
    {\bf Question 1:} {\it What is the most fundamental assumption in quantum cryptography?}
\end{center}

In classical cryptography,
a pair of PPT algorithms whose output probability distributions are
statistically distinguishable but computationally indistinguishable
is known to be fundamental.
Goldreich~\cite{Gol90} showed the equivalence of such a pair to PRGs, which also means
the equivalence of such a pair to all cryptographic primitives in Minicrypt~\cite{Impagliazzo95}.
It is natural to consider its quantum analogue: a pair of QPT algorithms whose output quantum states
are statistically distinguishable but computationally indistinguishable.
In fact, such a pair was implicitly studied in quantum 
commitments~\cite{AC:Yan22}.  
In the canonical form of quantum commitments~\cite{YWLQ15}, 
computationally hiding and statistically binding quantum commitments are equivalent to such pairs.
The importance of such a pair as an independent quantum cryptograpic primitive
was pointed out in \cite{AC:Yan22,cryptoeprint:2022/1181}. 
In particular, the authors of \cite{cryptoeprint:2022/1181} explicitly defined it as {\it EFI pairs},\footnote{It stands for efficiently samplable, statistically far but 
computationally indistinguishable pairs of distributions.}
and showed that EFI pairs are implied by several quantum cryptographic primitives
such as (semi-honest) quantum OT, (semi-honest) quantum 
MPCs, and (honest-verifier) quantum computational zero-knowledge proofs.
It is therefore natural to ask the following question.
\begin{center}
    {\bf Question 2:} {\it Which other quantum cryptographic primitives imply EFI pairs?}
\end{center}

PRSGs and EFI pairs are ``decision type'' primitives, which correspond to PRGs in classical cryptography.
An example of the other type of primitives, namely, ``search type'' one in  classical cryptography, is OWFs.
Recently, a quantum analogue of OWFs, so called one-way states generators (OWSGs), are introduced~\cite{C:MorYam22}.
A OWSG is a QPT algorithm that, on input a classical bit string (key) $k$, outputs a quantum state $\ket{\phi_k}$. 
As the security, we require that it is hard to find $k'$ such that $|\langle\phi_k|\phi_{k'}\rangle|^2$ is non-negligible given polynomially many copies of $\ket{\phi_k}$. 
The authors showed that OWSGs are implied by PRSGs, and that OWSGs imply (one-time secure) quantum digital signatures with quantum public keys.
In classical cryptography, OWFs are connected to many cryptographic primitives. We are therefore interested in the following question.
\begin{center}
    {\bf Question 3:} {\it Which quantum cryptographic primitives are related to OWSGs?}
\end{center}

In classical cryptography, PRGs (i.e., a decision-type primitive) and OWFs (i.e., a search-type primitive) are
equivalent. In quantum cryptography, on the other hand, we do not know whether
OWSGs and EFI pairs (or PRSGs) are equivalent or not.
We therefore have the following open problem.
\begin{center}
    {\bf Question 4:} {\it Are OWSGs and EFI pairs (or PRSGs) equivalent?}
\end{center}

\subsection{Our Results}
The study of quantum cryptography with complexity assumptions has became active only very recently, and therefore we 
do not yet have enough knowledge to answer {\bf Question 1}.
However, as an important initial step towards the ultimate goal,
we give some answers to other questions above. 
Our results are summarized as follows.
(See also Fig.~\ref{OWSGfig}.)

\begin{enumerate}
    \item
We first revisit the definition of OWSGs. In the original definition in~\cite{C:MorYam22}, output states of OWSGs are assumed to be pure states. 
Moreover, the verification is done as follows: a bit string $k'$ from the adversary is accepted if and only if
the state $|\phi_k\rangle\langle\phi_k|$ is measured in the basis
$\{|\phi_{k'}\rangle\langle\phi_{k'}|,I-|\phi_{k'}\rangle\langle\phi_{k'}|\}$,
and the first result is obtained. (Note that in classical OWFs, the verification is implicit because it is trivial: just computing $f(x')$ for $x'$ given by the adversary, and check
whether it is equal to $f(x)$ or not. However, in the quantum case, we have to explicitly define the verification.)
In this paper, to capture more general settings, we generalize the definition of OWSGs by allowing outputs to be mixed states. 
A non-trivial issue that arises from this modification is that there is no canonical way to verify input-output pairs of OWSGs. 
To deal with this issue, we include such a verification algorithm as a part of syntax of OWSGs. See \cref{def:OWSG} for the formal definition.   

\item
We show an ``amplification theorem'' for OWSGs. That is, we define weak OWSGs (wOWSGs), which only requires the adversary's advantage to be $1-1/\poly(\secp)$ instead of $\negl(\secp)$, and show that 
a parallel repetition of wOWSGs gives OWSGs~(\cref{sec:amplification}). This is an analogue of the equivalence of weak one-way functions and (strong) one-way functions in classical 
cryptography~\cite{FOCS:Yao82a}.  

\item 
    We show that one-time-secure quantum digital signatures (QDSs) with quantum public keys 
    are equivalent to OWSGs (\cref{sec:QDS_OWSG}).\footnote{A construction of QDSs from OWSGs was already shown in \cite{C:MorYam22}, but
    in this paper, we generalize the definition of OWSGs, and we give the proof in the new definition.} 
   Moreover, we can generically upgrade one-time-secure QDSs into bounded-time-secure one (\cref{sec:QDS_q}).\footnote{We thank Or Sattath for asking if we can get (stateless) bounded-time QDSs.}
   
    \item
    We show that private-key quantum money schemes (with pure money states or with verification algorithms that satisfy some symmetry) imply OWSGs (\cref{sec:OWSGfromQmoney_pure} and \cref{sec:OWSGfromQmoney_symmetric}).
    
   \item
    We show that QPOTP schemes imply OWSGs (\cref{sec:OWSGfromQPOTP}). 
    \ifnum\submission=1
    \else
    This in particular means that IND-CPA secure quantum SKE or quantum PKE implies OWSGs (\cref{sec:QSKE}). 
    \fi
   
\item 
    We show that
    single-copy-secure QPOTP schemes imply EFI pairs (\cref{sec:EFIfromQPOTP}).
    Single-copy-security means that the adversary receives only a single copy of the quantum ciphertext. 
    \ifnum\submission=1
    \else
    This in particular means that IND-CPA secure quantum SKE or quantum PKE implies EFI pairs (\cref{sec:QSKE}). 
    \fi
    
 \item
   We introduce an incomparable variant of OWSGs, which we call secretly-verifiable and statistically-invertible OWSGs (SV-SI-OWSGs) (\cref{sec:SVOWSG_def}),  
    and show that SV-SI-OWSGs are equivalent to EFI pairs (\cref{sec:SVOWSG_EFI}).
   
\end{enumerate}

\if0    
    \mor{Can we show that $\QMA\neq\BQP$ is necessary for the existence of OWSGs?}
    \mor{How the security changes from statistical to computational when $m$ and $t$ are changed?
    In particular,
    single-copy OWSGs. May be the results on sample lowerbound of shadow tomography can be used.}
    \mor{Classical channel? In that case, what is the complexity assumption for the security?}
\fi

We remark that we consider the generalized definition of OWSGs with mixed state outputs by default. However, all the relationships between OWSGs and other primitives naturally extend to the pure state version if we consider the corresponding pure state variants of the primitives. 

\begin{figure}[htbp]
\begin{center}
\includegraphics[width=0.4\textwidth]{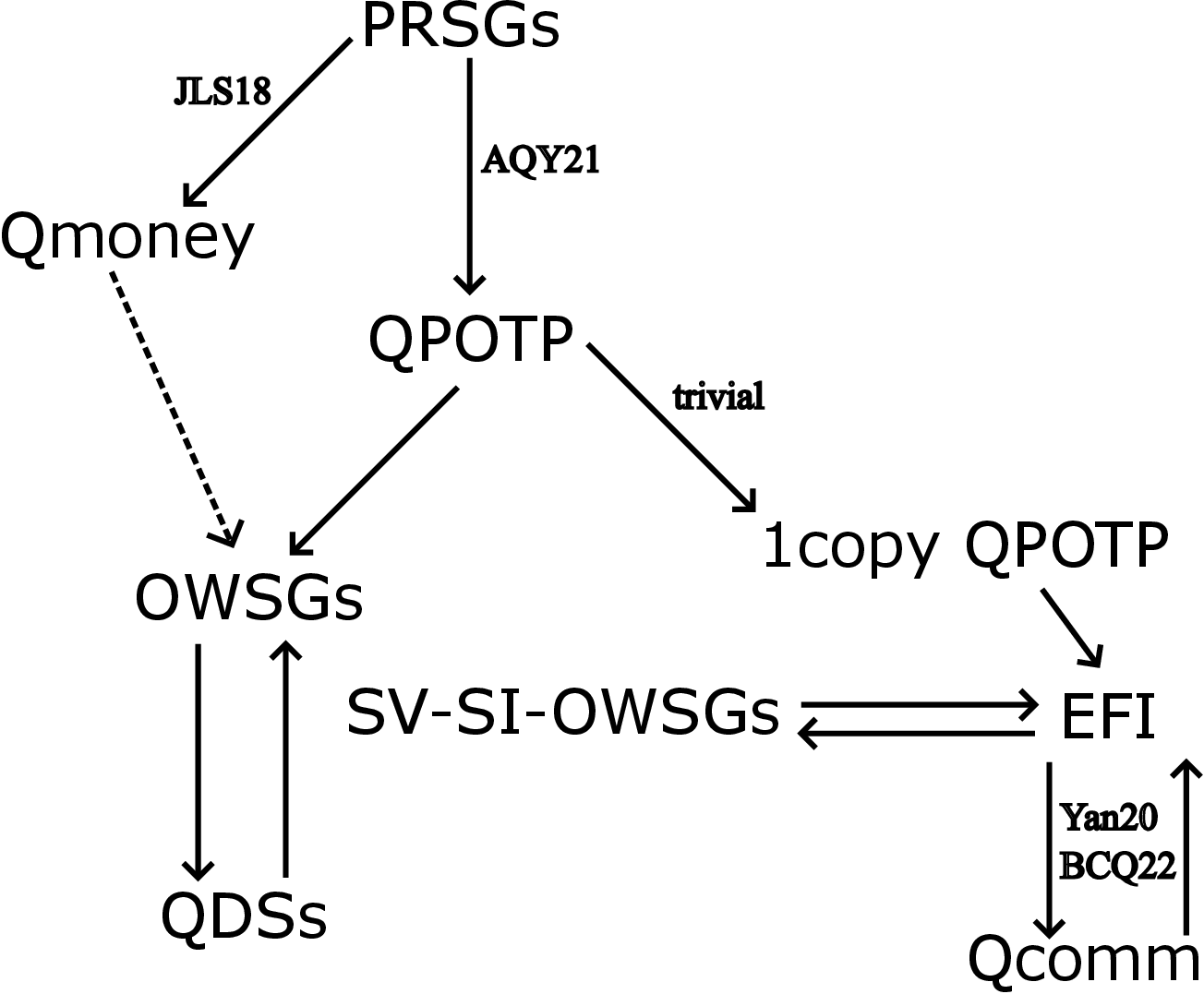}
\end{center}
\caption{
Summary of results.
The dotted line means some restrictions:
OWSGs are implied by quantum money schemes with {\it pure} money states or with {\it symmetric} verification algorithms.
}
\label{OWSGfig}
\end{figure}

\if0
\subsection{Open Problems}
\mor{kakikake}
\begin{enumerate}
\item
PRSGs imply OWSGs~\cite{cryptoeprint:2021/1691}. Do OWSGs imply PRSGs?
    \item 
    Do OWSGs imply EFI pairs? Or, do EFI pairs imply OWSGs?
    \item
    We show in this paper that quantum money implies OWSGs. Do OWSGs imply quantum money?
    \item
    We show that in this paper that QPOTP implies OWSGs. Do OWSGs imply QPOTP?
    OPOTP guarantees the security against adversaries who can have polynomially many copies of quantum ciphertexts.
    If we consider a restricted QPOTP where the security is guaranteed against adversaries who can have only
    a single copy of quantum ciphertext, can we show that such a restricted QPOTP implies OWSGs?
\end{enumerate}
\fi

\subsection{Concurrent Work}
A concurrent work by Cao and Xue~\cite{cryptoeprint:2022/1323} also studies OWSGs. 
In particular, they also show the equivalence between weak OWSGs and OWSGs. Though they consider OWSGs with pure state outputs as defined in~\cite{C:MorYam22}, it is likely that their proof extends to the mixed state version as well. 
Interestingly, the proof strategies are different in our and their works. 
Their proof is based on the classical proof of the equivalence between the weak one-way functions and one-way functions~\cite{FOCS:Yao82a,DBLP:books/cu/Goldreich2001}.  On the other hand, our proof is based on the amplification theorem for weakly verifiable puzzles~\cite{TCC:CanHalSte05}. Though their proof is simpler, an advantage of our approach is that it captures more general settings. For example, our proof also works for secretly-verifiable OWSGs~(\cref{def:SV-OWSG}). Their approach seems to rely on public verifiability in an essential manner and not applicable to secretly-verifiable OWSGs. 

Besides the equivalence between weak OWSGs and OWSGs, there is no other overlap between our and their results.

\subsection{Open Problems}
The study of ``quantum cryptography without one-way functions'' has just started, and our understanding is
very limited.
There are many open problems in this emerging field, but
we believe the following open problems are important, and some of them seem to be highly challenging.
\begin{enumerate}
    \item 
    What is the most fundamental assumption in quantum cryptography? It should be implied by many primitives, and should
    imply many primitives. 
    It should also be simple.
    Or, there is no such thing in quantum cryptography?
    \item 
    OWSGs=EFI?
    Or at least can we show OWSGs$\to$EFI\footnote{OWSGs$\to$EFI has recently been shown in \cite{cryptoeprint:2023/1620} for pure-output OWSGs.} or EFI$\to$OWSGs?
    If they are incomparable, is there any more fundamental primitive that is implied by both OWSGs and EFI pairs?
    \item
    Do OWSGs imply private-key quantum money schemes?
    An adversary who can clone a quantum money state $\psi_k$ would not necessarily be able to find the secret key $k$.
    \item 
    Do EFI pairs imply single-copy-secure PRSGs?
    \item 
    Can we construct unbounded-poly many-time secure digital signatures without one-way functions?
 \item 
    Which other primitives can be constructed without one-way functions?
    For example, how about PKE, NIZK, or proofs of quantumness?
    \item 
    ${\bf PP}\neq{\bf BQP}$ is necessary for the existence of PRSGs \cite{Kre21}. 
    What are the classical complexity assumptions necessary for the existences of other primitives, such as
    OWSGs, private-key quantum money schemes, and EFI pairs?\footnote{Recently, there have been some progresses regarding this open problem. First, it was shown that ${\bf PP}\neq{\bf BQP}$ is necessary for the existence
    of pure OWSGs~\cite{CGGHLP23}. Second, an evidence is given that single-copy PRSGs (and hence EFI pairs) could exist even if ${\bf P}={\bf ALL}$~\cite{cryptoeprint:2023/1602}.
    }
\end{enumerate}

\section{Preliminaries}

\subsection{Basic Notations}
\label{sec:basic_notations}

We use the standard notations of quantum computing and cryptography.
We use $\secp$ as the security parameter.
$[n]$ means the set $\{1,2,...,n\}$.
For any set $S$, $x\gets S$ means that an element $x$ is sampled uniformly at random from the set $S$.
$\negl$ is a negligible function, and $\poly$ is a polynomial.
PPT stands for (classical) probabilistic polynomial-time and QPT stands for quantum polynomial-time.
If we say that an adversary is QPT, it implicitly means non-uniform QPT.
A QPT unitary is a unitary operator that can be implemented in a QPT quantum circuit.

For an algorithm $A$, $y\gets A(x)$ means that the algorithm $A$ outputs $y$ on input $x$.
In particular, if $x$ and $y$ are quantum states and $A$ is a quantum algorithm, $y\gets A(x)$ means the following:
a unitary $U$ is applied on $x\otimes|0...0\rangle\langle0...0|$, and some qubits are traced out.
Then, the state of remaining qubits is $y$.
This, importantly, means that the state $y$ is {\it uniquely decided} by the state $x$.
If $A$ is a QPT algorithm, the unitary $U$ is QPT and the number of ancilla qubits 
$|0...0\rangle$ is $\poly(\secp)$.
If $x$ is a classical bit string, $y$ is a quantum state, and $A$ is a quantum algorithm,
$y\gets A(x)$ sometimes means the following:
a unitary $U_x$ that depends on $x$ is applied on $|0...0\rangle$, and some qubits are traced out.
The state of the remaining qubits is $y$.
This picture is the same as the most general one where $x$ is given as input,
but we sometime choose this picture if it is more convenient.

$\|X\|_1\coloneqq\mbox{Tr}\sqrt{X^\dagger X}$ is the trace norm.
$\mbox{Tr}_\regA(\rho_{\regA,\regB})$ means that the subsystem (register) $\regA$ of the state $\rho_{\regA,\regB}$ on
two subsystems (registers) $\regA$ and $\regB$ is traced out.
For simplicity, we sometimes write $\mbox{Tr}_{\regA,\regB}(|\psi\rangle_{\regA,\regB})$ to mean
$\mbox{Tr}_{\regA,\regB}(|\psi\rangle\langle\psi|_{\regA,\regB})$.
$I$ is the two-dimensional identity operator. For simplicity, we sometimes write $I^{\otimes n}$ as $I$ 
if the dimension is clear from the context.
For the notational simplicity, we sometimes write $|0...0\rangle$ just as $|0\rangle$,
when the number of zeros is clear from the context.
For two pure states $|\psi\rangle$ and $|\phi\rangle$,
we sometimes write $\||\psi\rangle\langle\psi|-|\phi\rangle\langle\phi|\|_1$
as
$\||\psi\rangle-|\phi\rangle\|_1$
to simplify the notation.
$F(\rho,\sigma)\coloneqq\|\sqrt{\rho}\sqrt{\sigma}\|_1^2$
is the fidelity between $\rho$ and $\sigma$.
We often use the well-known relation between the trace distance and the fidelity:
$1-\sqrt{F(\rho,\sigma)}\le\frac{1}{2}\|\rho-\sigma\|_1\le\sqrt{1-F(\rho,\sigma)}$.

\subsection{EFI Pairs}
The concept of EFI pairs was implicitly studied in \cite{AC:Yan22}, and explicitly defined in \cite{cryptoeprint:2022/1181}.
\begin{definition}[EFI pairs~\cite{cryptoeprint:2022/1181}]
\label{def:EFI}
An EFI pair is an algorithm $\StateGen(b,1^\secp)\to\rho_b$ that, on input $b\in\bit$ and the security parameter $\lambda$,
outputs a quantum state $\rho_b$ such that all of the following three conditions are satisfied.
\begin{itemize}
\item
It is a uniform QPT algorithm.
\item
$\rho_0$ and $\rho_1$ are computationally indistinguishable. In other words,
for any QPT adversary $\cA$,
$
|\Pr[1\gets\cA(1^\secp,\rho_0)]
-\Pr[1\gets\cA(1^\secp,\rho_1)]|\le\negl(\secp).
$
\item
$\rho_0$ and $\rho_1$ are statistically distinguishable,
i.e.,
$
\frac{1}{2}\|\rho_0-\rho_1\|_1\ge\frac{1}{\poly(\secp)}.
$
\end{itemize}
\end{definition}

\begin{remark}
\label{remark:EFI}
Note that in the above definition, the statistical distinguishability is defined
with only $\ge1/\poly(\secp)$ advantage. 
However, if EFI pairs with the above definition exist, EFI pairs with $\ge1-\negl(\secp)$
statistical distinguishability exist as well.
In fact, we have only to define a new $\StateGen'$ that runs $\StateGen$ $n$ times with sufficiently large $n=\poly(\secp)$, and outputs
$\rho_b^{\otimes n}$.
The $\ge1-\negl(\secp)$ statistical distinguishability for $\StateGen'$
is shown from the inequality~\cite{cryptoeprint:2022/1181},
\begin{eqnarray*}
\frac{1}{2}\|\rho^{\otimes n}-\sigma^{\otimes n}\|_1
\ge1-\exp(-n\|\rho-\sigma\|_1/4).
\end{eqnarray*}
The computational indistinguishability for $\StateGen'$ is shown by the standard hybrid argument.
\end{remark}

\subsection{Quantum Commitments}
We define canonical quantum bit commitments~\cite{AC:Yan22} as follows.  

\begin{definition}[Canonical quantum bit commitments \cite{AC:Yan22}]\label{def:canonical_com}
A canonical quantum bit commitment scheme is a family $\{Q_0(\secp),Q_1(\secp)\}_{\secp\in \mathbb{N}}$ of QPT unitaries 
on two registers $\regC$ (called the \emph{commitment} register) and $\regR$ (called the \emph{reveal} register).  
For simplicity, we often omit $\secp$ and simply write $\{Q_0,Q_1\}$ to mean $\{Q_0(\secp),Q_1(\secp)\}_{\secp\in\mathbb{N}}$. 
\end{definition}
\begin{remark}
Canonical quantum bit commitments are used as follows. In the commit phase, to commit to a bit $b\in \bit$, 
the sender generates a state $Q_b\ket{0}_{\regC,\regR}$ and sends $\regC$ to the receiver while keeping $\regR$. In the reveal phase, the sender sends $b$ and $\regR$ to the receiver. 
The receiver projects the state on $(\regC,\regR)$ onto $Q_b\ket{0}_{\regC,\regR}$, and accepts if it succeeds and otherwise rejects.  
(In other words, the receiver applies the unitary $Q_b^\dagger$ on the registers $\regC$ and $\regR$, and measure all qubits in the computational basis.
If all result are zero, accept. Otherwise, reject.)
\end{remark}


\begin{definition}[Hiding]
We say that a canonical quantum bit commitment scheme $\{Q_0,Q_1\}$ is computationally (rep. statistically) \emph{hiding} if $\Tr_{\regR}(Q_0\ket{0}_{\regC,\regR})$ is computationally (resp. statistically) indistinguishable from $\Tr_{\regR}(Q_1\ket{0}_{\regC,\regR})$. 
We say that it is perfectly hiding if they are identical states.  
\end{definition}

\begin{definition}[Binding]
We say that a canonical quantum bit commitment scheme $\{Q_0,Q_1\}$ is computationally (rep. statistically) \emph{binding} if for any QPT   
(resp. unbounded-time) unitary $U$ over $\regR$  
and an additional register $\regZ$ and any polynomial-size state $\ket{\tau}_{\regZ}$, 
it holds that 
\begin{align}
    \left\|(\bra{0}Q_1^\dagger)_{\regC,\regR}(I_{\regC}\otimes U_{\regR,\regZ})((Q_0\ket{0})_{\regC,\regR}\ket{\tau}_{\regZ})\right\|=\negl(\secp).
    \label{binding_asymmetric}
\end{align}
We say that it is perfectly hiding if the LHS is $0$ for all unbounded-time unitary $U$. 
\footnote{
The above definition is asymmetric for 0 and 1, but
it is easy to show that \cref{binding_asymmetric} implies
\begin{align*}
    \left\|(\bra{0}Q_0^\dagger)_{\regC,\regR}(I_{\regC}\otimes U_{\regR,\regZ})((Q_1\ket{0})_{\regC,\regR}\ket{\tau}_{\regZ})\right\|=\negl(\secp)
\end{align*}
for any $U$ and $\ket{\tau}$.
}
\end{definition}

\begin{remark}
One may think that honest-binding defined above is too weak because it only considers honestly generated commitments. 
However, somewhat surprisingly, \cite{AC:Yan22} proved that it is equivalent to another binding notion called the \emph{sum-binding}~\cite{EC:DumMaySal00}.\footnote{The term ``sum-binding'' is taken from \cite{EC:Unruh16}.} The sum-binding property requires that the sum of probabilities that any (quantum polynomial-time, in the case of computational binding) \emph{malicious} sender can open a commitment to $0$ and $1$ is at most $1+\negl(\secp)$. In addition, it has been shown that the honest-binding property is sufficient for cryptographic applications including zero-knowledge proofs/arguments (of knowledge), 
oblivious transfers, and multi-party computation~\cite{YWLQ15,FUYZ20,C:MorYam22,AC:Yan21}. In this paper, we refer to honest-binding if we simply write binding.  
\end{remark}

In this paper, we use the following result.
\begin{theorem}[Converting flavors \cite{AC:Yan22,EC:HhaMorYam23}]
\label{thm:convertingflavors}
Let $\{Q_0,Q_1\}$ be a canonical quantum bit commitment scheme.
Then there exists a canonical quantum bit commitment scheme $\{Q_0',Q_1'\}$, 
and the following hold for 
$\mathtt{X},\mathtt{Y}\in\{\text{computationally,statistically,perfectly}\}$:
\begin{itemize}
\item
If $\{Q_0,Q_1\}$ is $\mathtt{X}$ hiding, then $\{Q_0',Q_1'\}$
is $\mathtt{X}$ binding.
\item
If $\{Q_0,Q_1\}$ is $\mathtt{Y}$ binding, then $\{Q_0',Q_1'\}$
is $\mathtt{Y}$ hiding.
\end{itemize}
\end{theorem}

\ifnum\submission=1
\else
\subsection{PRSGs}
Although in this paper we do not use PRSGs, we provide its definition
in \cref{sec:PRSs} for the convenience of readers.
\fi

\newcommand{\che}{k}
\newcommand{\puz}{\mathsf{puz}}
\newcommand{\ans}{\mathsf{ans}}
\newcommand{\prefix}{\mathsf{prefix}}
\newcommand{\estimate}{\mathsf{Estimate}}
\newcommand{\extend}{\mathsf{Extend}}
\newcommand{\CheckGen}{\mathsf{CheckGen}}
\newcommand{\PuzzleGen}{\mathsf{PuzzleGen}}

\section{OWSGs}

In this section, we first define OWSGs (\cref{sec:OWSG_def}).
We then define weak OWSGs and show that weak OWSGs are equivalent to OWSGs (\cref{sec:amplification}).

\subsection{Definition of OWSGs}
\label{sec:OWSG_def}

In this subsection, we define OWSGs.
Note that the definition below is 
a generalization of the one given in \cite{C:MorYam22} in the following three points.
First, in \cite{C:MorYam22}, the generated states are pure, but here they can be mixed. 
Second, in \cite{C:MorYam22}, the secret key $k$ is uniformly sampled at random, 
but now it is sampled by a QPT algorithm.
Third, in \cite{C:MorYam22}, the verification algorithm
is the specific algorithm that accepts the alleged key $k'$ with probability $|\langle\phi_k|\phi_{k'}\rangle|^2$,
while here we consider a general verification algorithm.
We think the definition below is more general (and therefore more fundamental) than that in \cite{C:MorYam22}.
Hence hereafter we choose the definition below as the definition of OWSGs.

\begin{definition}[One-way states generators (OWSGs)]
\label{def:OWSG}
A one-way states generator (OWSG) is a set of algorithms $(\KeyGen,\StateGen,\Ver)$ such that
\begin{itemize}
\item
$\KeyGen(1^\secp)\to k:$
It is a QPT algorithm that, on input the security parameter $\secp$, outputs a classical key $k\in\bit^\kappa$.
    \item 
    $\StateGen(k)\to \phi_k:$ It is a QPT algorithm that, on input $k$, outputs 
    an $m$-qubit quantum state $\phi_k$. 
    \item
    $\Ver(k',\phi_k)\to\top/\bot:$ It is a QPT algorithm that, on input $\phi_k$ and a bit string $k'$, outputs $\top$ or $\bot$. 
\end{itemize}

We require the following correctness and security.

\paragraph{\bf Correctness:}
\begin{eqnarray*}
\Pr[\top\gets\Ver(k,\phi_k):k\gets\KeyGen(1^\secp),\phi_k\gets\StateGen(k)]\ge1-\negl(\secp).
\end{eqnarray*}

\paragraph{\bf Security:}
For any QPT adversary $\cA$ and any polynomial $t$\footnote{$\StateGen$ is actually run $t$ times to generate $t$ copies of $\phi_k$, but for simplicity, we just write
$\phi_k\gets\StateGen(k)$ only once. This simplification will often be used in this paper.},
\begin{eqnarray*}
\Pr[\top\gets\Ver(k',\phi_k):k\gets\KeyGen(1^\secp),\phi_k\gets\StateGen(k),k'\gets\cA(1^\secp,\phi_k^{\otimes t})]\le\negl(\secp).
\end{eqnarray*}
\end{definition}

\ifnum\submission=1
\else
\begin{remark}
In \cref{sec:Verpurephi}, we show that if
all $\phi_k$ are pure and
$\Pr[\top\gets\Ver(k,\phi_k)]\ge1-\negl(\secp)$ for all $k$,
restricting $\Ver$ to the following specific algorithm (used in \cite{C:MorYam22}) does not lose the generality:
On input $k'$ and $\phi_k$, measure $\phi_k$ with the basis
$\{|\phi_{k'}\rangle\langle\phi_{k'}|,I-|\phi_{k'}\rangle\langle\phi_{k'}|\}$.
If the first result is obtained output $\top$. Otherwise, output $\bot$.
\end{remark}
\fi

\begin{remark}
If $\phi_k$ is pure, $\StateGen$ runs as follows.
Apply a QPT unitary $U$ on $|k\rangle|0...0\rangle$ to generate
$|\phi_k\rangle\otimes|\eta_k\rangle$, and output $|\phi_k\rangle$.
In this case, the existence of the ``junk state'' $|\eta_k\rangle$ is essential, because otherwise
it is not secure against a QPT adversary who does the application of $U^\dagger$ and the computational-basis measurement.
\end{remark}

\ifnum\submission=1
\else
\begin{remark}
OWSGs are constructed from PRSGs~\cite{C:MorYam22}.
In \cite{C:MorYam22}, OWSGs are constructed from PRSGs with $m\ge c\kappa$ for $c>1$. (Here, $\kappa$ is the key-length (i.e., the input length of $\StateGen$), and
$m$ is the output length of $\StateGen$ (i.e., the number of qubits of $\phi_k$).)
It can be improved to the construction of OWSGs from PRSGs with $m\ge\log \kappa$.\footnote{We thank Luowen Qian for pointing out the fact to us.}
(For a proof, see \cref{sec:OWSGfromPRSG_improved}.) 
\end{remark}
\fi

\begin{remark}
Note that statistically-secure OWSGs do not exist. In other words, there exists an unbounded algorithm $\cA$ that can break the security of OWSGs as follows:
\begin{enumerate}
    \item 
    Given $\phi_k^{\otimes t}$ with a certain polynomial $t$ as input, run the shadow tomography algorithm~\cite{Shadow2} to find $k'$ such that
    $\Pr[\Ver(k',\phi_k)\to\top]\ge1-\frac{1}{\poly(\secp)}$. If there exists such $k'$, such $k'$ can be found with only a certain polynomial $t$.
    If there is no such $k'$, choose $k'$ uniformly at ramdom.
    \item 
    Output $k'$.
\end{enumerate}
\end{remark}

\subsection{Hardness Amplification for OWSGs}
\label{sec:amplification}
In this subsection,
we define a weaker variant called weak one-way states generators (wOWSGs), and show that they are equivalent to OWSGs.

wOWSGs
are defined as follows. 
\begin{definition}[Weak one-way states generators (wOWSGs)]
\label{def:wOWSG}
A weak one-way states generator (wOWSG) is a tuple of algorithms $(\KeyGen,\StateGen,\Ver)$ defined similarly to OWSGs except that  the security is replaced with the following weak security. 

\paragraph{\bf Weak Security:}
There exists a polynomial $p$ such that for any QPT adversary $\cA$ and any polynomial $t$, 
\begin{eqnarray*}
\Pr[\top\gets\Ver(k',\phi_k):k\gets\KeyGen(1^\secp),\phi_k\gets\StateGen(k),k'\gets\cA(1^\secp,\phi_k^{\otimes t})]\le 1-\frac{1}{p}.
\end{eqnarray*}
\end{definition}

We prove that the existence of wOWSGs imply the existence of OWSGs. This is an analogue of Yao's amplification theorem for OWFs in the classical setting~\cite{FOCS:Yao82a,DBLP:books/cu/Goldreich2001}.
\begin{theorem}\label{thm:amplification_OWSG}
OWSGs exist if and only if wOWSGs exist.
\end{theorem}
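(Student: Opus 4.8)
The plan is to prove the two directions separately; the easy one is that OWSGs are wOWSGs, since $\negl(\secp)\le 1-1/p$ for every polynomial $p$ and all large $\secp$, while the syntax and correctness are identical. So assume a wOWSG $(\KeyGen,\StateGen,\Ver)$ whose weak security holds with polynomial $p$, and define a candidate OWSG $(\KeyGen',\StateGen',\Ver')$ by $n$-fold parallel repetition for a polynomial $n=n(\secp)$ to be fixed below: $\KeyGen'(1^\secp)$ runs $\KeyGen(1^\secp)$ independently $n$ times, outputting $k=(k_1,\dots,k_n)$; $\StateGen'(k)$ outputs $\phi'_k\seteq\phi_{k_1}\otimes\cdots\otimes\phi_{k_n}$ with $\phi_{k_i}\gets\StateGen(k_i)$; and $\Ver'(k',\phi'_k)$ on $k'=(k'_1,\dots,k'_n)$ runs $\Ver(k'_i,\phi_{k_i})$ on the $i$-th register and accepts iff all $n$ runs accept. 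Correctness follows from that of the wOWSG by a union bound over the $n=\poly(\secp)$ coordinates.

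For security, fix any polynomial $t$ (the number of copies requested by the OWSG adversary) and view the construction through the lens of weakly verifiable puzzles~\cite{TCC:CanHalSte05}: a single puzzle instance is generated by sampling $k\gets\KeyGen(1^\secp)$, handing the solver $\phi_k^{\otimes t}$ (obtained by running $\StateGen(k)$ $t$ times), and retaining $k$ as secret verification information; a claimed solution $k'$ is declared correct if $\Ver(k',\phi_k)$ accepts for a freshly generated $\phi_k\gets\StateGen(k)$. Weak security of the wOWSG says that no QPT algorithm solves a single instance with probability exceeding $1-1/p$, and --- crucially, since weak security is stated for every polynomial number of copies --- the same holds even when the solver is given $\phi_k^{\otimes T}$ for any polynomial $T$. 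Now an OWSG adversary against $(\KeyGen',\StateGen',\Ver')$ requesting $t$ copies receives $(\phi'_k)^{\otimes t}$, which after regrouping registers is $\phi_{k_1}^{\otimes t}\otimes\cdots\otimes\phi_{k_n}^{\otimes t}$ --- that is, $n$ independent puzzle instances, each with $t$ copies --- and $\Ver'$ accepts exactly when all $n$ are solved, so this adversary is precisely a solver for the $n$-fold parallel repetition of the puzzle. By the hardness amplification theorem for weakly verifiable puzzles, a QPT solver of the $n$-fold repetition succeeding with non-negligible probability $\eps$ can be converted, with polynomial overhead and consuming a polynomial number $T$ of copies of the target instance's state, into a QPT algorithm that solves a single instance with probability at least $1-O(\ln(1/\eps)/n)-\negl(\secp)$; since $\eps$ is non-negligible this exceeds $1-1/p$ once $n$ is a sufficiently large polynomial (e.g.\ $n=\secp\cdot p$), contradicting weak security. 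As $n$ does not depend on $t$, the conclusion holds simultaneously for all polynomials $t$, exactly as \cref{def:OWSG} requires.

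The main obstacle is that the amplification theorem of~\cite{TCC:CanHalSte05} is stated and proved classically, whereas here the puzzle is a quantum state, the solver is quantum, and verification is a quantum measurement. One therefore has to check that its reduction goes through against quantum solvers: it does, because the reduction is a rewinding-free black-box procedure that repeatedly runs the $n$-fold solver on hybrid inputs --- the target instance planted in a random coordinate, fresh self-generated instances (together with their own secret verification keys) in the other coordinates --- estimates per-coordinate success with additional fresh samples, and returns the answer of the planted coordinate; no step inspects or clones the solver's internal state, and each invocation uses its own fresh copies, so the only effect of quantumness is that the extracted single-instance solver needs $T=\poly(\secp)$ copies of $\phi_k$ rather than one, which is harmless precisely because wOWSG weak security holds for every polynomial copy count. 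A minor point is that the puzzle's verification predicate is randomized (it regenerates $\phi_k$), which the framework tolerates since its statement refers only to success probabilities. This same construction and argument apply verbatim to the secretly-verifiable variant (\cref{def:SV-OWSG}) --- which is the reason for routing through weakly verifiable puzzles rather than through the classical one-way function amplification proof.
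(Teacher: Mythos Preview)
Your proposal is correct and follows essentially the same approach as the paper: $n$-fold parallel repetition combined with the Canetti--Halevi--Steiner amplification for weakly verifiable puzzles, extended to quantum puzzles by allowing the single-instance solver to consume polynomially many copies of the target state. The paper makes one point more explicit than you do: the CHS reduction fixes a ``good prefix'' and reuses it across many runs, which is problematic when the prefix consists of unclonable quantum puzzles; the fix (which your ``fresh self-generated instances'' glosses over) is to store the prefix as the classical seeds $k_1,\dots,k_{v-1}$ and regenerate the corresponding quantum states on each invocation---this is why the paper splits puzzle generation into a classical $\CheckGen$ step and a quantum $\PuzzleGen$ step.
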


We observe that the classical proof for OWFs does not extend to OWSGs at least in a straightforward manner. The reason is that the classical proof uses the (obvious) fact that one can deterministically check if a given pair $(x,y)$ satisfies $f(x)=y$ for a OWF $f$ but we do not have such a deterministic verification algorithm for OWSGs.\footnote{We remark that the concurrent work by Cao and Xue \cite{cryptoeprint:2022/1323} avoids this issue by appropriately modifying the proof strategy.} 
On the other hand, we observe that the proof of a more general hardness amplification theorem for \emph{weakly verifiable puzzles} shown by Canetti, Halevi, and Steiner \cite{TCC:CanHalSte05} extends  to the quantum setting with minor tweak. Thus we choose to show the quantum analogue of \cite{TCC:CanHalSte05}, which is more general than \cite{FOCS:Yao82a}. 
We note that Radian and Sattath~\cite{CoRR:RadSat19} observed that the proof of \cite{TCC:CanHalSte05} extends to the post-quantum setting where the adversary is quantum with essentially the same proof, but what we show is stronger than that since we consider quantum puzzles and answers.  

\begin{remark}
We can also consider an even weaker variant of OWSGs where the correctness bound can be much smaller than $1$ and we only require there is an inverse polynomial gap between the correctness and security bounds. Unfortunately, we do not know how to prove the equivalence between such a further weaker variant of weak OWSGs and standard OWSGs. To prove this, we will need a quantum analogue of the results of \cite{JC:ImpJaiKab09,TCC:HolSch11}, which are generalization of \cite{TCC:CanHalSte05}.
\end{remark}

First, we define a quantum analogue of weakly verifiable puzzles. 
\begin{definition}[Weakly verifiable quantum puzzles]
A weakly verifiable quantum puzzle is a tuple of algorithms $(\CheckGen,\PuzzleGen,\Ver)$ as follows.
\begin{itemize}
    \item 
    $\CheckGen(1^\secp)\to \che:$
    It is a QPT algorithm that, on input the security parameter $\secp$, outputs a classical string $\che$, which we call a ``check key''. 
    \item
    $\PuzzleGen(\che)\to\puz:$
    It is a QPT algorithm that, on input $\che$, outputs a quantum state $\puz$, which we call a ``puzzle''. 
    \item 
    $\Ver(\ans,\che)\to\top/\bot:$
    It is a QPT algorithm that, on input $k$ and a quantum state $\ans$, which we call an ``answer'', outputs $\top$ or $\bot$.
\end{itemize}
\end{definition}

\begin{remark}\label{rem:divide_gen}
Besides that $\puz$ and $\ans$ are quantum, another difference from the classical version  \cite{TCC:CanHalSte05} is that the generation algorithm is divided into $\CheckGen$ that generates $\che$ and $\PuzzleGen$ that generates $\puz$. We define it in this way because we want to consider poly-copy hardness, i.e., the hardness to find a valid answer even given polynomially many copies of the puzzle. If we use a single generation algorithm $\Gen$ that generates $\che$ and $\puz$ simultaneously like \cite{TCC:CanHalSte05}, it is unclear how to define such a poly-copy hardness. 
\end{remark}

\begin{remark}\label{rem:OWSG_as_puzzle}
Later, we will see that OWSGs can be seen as weakly verifiable quantum puzzles:
$\CheckGen$ corresponds to $\KeyGen$ of OWSGs,
$\PuzzleGen$ corresponds to $\StateGen$ of OWSGs,
and $\Ver$ of weakly verifiable quantum puzzles corresponds to $\Ver$ of OWSGs.
Moreover, $\ans$ corresponds to $k'$ of OWSGs, which is the output of the adversary, i.e., $k'\gets\cA(\phi_k^{\otimes t})$.
Because $k'$ is a classical bit string, it is enough for our purpose to consider only classical $\ans$, but here we assume $\ans$ is quantum
to give a more general result.
$\Ver$ of weakly verifiable quantum puzzles takes $k$, while $\Ver$ of OWSGs takes $\phi_k$.
This discrepancy can be easily solved by considering that $\Ver$ of weakly verifiable quantum puzzles first generates 
$\phi_k$ from $k$ and then runs $\Ver$ of OWSGs.
\end{remark}

\begin{definition}
We say that an adversary $\cA$ $(t,\epsilon)$-solves a weakly verifiable quantum puzzle $(\CheckGen,\PuzzleGen,\Ver)$ if 
\begin{eqnarray*}
\Pr[\top\gets\Ver(\ans,\che):\che\gets\CheckGen(1^\secp),\puz\gets\PuzzleGen(\che),\ans\gets\cA(\puz^{\otimes t})]\ge \epsilon.
\end{eqnarray*}
\end{definition}

\begin{definition}[Parallel repetition]
For a weakly verifiable quantum puzzle $(\CheckGen,\PuzzleGen,\Ver)$ and a positive integer $n$, 
we define its $n$-repetition $(\CheckGen^n,\PuzzleGen^n,\Ver^n)$ as follows.
\begin{itemize}
    \item 
    $\CheckGen^n(1^\secp)\to (\che_1,\ldots,\che_n):$
    Run $\che_i\gets\CheckGen(1^\secp)$ for $i\in[n]$ and output $(\che_1,\ldots,\che_n)$. 
    \item
    $\PuzzleGen^n(\che_1,\ldots,\che_n)\to(\puz_1,\ldots,\puz_n):$
   Run $\puz_i\gets\PuzzleGen(\che_i)$ for $i\in[n]$ and output $(\puz_1,\ldots,\puz_n)$. 
    \item 
    $\Ver^n(
    (\ans_1,\ldots,\ans_n),
    (\che_1\ldots,\che_n)
    )\to\top/\bot:$
   Run $\Ver(\ans_i,\che_i)$ for $i\in[n]$ and output $\top$ if and only if all the execution of $\Ver$ outputs $\top$. 
\end{itemize}
\end{definition}

\begin{theorem}\label{thm:amplify_qpuzzle}
Let $n,q,t$ be polynomials and $\delta\in (0,1)$ be an inverse polynomial in $\secp$. 
If there exists a QPT adversary $\cA$ that $(t,\delta^n)$-solves $(\CheckGen^n,\PuzzleGen^n,\Ver^n)$, then there exists a polynomial $t'$ and a QPT adversary $\cA'$ that $(t',\delta(1-\frac{1}{q}))$-solves $(\CheckGen,\PuzzleGen,\Ver)$.\footnote{$q$ is a parameter chosen freely that affects the time complexity and the success probability of $\cA'$, but here we do not write the time complexity
of $\cA'$ explicitly, because we are not interested in it as long as it is QPT.}
\end{theorem}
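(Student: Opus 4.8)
The plan is to follow the Canetti--Halevi--Steiner approach, adapted to allow quantum puzzles and poly-copy hardness. Suppose $\cA$ is a QPT adversary that $(t,\delta^n)$-solves the $n$-fold repetition. We want to build $\cA'$ that solves a single copy with probability roughly $\delta(1-1/q)$. The natural strategy for $\cA'$, on input many copies of a single puzzle $\puz$, is: pick a random coordinate $i\in[n]$, sample $\che_j\gets\CheckGen(1^\secp)$ and $\puz_j\gets\PuzzleGen(\che_j)$ for all $j\neq i$ itself (so it knows the check keys for those coordinates), embed its challenge $\puz$ into coordinate $i$, run $\cA$ on the resulting $n$-tuple of puzzle-copies, and output $\cA$'s $i$-th answer $\ans_i$. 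The point of generating the other $n-1$ coordinates internally is that $\cA'$ can run $\Ver(\ans_j,\che_j)$ on the off-coordinates itself, since it knows $\che_j$; this is where \cref{rem:divide_gen} matters, and also where the quantum subtlety enters --- $\Ver$ may be destructive on $\ans_j$, so $\cA'$ must be careful.

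First I would set up the ``trust reduction.'' Run $\cA$ once on the embedded instance; if the coordinate-$i$ answer is to be useful, we want confidence that $\cA$ is in a ``good'' state, i.e. one from which it succeeds on all of coordinates $[n]\setminus\{i\}$ with reasonable probability. Following CHS, $\cA'$ estimates this by a subsampling/repetition procedure: it runs $\cA$ many times (on fresh internally-generated copies of the off-coordinate puzzles, which is why $t'$ must be a larger polynomial than $t$ --- each internal run consumes fresh puzzle copies, and $\cA'$ can make as many as it wants for the off-coordinates but needs $t' = \mathrm{poly}$ many copies of its \emph{own} challenge if it wants to retry; actually the cleaner bookkeeping is that $\cA'$ only needs to run $\cA$ on its real challenge once, after an estimation phase that uses only internally simulated full instances) and checks how often all off-coordinate verifications pass. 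The quantum issue here is that $\cA$'s internal state and the answer registers cannot be freely copied and $\Ver$ is not projective in general; I would handle this by treating each estimation trial as a fresh independent execution of $\cA$ (sampling fresh randomness and fresh puzzle copies throughout), so no rewinding of a quantum state is needed --- only classical repetition of the whole experiment. This sidesteps the quantum rewinding obstacle entirely, which is exactly the ``minor tweak'' alluded to in the text.

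The core of the argument is then a counting/averaging lemma over coordinates, as in CHS: define a coordinate $i$ (together with the fixed choice of off-coordinate check keys and the embedded challenge) to be ``heavy'' if $\cA$, conditioned on reaching this configuration, succeeds on all coordinates in $[n]\setminus\{i\}$ with probability above a threshold $\tau \approx \delta^{n-1}(1-1/2q)$ or so. Since $\cA$ succeeds on \emph{all} $n$ coordinates with probability $\ge \delta^n$, a Markov/averaging argument shows that with probability $\ge \delta(1-1/2q)$ (over the random coordinate and the off-coordinate sampling) we land on a heavy configuration, and moreover conditioned on being heavy, the coordinate-$i$ answer is correct with probability $\ge \delta/\tau' $-ish; multiplying and tuning constants gives the claimed $\delta(1-1/q)$. $\cA'$ cannot directly tell whether its configuration is heavy, but it estimates the off-coordinate success probability to additive error $O(1/(nq))$ using $\mathrm{poly}(n,q,\secp)$ fresh simulated trials (Chernoff), and only proceeds to use its real challenge when the estimate clears the threshold; a union bound over the (polynomially many) estimations controls the failure probability of this filter. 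Poly-copy hardness is threaded through because every invocation of $\cA$, real or simulated, takes $t$ copies of each coordinate's puzzle, so $\cA'$ takes $t' = t$ copies of its own challenge (plus it internally generates as many copies of the off-coordinate puzzles as it needs).

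The main obstacle I expect is making the ``trust estimation / conditional success'' bookkeeping rigorous in the quantum setting without any rewinding: one must be sure that the event ``configuration is heavy'' and the event ``coordinate-$i$ answer is correct'' are analyzed with respect to the same run of $\cA$, yet the estimation of heaviness is done on \emph{different} (independent) runs. The resolution, which I would make precise, is that heaviness is a property of the classical configuration (coordinate index, off-coordinate check keys, and the description of the embedded challenge puzzle's preparation), not of any post-measurement quantum state of $\cA$; hence it is well-defined and can be estimated by independent sampling, and then a single fresh run of $\cA$ on the real challenge, conditioned on a heavy configuration, has the desired success probability by definition of heaviness together with the averaging lemma. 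Everything else --- the Chernoff bounds, the Markov averaging over coordinates, the constant-tuning to turn $(1-1/2q)$ factors into $(1-1/q)$ --- is routine and follows \cite{TCC:CanHalSte05} essentially verbatim.
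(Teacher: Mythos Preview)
Your proposal captures the high-level quantum adaptation correctly (work with classical check keys as the reusable ``prefix'', avoid rewinding by running fresh independent executions of $\cA$), but the actual reduction you describe is not the CHS reduction and, as written, does not work. Two concrete problems. First, your ``heavy'' filter is inverted: you keep configurations where the off-coordinate success probability $a$ is \emph{above} a threshold $\tau$, but what the argument needs is an \emph{upper} bound on $a$, since the conditional success on the embedded coordinate is $s/a$ and a lower bound on $a$ only hurts. CHS does not pick a random coordinate and filter; it iteratively builds a prefix $\prefix_{v-1}=(\che_1,\ldots,\che_{v-1})$, and the stopping rule (no good extension $\che^*$ exists) is precisely what yields $a(\che)=\rsp(\prefix_{v-1}\circ\che)\le\delta^{n-v}(1+\tfrac{1}{6q})$ for most $\che$. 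Simultaneously, the prefix construction guarantees $\rsp(\prefix_{v-1})=\sum_\che w(\che)s(\che)\ge\delta^{n-v+1}(1-\tfrac{1}{6q})$. Both bounds are essential, and neither follows from a random-coordinate averaging argument.

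Second, your claim that $t'=t$ is wrong. In the online phase, $\cA'$ must repeat the experiment (run $\cA$ on $(\puz_1,\ldots,\puz_{v-1},\puz,\puz_{v+1},\ldots,\puz_n)$, check the coordinates $v+1,\ldots,n$, and output $\ans_v$ only if they all pass) up to $L=\lceil \tfrac{6q\ln(6q)}{\delta^{n-v+1}}\rceil$ times; each repetition consumes $t$ fresh copies of the real challenge $\puz$, so $t'=L\cdot t$. If you run $\cA$ on the real challenge only once, your success probability is $s(\che)$, not $s(\che)/a(\che)$, and you cannot reach $\delta(1-\tfrac{1}{q})$. Relatedly, your ``estimation phase uses only internally simulated full instances'' cannot estimate anything about the configuration containing the real challenge (whose check key is unknown to you); the preprocessing in CHS estimates $\rsp(\prefix)$ for candidate prefixes, which is a quantity defined purely over fresh coordinates and hence can be estimated internally---but this is the prefix-building step, not a per-configuration heaviness test.
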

\begin{proof}[Proof of \cref{thm:amplify_qpuzzle} (sketch)]
Since the proof is similar to that in the classical case (\cite[Lemma 1]{TCC:CanHalSte05}), we only explain how to modify it. See 
\ifnum\submission=1
the full version
\else
\cref{sec:proof_amplification} 
\fi
for the full proof. 

First, let us recall the proof overview in the classical case where $\puz$ and $\ans$ are classical.  
The construction of $\cA'$ can be divided into the ``preprocessing phase'' and ``online phase''. 
In the preprocessing phase, $\cA'$ finds a ``good'' prefix $\prefix_{v-1}=(\puz_1,\ldots,\puz_{v-1})$ for some $v\in [n]$. (The actual definition of goodness does not matter in this proof sketch.\footnote{For the readers who want to find the correspondence with the proof in \cite{TCC:CanHalSte05}, we say that $\prefix_{v-1}$ is good if it satisfies Equations (1) and (2) of \cite{TCC:CanHalSte05}.})
In the online phase, given a puzzle $\puz$ as an instance, $\cA'$ repeats the following:
\begin{itemize}
    \item 
$\cA'$ runs  $(\ans_1,\ldots,\ans_n)\gets \cA(\puz_1,\ldots,\puz_{v-1},\puz,\puz_{v+1},\ldots,\puz_n)$ where $\prefix_{v-1}=(\puz_1,\ldots,\puz_{v-1})$ is the ``good'' prefix found in the preprocessing phase, $\puz$ is the given instance, and $(\puz_{v+1},\ldots,\puz_{n})$ is generated by $\cA'$ itself along with the corresponding check key  $(\che_{v+1},\ldots,\che_{n})$. 
\item $\cA'$ runs $\Ver(\ans_i,\che_{i})$ for $i\in \{v+1,\ldots,n\}$ and outputs $\ans_{v}$ if  $\Ver(\ans_i,\che_{i})$ outputs $\top$ for all $i\in \{v+1,\ldots,n\}$. (Note that this is possible because $\cA'$ generates $\che_i$ for $i\in \{v+1,\ldots,n\}$ by itself.) Otherwise, it continues running the loop.  
\end{itemize}
If $\cA'$ does not halt after running the loop sufficiently many times, it aborts. 
They show that 
\begin{itemize}
    \item The probability that $\cA'$ fails to find a good prefix is at most $\frac{\delta}{6q}$, and
    \item For any ``good'' prefix, $\cA'$ solves the puzzle with probability at least $\delta(1-\frac{5}{6q})$. 
\end{itemize}
Combining the above, we can conclude that the overall probability that $\cA'$ solves the puzzle is at least $\delta(1-\frac{1}{q})$. 

When we generalize this proof to the quantum setting where $\puz$ and $\ans$ are quantum, there is an issue that $\cA'$ cannot reuse the ``good'' prefix in the online phase since it consists of \emph{quantum} puzzles, which cannot be copied.\footnote{Actually, a similar problem already occurs when searching for a ``good'' prefix in the preprocessing phase.}
To resolve this problem, we observe that we can regard $\che$ as a classical description of $\puz\gets \PuzzleGen(\che)$. 
Then our idea is to define a prefix to be a sequence of check keys $\che$ instead of puzzles $\puz$.\footnote{We remark that this does not work in the original formulation of (classical) weakly verifiable puzzle in \cite{TCC:CanHalSte05} because $\che$ and $\puz$ are generated simultaneously by a single generation algorithm. For this idea to work, it is important that we divide the generation algorithm in to $\CheckGen$ that generates $\che$ and $\PuzzleGen$ that generates $\puz$ from $\che$ as noted in \cref{rem:divide_gen}. \label{footnote:divide_gen}} 
Since $\che$ is classical and in particular can be reused many times, the above issue is resolved. 
Another issue is that the online phase of $\cA'$ has to run $\cA$ many times where it embeds the problem instance $\puz$ into the input of $\cA$, but $\puz$ is quantum and thus cannot be copied. To resolve this issue, we simply provide sufficiently many copies of $\puz$ to $\cA'$.\footnote{If $\cA$ is a non-uniform adversary with quantum advice, $\cA'$ also needs to take sufficiently many copies of the advice of $\cA$ as its own advice.}  Due to this modification,  \cref{thm:amplify_qpuzzle} does not preserve the number of copies, i.e., $t'$ should be much larger than $t$, but we still have $t'=t\cdot \poly(\secp)$ since $\cA'$ runs $\cA$ only polynoimally many times.

With the above differences in mind, it is straightforward to extend the proof of \cite[Lemma 1]{TCC:CanHalSte05} to prove \cref{thm:amplify_qpuzzle}. 
\end{proof}

\begin{remark}
We remark that the above proof does not work if $\che$ is quantum and potentially entangled with $\puz$.  
(In such a setting, we can only consider single-copy hardness.)
A generalization to such a setting would resolve the open problem about the amplification for computational binding of canonical quantum bit commitments asked by Yan~\cite[Section 12, Problem 5]{cryptoeprint:2020/1488}.  
In the above proof, it is crucial that we have a classical description of $\puz$ as $\che$, and we do not know how to extend it to the case where $\che$ is quantum. 
\end{remark}

We are ready to prove \Cref{thm:amplification_OWSG}. 
\begin{proof}[Proof~of~\Cref{thm:amplification_OWSG}]
The ``only if'' direction is trivial since any OWSG is also wOWSG. In the following, we show the ``if'' direction, i.e., OWSGs exist if wOWSGs exist.
Let $(\KeyGen,\StateGen,\Ver)$ be a wOWSG. Then there exists a polynomial $p$ such that for any QPT adversary $\cA'$ and any polynomial $t'$, 
\begin{eqnarray*}
\Pr[\top\gets\Ver(k',\phi_k):k\gets\KeyGen(1^\secp),\phi_k\gets\StateGen(k),k'\gets\cA'(1^\secp,\phi_k^{\otimes t'})]\le 1-\frac{1}{p}.
\end{eqnarray*}
Then we construct a OWSG $(\KeyGen^n,\StateGen^n,\Ver^n)$ as follows, where $n=\poly(\secp)$ is chosen in such a way that $(1-\frac{1}{2p})^n=2^{-\Omega(\secp)}$. (For example, we can set $n=p\secp$.) 

\begin{itemize}
\item
$\KeyGen^n(1^\secp)\to (k_1,\ldots,k_n):$
Run $k_i\gets \KeyGen^n(1^\secp)$ for $i\in [n]$ and output $(k_1,\ldots,k_n)$.
    \item 
    $\StateGen^n((k_1,\ldots,k_n))\to (\phi_{k_1},\ldots,\phi_{k_n}):$ 
    Run $\phi_{k_i}\gets \StateGen(k_i)$ for $i\in [n]$ and output $(\phi_{k_1},\ldots,\phi_{k_n})$. 
    \item
    $\Ver^n((k'_1,\ldots,k'_n),(\phi_{k_1},\ldots,\phi_{k_n}))\to\top/\bot:$ 
    Run $\Ver(k'_i,\phi_{k_i})$ for $i\in [n]$ and output $\top$ if and only if $\Ver(k'_i,\phi_{k_i})$ outputs $\top$ for all $i\in [n]$.
\end{itemize}
It is clear that $(\KeyGen^n,\StateGen^n,\Ver^n)$ satisfies correctness. 
Let us next show the security.
Toward contradiction, suppose that $(\KeyGen^n,\StateGen^n,\Ver^n)$ is not secure. 
Then there exists a QPT adversary $\cA$ and a polynomial $t$ such that 
\begin{eqnarray*}
\Pr\left[\top\gets\Ver^n((k'_1,\ldots,k'_n),(\phi_{k_1},\ldots,\phi_{k_n})):
\begin{array}{l}
(k_1,\ldots,k_n)\gets\KeyGen^n(1^\secp),\\
(\phi_{k_1},\ldots,\phi_{k_n})\gets\StateGen^n(k_1,\ldots,k_n),\\
(k'_1,\ldots,k'_n)\gets\cA(1^\secp,\phi_{k_1}^{\otimes t},\ldots,\phi_{k_n}^{\otimes t})
\end{array}\right]
\end{eqnarray*}
is non-negligible, which, in particular, is larger than  $(1-\frac{1}{2p})^n=2^{-\Omega(\secp)}$ for infinitely many $\secp$. 
Since OWSGs can be seen as a weakly verifiable quantum puzzle as noted in \cref{rem:OWSG_as_puzzle}, 
by setting 
$q\seteq 2p$ and 
$\delta\seteq 1-\frac{1}{2p}$ in
\cref{thm:amplify_qpuzzle}, there is a QPT algorithm $\A'$ and a polynomial $t'$ such that  
\begin{align*}
&\Pr[\top\gets\Ver(k',\phi_k):k\gets\KeyGen(1^\secp),\phi_k\gets\StateGen(k),k'\gets\cA'(1^\secp,\phi_k^{\otimes t'})]\\
&\ge \delta \left(1-\frac{1}{q}\right)
=
\left(1-\frac{1}{2p}\right) \left(1-\frac{1}{2p}\right)
> 1-\frac{1}{p}
\end{align*}
for infinitely many $\secp$. 
This contradicts the assumption. Therefore $(\KeyGen^n,\StateGen^n,\Ver^n)$ is a OWSG. 
\end{proof}

\section{QDSs}
\label{sec:QDS}

In this section,  
we first define QDSs (\cref{sec:QDS_def}), and show that
one-time-secure QDSs can be extended to $q$-time-secure ones (\cref{sec:QDS_q}).
We then show that one-time-secure QDSs are equivalent to OWSGs (\cref{sec:QDS_OWSG}).

\subsection{Definition of QDSs} 
\label{sec:QDS_def}
Quantum digital signatures are defined as follows.
\begin{definition}[Quantum digital signatures (QDSs)~\cite{C:MorYam22}]
\label{definition:signatures}
A quantum digital signature (QDS) scheme is a set of algorithms 
$(\SKGen,\PKGen,\Sign,\Ver)$ such that 
\begin{itemize}
\item
$\SKGen(1^\secp)\to\sk:$ It is a QPT algorithm that, on input the security parameter $\lambda$, outputs a classical secret key $\sk$.
\item
$\PKGen(\sk)\to\pk:$ It is a QPT algorithm that, on input $\sk$, outputs 
a quantum public key $\pk$. 
\item
$\Sign(\sk,m)\to\sigma:$ It is a QPT algorithm that, on input $\sk$ and a message $m$, outputs a classical signature $\sigma$.
\item
$\Ver(\pk,m,\sigma)\to\top/\bot:$ It is a QPT algorithm that, on input $\pk$, $m$, and $\sigma$, outputs $\top/\bot$.
\end{itemize}

We require the correctness and the security as follows.

\paragraph{\bf Correctness:}
For any $m$,
\begin{eqnarray*}
\Pr\left[
\top\leftarrow\Ver(\pk,m,\sigma):
\begin{array}{l}
\sk\leftarrow\SKGen(1^\lambda),\\
\pk\leftarrow \PKGen(\sk),\\
\sigma\leftarrow\Sign(\sk,m)
\end{array}
\right]
\ge1-\negl(\secp).
\end{eqnarray*}

\if0
\paragraph{\bf (One-time) security:}
Let us consider the following security game, $\mathsf{Exp}$, between a challenger $\cC$ and a QPT adversary $\cA$:
\begin{enumerate}
\item
$\cC$ runs $\sk\gets\KeyGen(1^\secp)$. 
\item
$\cC$ runs $\pk\gets\StateGen(\sk)$ $t+1$ times, and
sends $\pk^{\otimes t}$ to $\cA$.
\item
$\cA$ sends a message $m$ to $\cC$.
\item
$\cC$ runs $\sigma\gets\mathsf{Sign}(\sk,m)$, and sends $\sigma$ to $\cA$.
\item
$\cA$ sends $\sigma'$ and $m'$ to $\cC$.
\item
$\cC$ runs $v\gets\mathsf{Ver}(\pk,m',\sigma')$.
If $m'\neq m$ and $v=\top$, the output of the game is 1.
Otherwise, the output of the game is 0.
\end{enumerate}
For any QPT adversary $\cA$ and any polynomial $t$,
\begin{eqnarray*}
\Pr[\mathsf{Exp}=1]\le\negl(\secp).
\end{eqnarray*}
\fi

\paragraph{\bf $q$-time security:} 
Let us consider the following security game, $\mathsf{Exp}$, between a challenger $\cC$ and a QPT adversary $\cA$:
\begin{enumerate}
\item
$\cC$ runs $\sk\gets\SKGen(1^\secp)$. 
\item
$\cC$ runs $\pk\gets\PKGen(\sk)$ $t$ times, and
sends $\pk^{\otimes t}$ to $\cA$.
\item For $i=1$ to $q$, do:
\begin{enumerate}
\item
$\cA$ sends a message $m^{(i)}$ to $\cC$.
\item
$\cC$ runs $\sigma^{(i)}\gets\mathsf{Sign}(\sk,m^{(i)})$, and sends $\sigma^{(i)}$ to $\cA$.
\end{enumerate}
\item
$\cA$ sends $\sigma'$ and $m'$ to $\cC$.
\item
$\cC$ runs $\pk\gets\PKGen(\sk)$ and $v\gets\mathsf{Ver}(\pk,m',\sigma')$.
If $m'\notin \{m^{(1)},\ldots,m^{(q)}\}$ and $v=\top$, the output of the game is 1.
Otherwise, the output of the game is 0.
\end{enumerate}
For any QPT adversary $\cA$ and any polynomial $t$,
$
\Pr[\mathsf{Exp}=1]\le\negl(\secp).
$
\end{definition}

\begin{remark}
By using the shadow tomography, we can show that statistically-secure QDSs do not exist.
For a proof, see \cref{app:stat_QDSs}.
\end{remark}

\subsection{Extension to $q$-time Security}
\label{sec:QDS_q}

\begin{theorem}\label{thm:q-time}
If one-time-secure QDSs exist, then $q$-time-secure QDSs exist for any polynomial $q$.
\end{theorem}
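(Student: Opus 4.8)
To prove that the existence of one-time-secure QDSs implies $q$-time-secure QDSs for every polynomial $q$, the plan is to adapt the classical one-time-to-bounded-time transformation, instantiated so that it relies on nothing beyond the one-time scheme and an \emph{unconditional} combinatorial gadget (a cover-free family), thereby avoiding pseudorandom functions (which need not exist under the sole assumption of one-time-secure QDSs). Fix the bound $q=\poly(\secp)$ and assume without loss of generality that messages have a fixed length $\ell=\poly(\secp)$. Let $\{S_m\}_{m\in\bit^\ell}$ be an explicit $q$-cover-free family of subsets of $[N]$, i.e. for every member $S_m$ and any $q$ other members $S_{m^{(1)}},\dots,S_{m^{(q)}}$ one has $S_m\not\subseteq\bigcup_{i}S_{m^{(i)}}$; such families exist with $N=O(q^2\ell)=\poly(\secp)$ and $|S_m|=\poly(\secp)$, and can be generated deterministically (e.g. via Reed--Solomon codes). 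The $q$-time scheme samples $N$ independent one-time key pairs $(\sk_j,\pk_j)$, sets the secret key to $(\sk_1,\dots,\sk_N)$ and the public key to the tuple of quantum states $(\pk_1,\dots,\pk_N)$ (in the security game the challenger simply generates $t+1$ copies of each $\pk_j$); to sign $m$ it outputs $\{\Sign(\sk_j,m)\}_{j\in S_m}$, and to verify it recomputes $S_m$ and accepts iff every component verifies under the corresponding $\pk_j$. Correctness follows from correctness of the one-time scheme by a union bound over the $\poly(\secp)$ components.

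For security, suppose a QPT adversary $\cA$ wins the $q$-time game with non-negligible probability using $t$ copies of the public key; we build $\cB$ breaking one-time security. (We may assume without loss of generality that $\cA$'s $q$ queried messages are distinct.) $\cB$ obtains the one-time challenge public key, requesting polynomially many copies (which one-time security permits), samples $j^\ast\gets[N]$ uniformly, plants the challenge key at coordinate $j^\ast$, generates the other $N-1$ key pairs itself, and gives $\cA$ $t$ copies of the resulting public-key tuple. Signing queries are answered with $\cB$'s own secret keys, except that any queried $m$ with $j^\ast\in S_m$ has its $j^\ast$-component produced by $\cB$'s one-time signing oracle; if more than one distinct query needs this, $\cB$ aborts. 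When $\cA$ forges $(m^\ast,\sigma^\ast)$ on a fresh $m^\ast$ with $j^\ast\in S_{m^\ast}$, $\cB$ outputs the $j^\ast$-component of $\sigma^\ast$ (making a dummy oracle query on some message $\neq m^\ast$ first if it never used its oracle).

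The crux is that $S_{m^\ast}$ is a deterministic public function of $m^\ast$ and is not chosen by $\cA$, so $q$-cover-freeness guarantees that $S_{m^\ast}$ contains an index $g$ lying in the subset of no queried message (in particular, of at most one). Since the simulated and real public-key tuples are identically distributed (the planted challenge key and the honestly generated keys have the same distribution), with probability at least $1/N$ the guess $j^\ast$ hits such a $g$; conditioned on this, $\cB$ makes at most one oracle query, never aborts, simulates the $q$-time game perfectly, and its output is a valid one-time signature on $m^\ast$, which differs from the message (if any) $\cB$ queried to its oracle — a genuine one-time forgery. Hence $\cB$'s advantage is at least $1/N$ times $\cA$'s (up to the negligible correctness slack), so it is non-negligible, contradicting one-time security. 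All copy counts ($t$, $N$, each $|S_m|$, the number of public-key copies $\cB$ needs) are polynomial, so $\cB$ is QPT and stays within the one-time guarantee, and the quantum features cause no difficulty because public keys are simply duplicated by the honest parties and only polynomially many copies are ever needed. I expect the main work to be choosing the cover-free parameters so that $N$ stays polynomial while coping with fully adaptive query patterns, and verifying that $\cB$'s abort events are correctly absorbed into the $1/N$ guessing loss.
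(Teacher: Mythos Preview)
Your proof is correct and takes a genuinely different route from the paper. The paper's construction generates $\secp \cdot q^2$ one-time key pairs arranged as a $\secp \times q^2$ grid; each signing \emph{randomly} selects one column per row, and a birthday-type argument shows that with probability $1-\negl(\secp)$ some row sees $q$ distinct columns across the $q$ queries, after which a guess of that row and the forgery's column gives a reduction with loss $1/(q^2\secp)$. Your construction instead uses an explicit $q$-cover-free family to \emph{deterministically} assign each message a set of one-time instances; cover-freeness directly supplies an untouched index $g\in S_{m^\ast}$ for any fresh forgery, so the reduction loses only a factor $1/N$ with no probabilistic bad event to control. Your approach yields deterministic signing and a conceptually cleaner reduction, while the paper's randomized scheme avoids constructing and evaluating a cover-free family and mirrors an existing one-time-to-bounded-time conversion from the attribute-based-encryption literature. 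Both arguments carry over unchanged to the quantum-public-key setting because the only quantum object is the public key, which honest parties can prepare in polynomially many copies.

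One small remark: your claim that ``we may assume without loss of generality that $\cA$'s $q$ queried messages are distinct'' is not obviously justified (caching fails when the one-time $\Sign$ is randomized), but your argument does not actually need it---when $j^\ast=g$ you never invoke the one-time signing oracle for any of $\cA$'s queries, so repeated queries cause no difficulty in the only case your analysis uses.
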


\ifnum\submission=1
The idea is similar to the one-time to $q$-time conversion for attribute-based encryption in \cite{PKC:ISVWY17}. 
We first consider a scheme where we generate $q^2$ key pairs of one-time-secure scheme and uniformly chooses one of $q^2$ signing keys to generate a signature whenever we run the signing algorithm. This scheme is not $q$-bounded-secure because the probability that the same signing key is used more than once is non-negligible. However, by a simple combinatorial argument, we can upper bound the probability of such a ``bad'' event by some constant smaller than $1$. Thus, by repeating this construction $\secp$ times, we can amplify the security to get $q$-bounded-secure scheme.     

For a formal proof, see the full version.
\else
\begin{proof}[Proof of \cref{thm:q-time}]
The proof is similar to the one-time to $q$-time conversion for attribute-based encryption in \cite{PKC:ISVWY17}.\footnote{We also note that the proof works also for the classical setting.} 

Let $(\mathsf{1DS}.\SKGen,\mathsf{1DS}.\PKGen,\mathsf{1DS}.\Sign,\mathsf{1DS}.\Ver)$ be a one-time-secure QDS scheme. For a polynomial $q$, we construct a $q$-time-secure QDS scheme $(\mathsf{qDS}.\SKGen,\allowbreak\mathsf{qDS}.\PKGen,\mathsf{qDS}.\Sign,\mathsf{qDS}.\Ver)$ as follows. 
\begin{itemize}
\item
$\mathsf{qDS}.\SKGen(1^\secp)\to\sk:$ 
Run $\sk_{a,b}\gets \mathsf{1DS}.\SKGen(1^\secp)$ for $a\in [\secp]$ and $b\in [q^2]$ and output $\sk\seteq (\sk_{a,b})_{a\in[\secp],b\in[q^2]}$. 
\item
$\mathsf{qDS}.\PKGen(\sk)\to\pk:$ 
Parse $\sk= (\sk_{a,b})_{a\in[\secp],b\in[q^2]}$,  
run $\pk_{a,b}\gets \mathsf{1DS}.\PKGen(\sk_{a,b})$ for $a\in [\secp]$ and $b\in [q^2]$ and output $\pk\seteq \bigotimes_{a\in[\secp],b\in[q^2]}\pk_{a,b}$.  
\item
$\mathsf{qDS}.\Sign(\sk,m)\to\sigma:$ 
Parse $\sk= (\sk_{a,b})_{a\in[\secp],b\in[q^2]}$, choose $b_a\gets[q^2]$ for $a\in[\secp]$, run  $\sigma_a\gets\mathsf{1DS}.\Sign(\sk_{a,b_a},m)$ for $a\in[\secp]$, and output $\sigma=(b_a,\sigma_a)_{a\in[\secp]}$. 
\item
$\mathsf{qDS}.\Ver(\pk,m,\sigma)\to\top/\bot:$ 
Parse 
$\pk= \bigotimes_{a\in[\secp],b\in[q^2]}\pk_{a,b}$ and
$\sigma=(b_a,\sigma_a)_{a\in[\secp]}$, run $\mathsf{1DS}.\Ver(\pk_{a,b_a},m,\sigma_{a})$ for $a\in[\secp]$, and output $\top$ if and only if all execution of $\mathsf{1DS}.\Ver$ output $\top$.  
\end{itemize}
We show that the above scheme satisfies $q$-time security. Let $\cA$ be an adversary against $q$-time security of the above scheme. In the $q$-time security $\mathsf{Exp}$ as defined in  \cref{definition:signatures}, let $\sigma^{(i)}=(b_a^{(i)},\sigma_{a}^{(i)})_{a\in[\secp]}$ be the $i$-th signature generated by $\cC$ and $\sigma'=(b_a',\sigma_{a}')_{a\in[\secp]}$ be $\cA$'s final output. Let $\mathsf{Good}$ be the event that there is $a\in [\secp]$ such that $b_{a}^{(i)}$ is distinct for all $i\in[q]$. Then we can show that $\Pr[\mathsf{Good}]=1-\negl(\secp)$ similarly to \cite[Lemma 1]{PKC:ISVWY17}. Specifically, this is shown as follows. 
For each $a\in[\secp]$, 
let $\mathsf{Bad}_a$  be the event that $b_{a}^{(i)}$ is \emph{not} distinct for $i\in[q]$.
Then by a simple combinatorial argument, we can see that 
\begin{align*}
    \Pr[\mathsf{Bad}_a]&=1-\left(\frac{q^2(q^2-1)\cdots (q^2-q+1)}{(q^2)^q}\right)\\
    &\le 1-\left(1-\frac{q-1}{q^2}\right)^q.
\end{align*}
for each $a\in[\secp]$. 
Thus, we have 
\begin{align*}
    \Pr[\mathsf{Good}]&=1-\prod_{a\in[\secp]}\Pr[\mathsf{Bad}_a]\\
    &\ge 1-\left( 1-\left(1-\frac{q-1}{q^2}\right)^q\right)^\secp\\
    &= 1-\left( 1-\left(1-\frac{1}{q}+\frac{1}{q^2}\right)^q\right)^\secp\\
    &\ge 1-(1-e^{-1})^\secp\\
    &=1-\negl(\secp).
\end{align*}

Conditioned on $\mathsf{Good}$ occurs and $\cA$ wins (i.e., $\mathsf{Exp}$ returns $1$), if we uniformly choose $a^*\gets[\secp]$ and $b^*\gets[q^2]$, then the probability that $b_{a^*}^{(i)}$ is distinct for all $i\in[q]$ and $b^*=b'_{a^*}$ is at least $\frac{1}{q^2\secp}$. Thus, by randomly guessing $a^*\gets[\secp]$ and $b^*\gets[q^2]$ and embedding a problem instance of $\mathsf{1QDS}$ into the corresponding position, we can reduce the $q$-time security of $\mathsf{qQDS}$ to the one-time security of $\mathsf{1QDS}$. That is, we can construct an adversary $\cB$ against  the one-time security of $\mathsf{1QDS}$ as follows:
\begin{description}
\item[$\cB(\pk^{\otimes t})$:]
Choose $a^*\gets[\secp]$ and $b^*\gets[q^2]$, generate $\sk_{a,b}\gets \mathsf{1DS}.\SKGen(1^\secp)$
and run $\pk_{a,b}\gets \mathsf{1DS}.\PKGen(\sk_{a,b})$ times to get $\pk_{a,b}^{\otimes t}$ for 
$(a,b)\in ([\secp]\times [q^2])\setminus \{(a^*,b^*)\}$, 
set $\pk_{a^*,b^*}^{\otimes t}\seteq\pk^{\otimes t}$, and sends $\pk^{\otimes t}\seteq \bigotimes_{a\in[\secp],b\in[q^2]}\pk_{a,b}^{\otimes t}$ to $\cA$. When $\cA$ sends $m^{(i)}$ for $i\in[q]$, choose $b_a^{(i)}$ for $a\in[\secp]$, generates $\sigma_a\gets\mathsf{1DS}.\Sign(\sk_{a,b_a^{(i)}},m)$ for $a\in [q]\setminus \{a^*\}$ and $\sigma_{a^*}$ in one of the following way:
\begin{itemize}
    \item If $b_{a^*}^{(i)}=b^*$ and $\cB$ has never sent $m$ to the external challenger, send $m$ to the external challenger and let $\sigma_{a^*}^{(i)}$ be the returned signature;  
    \item If $b_{a^*}^{(i)}=b^*$ and $\cB$ has sent $m$ to the external challenger before, immediately abort;
    \item If $b_{a^*}^{(i)}\ne b^*$, generate $\sigma_{a^*}\gets\mathsf{1DS}.\Sign(\sk_{a^*,b_{a^*}^{(i)}},m)$. Note that this is possible without the help of the external challenger since $\cB$ generated $\sk_{a,b}^{(i)}$ by itself for $(a,b)\ne (a^*,b^*)$.
\end{itemize}
Return $\sigma^{(i)}\seteq (b_a^{(i)},\sigma_a^{(i)})_{a\in[\secp]}$ to $\cA$. 
When $\cA$ sends $\sigma'=(b'_a,\sigma'_a)_{a\in[\secp]}$, 
send $\sigma'_{a^*}$ to the external challenger
if $b'_{a^*}=b^*$ and otherwise abort. 
\end{description}
As already discussed, $\mathsf{Good}$ occurs with probability $1-\negl(\secp)$ and conditioned on that $\mathsf{Good}$ occurs,
the probability that $\cB$ wins is at least $\frac{1}{q^2\secp}$ times the probability that $\cA$ wins. Since $q$ is polynomial, the reduction loss is polynomial and thus $\mathsf{qDS}$ is $q$-time secure if $\mathsf{qDS}$ is one-time secure.
\end{proof}
\fi

\subsection{Equivalence of OWSGs and QDSs}
\label{sec:QDS_OWSG}

\begin{theorem}
\label{thm:OWSG_QDS}
OWSGs exist if and only if one-time-secure QDSs exist. 
\end{theorem}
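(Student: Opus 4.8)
The plan is to prove both directions separately. For the ($\Leftarrow$) direction (QDSs imply OWSGs), I would take a one-time-secure QDS scheme $(\SKGen,\PKGen,\Sign,\Ver)$ and build an OWSG $(\KeyGen',\StateGen',\Ver')$ essentially by ``using the secret key as the OWSG key and the public key as the output state,'' but with a twist to make the verification algorithm well-defined. Concretely, set $\KeyGen'(1^\secp)$ to run $\sk\gets\SKGen(1^\secp)$ and output $k=\sk$; set $\StateGen'(k)$ to output $\pk\gets\PKGen(\sk)$. The subtle point is defining $\Ver'(k',\pk)$: given an alleged key $k'=\sk'$ and a copy of the public key $\pk$, the verifier should sign a freshly sampled random message $m$ under $\sk'$ (i.e., run $\sigma\gets\Sign(\sk',m)$) and then check $\Ver(\pk,m,\sigma)$. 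Correctness follows from QDS correctness. For security: an adversary that, given $t$ copies of $\pk$, outputs $\sk'$ passing $\Ver'$ with non-negligible probability yields a forger — but here one must be careful that $\Ver'$ only checks a single fresh message, so I would instead reduce to one-time security by having the reduction $\cB$ receive $\pk^{\otimes t}$, feed them to $\cA$, obtain $\sk'$, then sign its own chosen message $m^{(1)}$ honestly is not the issue; rather $\cB$ should pick a random message $m'$, compute $\sigma'\gets\Sign(\sk',m')$ itself, and submit $(m',\sigma')$ as the forgery — this works provided $\cB$ never queried $m'$, which holds except with negligible probability if the message space is super-polynomial (or $\cB$ simply never makes a signing query at all, which is allowed in the one-time game). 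This is the cleanest route and it is where I expect the main subtlety: massaging the OWSG verification (which only tests one message) so that breaking it genuinely breaks unforgeability.

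For the ($\Rightarrow$) direction (OWSGs imply QDSs), the natural construction is Lamport-style signatures over one-bit messages, lifted via standard techniques to arbitrary-length messages (e.g., using a collision-resistant-free approach such as signing each bit position, or more simply first proving the single-bit case and then noting that one can sign $\ell$-bit messages by running $\ell$ independent single-bit schemes keyed per position, or by hashing — but since we only need one-time security we can avoid hashing). For single-bit messages: $\SKGen(1^\secp)$ runs $k_0,k_1\gets\KeyGen(1^\secp)$ and sets $\sk=(k_0,k_1)$; $\PKGen(\sk)$ outputs $\phi_{k_0}^{\otimes L}\otimes\phi_{k_1}^{\otimes L}$ for a suitable polynomial $L$ of copies (enough to run the OWSG verifier $L$ times, one per allowed signature-verification); $\Sign(\sk,b)=k_b$; $\Ver(\pk,b,\sigma)$ runs the OWSG $\Ver(\sigma,\phi_{k_b})$ on one of the copies of $\phi_{k_b}$. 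One-time security: an adversary that sees the public key (which contains copies of $\phi_{k_0}$ and $\phi_{k_1}$), asks for a signature on $b$ (learning $k_b$), and forges on $\bar b$ must produce $k'$ with $\Ver(k',\phi_{k_{\bar b}})=\top$ — but it only ever saw copies of $\phi_{k_{\bar b}}$, never $k_{\bar b}$ itself, so this directly contradicts OWSG security. The number-of-copies bookkeeping (how many copies $L$ the public key must contain, and that $t$ copies of the public key gives $tL$ copies of each $\phi_{k_b}$, still polynomial) is routine.

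The main obstacle in both directions is handling the verification/copy-count interface between the two primitives rather than any deep idea: in the QDS-to-OWSG direction, ensuring the reduction's forgery uses a message outside whatever the OWSG-verification implicitly ``queried'' and that a single-message test suffices to contradict unforgeability; in the OWSG-to-QDS direction, arranging that the public key carries enough copies of the state for all honest verifications while keeping everything polynomial, and extending from one-bit to many-bit messages (which for one-time security can be done by independent per-position instances, so no additional assumption is needed). I would also remark that, as the paper notes, a construction of QDSs from OWSGs already appeared in~\cite{cryptoeprint:2021/1691}, so the content here is re-deriving it under the generalized (mixed-state, general-verifier) definition, which mainly affects the OWSG-to-QDS direction's verification step; I expect the proof to cite that work and verify the argument goes through verbatim with the new $\Ver$ syntax.
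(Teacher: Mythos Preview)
Your proposal is correct and follows essentially the same approach as the paper: the QDS $\Rightarrow$ OWSG direction uses $k=\sk$, $\phi_k=\pk$, and verifies by signing a message with the alleged key and checking it against $\pk$ (the paper simply fixes the message to $1$ rather than sampling it at random, and the reduction makes no signing query, so the forgery is trivially fresh); the OWSG $\Rightarrow$ QDS direction is the Lamport-style construction for single-bit messages with $\sk=(k_0,k_1)$, $\pk=(\phi_{k_0},\phi_{k_1})$, and $\Sign(\sk,b)=k_b$. The paper's public key carries only a single copy of each $\phi_{k_b}$ (your extra $L$ copies are unnecessary since the adversary already receives $\pk^{\otimes t}$), and the security reduction guesses the bit position in which to embed the OWSG challenge, incurring a factor-$2$ loss---a standard step you left implicit.
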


\begin{remark}
By using the equivalence between OWSGs and wOWSGs (\cref{thm:amplification_OWSG}),
the result that one-time-secure QDSs imply OWSGs can be improved to a stronger result (with a similar proof) that
one-time-secure QDSs with weak security imply OWSGs.
Here, the weak security of QDSs means that
there exists a polynomial $p$ such that for any QPT adversary $\cA$ and any polynomial $t$,
$\Pr[\mathsf{Exp}=1]\le1-\frac{1}{p}$.
\end{remark}

\ifnum\submission=1
It is proven in \cite{cryptoeprint:2021/1691} that OWSGs implies one-time-secure QDSs. However, since we generalize the definition of OWSGs, we need to reprove it. Fortunately, almost the same construction as that in \cite{cryptoeprint:2021/1691} works with the generalized definition of OWSGs. Roughly, the construction is as follows when the message space is one-bit: a secret key is $\sk=(k_0,k_1)$, a public key is $\pk=(\phi_{k_0},\phi_{k_1})$, and a signature for a bit $b\in \bit$ is $k_b$. The verification algorithm of QDSs simply runs that of the OWSG. 

For the other direction, we construct OWSGs from QDSs by regarding $\sk$ and $\pk$ of QDSs as $k$ and $\phi_k$ of OWSGs.

For a formal proof, see the full version.
\else
\begin{proof}[Proof of \cref{thm:OWSG_QDS}]
Let us first show that one-time-secure QDSs imply OWSGs.
Let $(\mathsf{DS}.\SKGen,\mathsf{DS}.\PKGen,\allowbreak\mathsf{DS}.\Sign,\mathsf{DS}.\Ver)$ be a one-time-secure QDS scheme for
a single-bit message.
From it, we construct a OWSG as follows.
\begin{itemize}
    \item 
    $\KeyGen(1^\secp)\to k:$ 
    Run $\sk\gets\mathsf{DS}.\SKGen(1^\secp)$.
    Output $k\coloneqq \sk$.
    \item
    $\StateGen(k)\to \phi_k:$
    Parse $k=\sk$.
    Run $\pk\gets\mathsf{DS}.\PKGen(\sk)$.
    Output $\phi_k\coloneqq\pk$.
    \item
    $\Ver(k',\phi_k)\to\top/\bot:$
    Parse $\phi_k=\pk$.
    Run $\sigma\gets\mathsf{DS}.\Sign(k',1)$.
    Run $v\gets\mathsf{DS}.\Ver(\pk,1,\sigma)$.
    Output $v$.
\end{itemize}
Correctness is clear. Let us show the security. Assume that it is not secure.
It means that there exists a QPT adversary $\cA$ and a polynomial $t$ such that
\begin{eqnarray*}
\Pr\left[
\top\gets\mathsf{DS}.\Ver(\pk,1,\sigma):
\begin{array}{cc}
\sk\gets\mathsf{DS}.\SKGen(1^\secp)\\
\pk\gets\mathsf{DS}.\PKGen(\sk)\\
\sk'\gets\cA(\pk^{\otimes t})\\
\sigma\gets\mathsf{DS}.\Sign(\sk',1)
\end{array}
\right]\ge\frac{1}{\poly(\secp)}.
\end{eqnarray*}
From such $\cA$, let us construct a QPT adversary $\cB$ that breaks the security of the QDS as follows:
On input $\pk^{\otimes t}$, it runs $\sk'\gets\cA(\pk^{\otimes t})$. 
It then runs $\sigma\gets\mathsf{DS}.\Sign(\sk',1)$, and outputs $(1,\sigma)$.
It is clear that the probability that $\sigma$ is accepted as a valid signature of the message 1 is equal to the left-hand-side of the above
inequality. Hence the security of the QDS is broken by $\cB$.

Let us next give a proof that OWSGs imply QDSs. The construction is a ``quantum version" of the Lamport signatures~\cite{Lamport_signature},
and is similar to that given in \cite{C:MorYam22}.
Let $(\mathsf{OWSG}.\KeyGen,\allowbreak\mathsf{OWSG}.\StateGen,\mathsf{OWSG}.\Ver)$ be a OWSG.
From it, we construct a one-time-secure QDS for a single-bit message as follows.
\begin{itemize}
\item
$\SKGen(1^\secp)\to\sk$: 
Run $k_0\gets\mathsf{OWSG}.\KeyGen(1^\secp)$.
Run $k_1\gets\mathsf{OWSG}.\KeyGen(1^\secp)$.
Output $\sk\coloneqq(\sk_0,\sk_1)$, where $\sk_b\coloneqq k_b$ for $b\in\bit$.
\item
$\PKGen(\sk)\to\pk$: 
Parse $\sk=(\sk_0,\sk_1)$, where $\sk_b= k_b$ for $b\in\bit$.
Run 
$\phi_{k_b}\gets \mathsf{OWSG}.\StateGen(k_b)$ for each $b\in\bit$.
Output $\pk\coloneqq(\pk_0,\pk_1)$, where $\pk_b\coloneqq\phi_{k_b}$ for $b\in\bit$.
\item
$\Sign(\sk,m)\to\sigma$: 
Parse $\sk=(\sk_0,\sk_1)$, where $\sk_b= k_b$ for $b\in\bit$.
Output $\sigma\coloneqq \sk_m$.
\item
$\Ver(\pk,m,\sigma)\to\top/\bot$:
Parse $\pk=(\pk_0,\pk_1)$, where $\pk_b=\phi_{k_b}$ for $b\in\bit$.
Run $v\gets\mathsf{OWSG}.\Ver(\sigma,\phi_{k_m})$.
Output $v$.
\end{itemize}

It is clear that this construction satisfies correctness.
Let us next show the security.
Let us consider the following security game between the challenger $\cC$ and a QPT adversary $\cA$:
\begin{enumerate}
\item
$\cC$ runs $k_0\gets\mathsf{OWSG}.\KeyGen(1^\secp)$.
$\cC$ runs $k_1\gets\mathsf{OWSG}.\KeyGen(1^\secp)$.
\item
$\cC$ runs $\phi_{k_b}\gets\mathsf{OWSG}.\StateGen(k_b)$ $t+1$ times for $b\in\bit$.
$\cC$ sends $(\phi_{k_0}^{\otimes t},\phi_{k_1}^{\otimes t})$ to $\cA$.
\item
$\cA$ sends $m\in\bit$ to $\cC$.
\item
$\cC$ sends $k_m$ to $\cA$.
\item
$\cA$ sends $\sigma$ to $\cC$.
\item
$\cC$ runs $v\gets\mathsf{OWSG}.\Ver(\sigma,\phi_{k_{m\oplus 1}})$.
If the result is $\top$, $\cC$ outputs 1.
Otherwise, $\cC$ outputs 0. 
\end{enumerate}
Assume that our construction is not one-time secure, which means that $\Pr[\cC\to1]$ is non-negligible for
a QPT $\cA$ and a polynomial $t$.
Then, we can construct a QPT adversary $\cB$ that breaks the security of the OWSG as follows.
(Here, $\cC'$ is the challenger of the security game of the OWSG.)
\begin{enumerate}
\item
$\cC'$ runs $k\gets \mathsf{OWSG}.\KeyGen(1^\secp)$.
$\cC'$ runs $\phi_k\gets\mathsf{OWSG}.\StateGen(k)$ $t+1$ times.
$\cC'$ sends $\phi_k^{\otimes t}$ to $\cB$.
\item
$\cB$ chooses $r\leftarrow\bit$.
$\cB$ runs $k'\gets\mathsf{OWSG}.\KeyGen(1^\secp)$.
$\cB$ runs $\phi_{k'}\gets\mathsf{OWSG}.\StateGen(k')$ $t$ times.
If $r=0$, $\cB$ sends $(\phi_k^{\otimes t},\phi_{k'}^{\otimes t})$ to $\cA$.
If $r=1$, $\cB$ sends $(\phi_{k'}^{\otimes t},\phi_k^{\otimes t})$ to $\cA$.
\item
$\cA$ sends $m\in\bit$ to $\cB$.
\item
If $r=m$, $\cB$ aborts.
If $r\neq m$, $\cB$ sends $k'$ to $\cA$.
\item
$\cA$ sends $\sigma$ to $\cB$.
\item
$\cB$ sends $\sigma$ to $\cC'$.
\item
$\cC'$ runs $v\gets\mathsf{OWSG}.\Ver(\sigma,\phi_k)$.
If $v=\top$, $\cC'$ outputs 1.
Otherwise, $\cC'$ outputs 0. 
\end{enumerate}
By a straightforward calculation, which is given below, 
\begin{eqnarray}
\Pr[\cC'\to1]=\frac{1}{2}\Pr[\cC\to1].
\label{exp1}
\end{eqnarray}
Therefore, 
if $\Pr[\cC\to1]$ is non-negligible, 
$\Pr[\cC'\to1]$ is also non-negligible,
which means that $\cB$ breaks
the security of the OWSG.

Let us show Eq.~(\ref{exp1}).
For simplicity, we write $\Pr[k]$ to mean $\Pr[k\gets\mathsf{OWSG}.\KeyGen(1^\secp)]$.
Then,
\begin{eqnarray*}
&&\Pr[\cC'\to1]\\
&=&\sum_{k,k',\sigma}
\Pr[k]\Pr[k']
\frac{1}{2}\Pr[1\leftarrow \cA(\phi_k^{\otimes t},\phi_{k'}^{\otimes t})]
\Pr[\sigma\leftarrow \cA(k')]
\Pr[\top\gets\mathsf{OWSG}.\Ver(\sigma,\phi_k)]\\
&&+\sum_{k,k',\sigma}\Pr[k]\Pr[k']\frac{1}{2}\Pr[0\leftarrow \cA(\phi_{k'}^{\otimes t},\phi_k^{\otimes t})]
\Pr[\sigma\leftarrow \cA(k')]
\Pr[\top\gets\mathsf{OWSG}.\Ver(\sigma,\phi_k)]\\
&=&\sum_{k,k',\sigma}\Pr[k]\Pr[k']\frac{1}{2}\Pr[1\leftarrow \cA(\phi_k^{\otimes t},\phi_{k'}^{\otimes t})]
\Pr[\sigma\leftarrow \cA(k')]
\Pr[\top\gets\mathsf{OWSG}.\Ver(\sigma,\phi_k)]\\
&&+\sum_{k,k',\sigma}\Pr[k]\Pr[k']\frac{1}{2}\Pr[0\leftarrow \cA(\phi_{k}^{\otimes t},\phi_{k'}^{\otimes t})]
\Pr[\sigma\leftarrow \cA(k)]
\Pr[\top\gets\mathsf{OWSG}.\Ver(\sigma,\phi_{k'})]\\
&=&\frac{1}{2}\Pr[\cC\to1].
\end{eqnarray*}

\end{proof}
\fi

\section{Quantum Money}
\label{sec:money}

In this section, we first define private-key quantum money schemes (\cref{sec:money_def}).
We then construct OWSGs from quantum money schemes with pure money states (\cref{sec:OWSGfromQmoney_pure}).
We also show that OWSGs can be constructed from quantum money schemes where the verification algorithms
satisfy a certain symmetric property (\cref{sec:OWSGfromQmoney_symmetric}).

\subsection{Definition of Private-key Quantum Money}
\label{sec:money_def}
Private-key quantum money schemes are defined as follows. 
\begin{definition}[Private-key quantum money \cite{C:JiLiuSon18,STOC:AarChr12}]
A private-key quantum money scheme is a set of algorithms 
$(\KeyGen,\Mint,\Ver)$ such that 
\begin{itemize}
\item
$\KeyGen(1^\secp)\to k:$
It is a QPT algorithm that, on input the security parameter $\secp$,
outputs a classical secret key $k$.
\item
$\Mint(k)\to \$_k:$
It is a QPT algorithm that, on input
$k$, outputs an $m$-qubit quantum state $\$_k$.
\item
$\Ver(k,\rho)\to\top/\bot:$
It is a QPT algorithm that, on input $k$ and a quantum
state $\rho$, outputs $\top/\bot$.
\end{itemize}
We require the following correctness and security.

\paragraph{\bf Correctness:}
\begin{eqnarray*}
\Pr[\top\gets\Ver(k,\$_k):k\gets\KeyGen(1^\secp),\$_k\gets\Mint(k)]
\ge1-\negl(\secp).
\end{eqnarray*}

\paragraph{\bf Security:}
For any QPT adversary $\cA$ and any polynomial $t$,
\begin{eqnarray*}
\Pr[\mathsf{Count}(k,\xi)\ge t+1
:k\gets\KeyGen(1^\secp),\$_k\gets\Mint(k),\xi\gets\cA(1^\secp,\$_k^{\otimes t})]
\le\negl(\secp),
\end{eqnarray*}
where $\xi$ is a quantum state on $\ell$ registers, $R_1,...,R_\ell$, each of which is of $m$ qubits,
and $\mathsf{Count}$ is the following QPT algorithm:
on input $\xi$, it runs $\top/\bot\gets\Ver(k,\xi_j)$ for each $j\in[1,2,...,\ell]$, where
$\xi_j\coloneqq\mbox{Tr}_{R_1,...,R_{j-1},R_{j+1},...,R_\ell}(\xi)$,
and outputs the total number of $\top$.
\end{definition}

\if0
\begin{remark}
In the definition of \cite{C:JiLiuSon18},
money states are assumed to be pure.
In the above definition, we consider the more general case
where
money states can be mixed states. \takashi{If we cite \cite{STOC:AarChr12}, this remark is not needed since \cite{STOC:AarChr12} considers mixed money states.}
\end{remark}
\fi

\begin{remark}
Private-key quantum money schemes are
constructed 
from PRSGs~\cite{C:JiLiuSon18}.
\end{remark}

\begin{remark}
As is shown in \cite{Shadow2}, private-key quantum money schemes    
are broken by an unbounded adversary,
and therefore statistically-secure private-key quantum money schemes do not exist.
(The idea is as follows: the unbounded adversary first finds all $\{k_i\}_i$ such that
$\Ver(k_i,\$_k)$ is large with the shadow tomography, and then searches a state $\rho$ by the brute-force
such that $\Ver(k_i,\rho)$ is close to $\Ver(k_i,\$_k)$ FOR ALL $i$. Finally, the adversary outputs
many copies of $\rho$.)
\end{remark}

\subsection{OWSGs from Quantum Money with Pure Money States}
\label{sec:OWSGfromQmoney_pure}

\begin{theorem}\label{thm:money_pure}
If private-key quantum money schemes with pure quantum money states exist, then OWSGs exist.
\end{theorem}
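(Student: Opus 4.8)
The plan is to construct an OWSG $(\KeyGen,\StateGen,\Ver)$ directly from a private-key quantum money scheme $(\mathsf{QM}.\KeyGen,\mathsf{QM}.\Mint,\mathsf{QM}.\Ver)$ with \emph{pure} money states, by essentially reusing the money scheme with a minor adjustment to the verification. Set $\KeyGen\seteq\mathsf{QM}.\KeyGen$, set $\StateGen(k)$ to run $\mathsf{QM}.\Mint(k)$ and output the (pure) money state $\$_k=\ket{\$_k}\!\bra{\$_k}$, and define $\Ver(k',\phi_k)$ to run $\$_{k'}\gets\mathsf{QM}.\Mint(k')$ and then apply a \textsf{SWAP} test between $\$_{k'}$ and the given state $\phi_k$, outputting $\top$ iff the test passes. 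Correctness of the OWSG follows from correctness of $\mathsf{QM}.\Mint$ together with purity: since $\ket{\$_k}$ is pure, two independent copies give $|\langle\$_k|\$_k\rangle|^2 = 1$, so the \textsf{SWAP} test accepts with certainty (up to the negligible error in how faithfully $\Mint$ produces the nominal state), hence $\Pr[\top\gets\Ver(k,\phi_k)]\ge 1-\negl(\secp)$.

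For security, I would argue by contradiction: suppose a QPT adversary $\cA$, given $\phi_k^{\otimes t}$, outputs $k'$ with $\Pr[\top\gets\Ver(k',\phi_k)]\ge 1/\poly(\secp)$. By the standard \textsf{SWAP}-test analysis, acceptance probability is $\tfrac12+\tfrac12|\langle\$_{k'}|\$_k\rangle|^2$, so $\cA$ finding such a $k'$ implies $\Exp[|\langle\$_{k'}|\$_k\rangle|^2]\ge 1/\poly(\secp)$; in particular with inverse-polynomial probability $\cA$ produces a $k'$ whose minted state has inverse-polynomial fidelity with $\ket{\$_k}$. The key step is then a \emph{fidelity amplification / state-repair} move: given the honest copies $\ket{\$_k}^{\otimes(\text{poly})}$ plus many independently minted copies of $\ket{\$_{k'}}$ (which the forger can make itself, since it knows $k'$), one can use a quantum-state-combining procedure to produce a register that passes $\mathsf{QM}.\Ver(k,\cdot)$ with probability close to $1$ --- intuitively, projecting/filtering copies of $\ket{\$_{k'}}$ onto agreement with $\ket{\$_k}$, or simply observing that an inverse-polynomial-fidelity guess already lets one pass the money verification with inverse-polynomial probability. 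Running this in parallel across the $t$ honest copies yields roughly $t+1$ registers that each pass $\mathsf{QM}.\Ver(k,\cdot)$ with non-negligible probability, so $\mathsf{Count}$ returns $\ge t+1$ with non-negligible probability, contradicting the unforgeability of the money scheme. (A cleaner route: since $\cA$ effectively learns $k'$ with $|\langle\$_{k'}|\$_k\rangle|^2\ge 1/\poly$, it can mint polynomially many copies of $\ket{\$_{k'}}$; a counting/Chernoff argument shows that among $t+1$ such copies, with non-negligible probability at least $t+1$ of them pass $\mathsf{QM}.\Ver(k,\cdot)$ when we also recycle the $t$ honest copies --- actually it already has $t$ genuine copies plus the ability to generate extra near-copies, which is one more than $t$.)

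The main obstacle I anticipate is making precise the step that turns "a key $k'$ with inverse-polynomial fidelity to $\ket{\$_k}$" into "an extra money state that passes verification", i.e. ensuring the forger ends up with \emph{strictly more} than $t$ accepting registers rather than merely re-deriving the $t$ copies it was given. The resolution is that the adversary holds the $t$ honest copies \emph{and} can locally mint unboundedly many copies of $\ket{\$_{k'}}$; since each such minted copy independently passes $\mathsf{QM}.\Ver(k,\cdot)$ with probability $\ge\tfrac12+\tfrac12/\poly$ conditioned on the good event for $k'$, feeding $\mathsf{Count}$ the $t$ honest registers together with one freshly minted $\ket{\$_{k'}}$ register yields $t+1$ accepting registers with non-negligible probability --- and purity of the honest states is exactly what guarantees those $t$ original registers still pass verification with probability $1-\negl(\secp)$. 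This contradicts quantum money security and completes the proof. A subtlety worth flagging in the write-up is that this argument genuinely uses purity (so that honest copies are "as good as new" and the \textsf{SWAP}-based $\Ver$ behaves well); the mixed-state case is handled separately in \cref{sec:OWSGfromQmoney_symmetric} under a symmetry assumption on $\mathsf{QM}.\Ver$.
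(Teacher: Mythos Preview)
Your construction has a genuine gap at the very first step: the \textsf{SWAP} test accepts two pure states with probability $\tfrac12+\tfrac12|\langle\$_{k'}|\$_k\rangle|^2\ge \tfrac12$ \emph{always}, regardless of $k'$. Hence your $\Ver$ satisfies $\Pr[\top\gets\Ver(k',\phi_k)]\ge 1/2$ for every $k'$, and the OWSG security requirement (that this probability be $\le\negl(\secp)$) is violated by the trivial adversary that outputs an arbitrary fixed key. Correspondingly, your contradiction hypothesis ``$\Pr[\top\gets\Ver(k',\phi_k)]\ge 1/\poly$'' is vacuous and does not let you conclude $\mathbb{E}[|\langle\$_{k'}|\$_k\rangle|^2]\ge 1/\poly$. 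The paper avoids this by taking $\Ver(k',\phi_k)$ to be the projective measurement $\{|\$_{k'}\rangle\langle\$_{k'}|,\, I-|\$_{k'}\rangle\langle\$_{k'}|\}$, whose acceptance probability is exactly $|\langle\$_k|\$_{k'}\rangle|^2$; then non-negligible acceptance \emph{does} give non-negligible fidelity.

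Your reduction to money unforgeability is also confused on two points. First, you cannot both feed the $t$ honest copies to $\cA$ (to extract $k'$) and retain them ``as good as new'' for the forgery; after $\cA$ runs, those registers are gone. Second, you write that each minted $|\$_{k'}\rangle$ passes $\mathsf{QM}.\Ver(k,\cdot)$ with probability $\ge\tfrac12+\tfrac12/\poly$, but that is the \textsf{SWAP}-test formula, not a property of the (arbitrary) money verifier. The paper's reduction sidesteps both issues: after running $\cA$ on the $t$ received copies to get $k'$, the forger mints $\ell$ fresh copies of $|\$_{k'}\rangle$ and outputs \emph{only those}. On the good event $(k,k')\in S$ (fidelity $\ge 1/(2p)$) and $k\in T$ (honest verification $\ge 1-1/(8p)$), the trace-distance bound $\tfrac12\||\$_k\rangle\langle\$_k|-|\$_{k'}\rangle\langle\$_{k'}|\|_1\le\sqrt{1-1/(2p)}$ yields $\Pr[\top\gets\mathsf{QM}.\Ver(k,|\$_{k'}\rangle)]\ge 1/(8p)$; Hoeffding over $\ell=\poly$ independent minted copies then gives $\ge t+1$ acceptances with high probability. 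No recycling of the original $t$ copies is needed or possible.
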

\begin{remark}
For example, the private-key quantum money scheme of \cite{C:JiLiuSon18} has pure quantum money states.
\end{remark}
\begin{remark}
By using the equivalence between OWSGs and wOWSGs (\cref{thm:amplification_OWSG}),
this result can be improved to a stronger result (with a similar proof) that
private-key quantum money schemes with pure quantum money states and with weak security imply OWSGs.
Here, the weak security means that
there exists a polynomial $p$ such that for any QPT adversary $\cA$ and any polynomial $t$,
\begin{eqnarray*}
\Pr[\mathsf{Count}(k,\xi)\ge t+1
:k\gets\KeyGen(1^\secp),\$_k\gets\Mint(k),\xi\gets\cA(1^\secp,\$_k^{\otimes t})]
\le1-\frac{1}{p}.
\end{eqnarray*}
\end{remark}


\begin{proof}[Proof of \cref{thm:money_pure}]
Let $(\mathsf{QM}.\KeyGen,\mathsf{QM}.\Mint,\mathsf{QM}.\Ver)$ be a private-key quantum money scheme with pure money states.
From it, we construct a OWSG as follows.
\begin{itemize}
    \item 
    $\KeyGen(1^\secp)\to k:$
    Run $k\gets\mathsf{QM}.\KeyGen(1^\secp)$.
    Output $k$.
    \item
    $\StateGen(k)\to\phi_k:$
    Run $|\$_k\rangle\gets\mathsf{QM}.\Mint(k)$.
    Output $\phi_k\coloneqq |\$_k\rangle\langle\$_k|$.
    \item
    $\Ver(k',\phi_k)\to\top/\bot:$
    Parse $\phi_k=|\$_k\rangle\langle\$_k|$.
    Measure $|\$_k\rangle$ with the basis
    $\{|\$_{k'}\rangle\langle\$_{k'}|,I-|\$_{k'}\rangle\langle\$_{k'}|\}$,
    and output $\top$ if the first result is obtained.
    Output $\bot$ if the second result is obtained.
    (This measurement is done in the following way: generate $U(|k'\rangle|0...0\rangle)=|\$_{k'}\rangle|\eta_{k'}\rangle$,
    and discard the first register. Then apply $U^\dagger$ on $|\$_k\rangle|\eta_{k'}\rangle$, and measure all qubits
    in the computationl basis. If the result is $k'0...0$, accept. Otherwise, reject.)
\end{itemize}
The correctness is clear. Let us show the security. Assume that it is not secure.
Then, there exists a QPT adversary $\cA$, a polynomial $t$, and a polynomial $p$ such that
\begin{eqnarray*}
\sum_{k,k'}
\Pr[k\gets\mathsf{QM}.\KeyGen(1^\secp)]
    \Pr[k'\gets\cA(|\$_k\rangle^{\otimes t})]
    |\langle\$_k|\$_{k'}\rangle|^2
    \ge\frac{1}{p}.
    \end{eqnarray*}
    
Define the set
    \begin{eqnarray*}
    S\coloneqq\Big\{(k,k')~\Big|~|\langle\$_k|\$_{k'}\rangle|^2\ge
    \frac{1}{2p}\Big\}.
    \end{eqnarray*}
    Then, we have
    \begin{eqnarray*}
    \sum_{(k,k')\in S}\Pr[k\gets\mathsf{QM}.\KeyGen(1^\secp)]\Pr[k'\gets\cA(|\$_k\rangle^{\otimes t})]> \frac{1}{2p}.
    \end{eqnarray*}
    This is shown as follows.
   \begin{eqnarray*} 
   \frac{1}{p}&\le&
    \sum_{k,k'}\Pr[k\gets\mathsf{QM}.\KeyGen(1^\secp)]\Pr[k'\gets\cA(|\$_k\rangle^{\otimes t})]
    |\langle\$_k|\$_{k'}\rangle|^2\\
    &=&\sum_{(k,k')\in S}\Pr[k\gets\mathsf{QM}.\KeyGen(1^\secp)]\Pr[k'\gets\cA(|\$_k\rangle^{\otimes t})]
    |\langle\$_k|\$_{k'}\rangle|^2\\
    &&+\sum_{(k,k')\notin S}\Pr[k\gets\mathsf{QM}.\KeyGen(1^\secp)]\Pr[k'\gets\cA(|\$_k\rangle^{\otimes t})]
    |\langle\$_k|\$_{k'}\rangle|^2\\
    &<&
    \sum_{(k,k')\in S}\Pr[k\gets\mathsf{QM}.\KeyGen(1^\secp)]\Pr[k'\gets\cA(|\$_k\rangle^{\otimes t})]
    +\frac{1}{2p}.
   \end{eqnarray*} 
  Let us also define  
  \begin{eqnarray*}
  T\coloneqq\Big\{k~\Big|~\Pr[\top\gets\mathsf{QM}.\Ver(k,|\$_k\rangle)]
  \ge1-\frac{1}{8p}\Big\}.
  \end{eqnarray*}
  Then, 
  \begin{eqnarray*}
  \sum_{k\in T}\Pr[k\gets\mathsf{QM}.\KeyGen(1^\secp)]> 1-\negl(\secp). 
  \end{eqnarray*}
  This is shown as follows.
  \begin{eqnarray*}
  1-\negl(\secp)&\le&\sum_k\Pr[k\gets\mathsf{QM}.\KeyGen(1^\secp)]\Pr[\top\gets\mathsf{QM}.\Ver(k,|\$_k\rangle)]\\
  &=&\sum_{k\in T}\Pr[k\gets\mathsf{QM}.\KeyGen(1^\secp)]\Pr[\top\gets\mathsf{QM}.\Ver(k,|\$_k\rangle)]\\
  &&+\sum_{k\notin T}\Pr[k\gets\mathsf{QM}.\KeyGen(1^\secp)]\Pr[\top\gets\mathsf{QM}.\Ver(k,|\$_k\rangle)]\\
  &<& \sum_{k\in T}\Pr[k\gets\mathsf{QM}.\KeyGen(1^\secp)]\\
  &&+\Big(1-\frac{1}{8p}\Big)\Big(1-\sum_{k\in T}\Pr[k\gets\mathsf{QM}.\KeyGen(1^\secp)]\Big).
  \end{eqnarray*}
  Here, the first inequality is from the correctness of the quantum money scheme.
   
  Let us fix $(k,k')$ such that $(k,k')\in S$ and $k\in T$. 
  The probability of having such $(k,k')$ is, from the union bound, 
  \begin{eqnarray*}
  \sum_{(k,k')\in S \wedge k\in T}\Pr[k\gets\mathsf{QM}.\KeyGen(1^\secp)]\Pr[k'\gets\cA(|\$_k\rangle^{\otimes t})]
  &>& \frac{1}{2p}+1-\negl(\secp)-1\nonumber\\
  &=& \frac{1}{2p}-\negl(\secp).
  \end{eqnarray*}
  
     From the $\cA$, we construct a QPT adversary $\cB$ that breaks the security of the private-key quantum money 
    scheme as follows:
    On input $|\$_k\rangle^{\otimes t}$, it runs $k'\gets\cA(|\$_k\rangle^{\otimes t})$.
    It then runs $|\$_{k'}\rangle\gets\mathsf{QM}.\Mint(k')$ $\ell$ times, where 
  $\ell$ is a polynomial specified later,
    and outputs $\xi\coloneqq |\$_{k'}\rangle^{\otimes \ell}$.
    Let us show that thus defined $\cB$ breaks the security of the private-key quantum money scheme.
  Let 
   $v_j$ be the bit that is 1 if the output of $\mathsf{QM}.\Ver(k,\xi_j)$ is $\top$, and
   is 0 otherwise.
   Then, for any $(k,k')$ such that $(k,k')\in S$ and $k\in T$,
   \begin{eqnarray*}
   \Pr[v_j=1]
   &=&\Pr[\top\gets\mathsf{QM}.\Ver(k,\xi_j)]\\
   &=&\Pr[\top\gets\mathsf{QM}.\Ver(k,|\$_{k'}\rangle)]\\
   &\ge&\Pr[\top\gets\mathsf{QM}.\Ver(k,|\$_k\rangle)]-\sqrt{1-\frac{1}{2p}}\\
   &\ge&1-\frac{1}{8p}-\sqrt{1-\frac{1}{2p}}\\
   &\ge&\frac{1}{8p}
   \end{eqnarray*}
   for each $j\in[1,2,...,\ell]$.
   Here, in the first inequality, we have used the fact that
   $\Pr[1\gets\cD(|\$_k\rangle)]-\Pr[1\gets\cD(|\$_{k'}\rangle)]\le \sqrt{1-\frac{1}{2p}}$ for any algorithm $\cD$.
   This is because $|\langle\$_k|\$_{k'}\rangle|^2\ge\frac{1}{2p}$ for any $(k,k')\in S$.\footnote{Due to the relation between the fidelity and
   the trace distance, we have $\frac{1}{2}\||\$_k\rangle\langle\$_k|-|\$_{k'}\rangle\langle\$_{k'}|\|_1\le \sqrt{1-|\langle\$_k|\$_{k'}\rangle|^2}$,
   which means that $\langle\$_k|\Pi|\$_k\rangle-\langle\$_{k'}|\Pi|\$_{k'}\rangle\le\sqrt{1-|\langle\$_k|\$_{k'}\rangle|^2}$
   for any POVM element $\Pi$.}
   Moreover, in the second inequality, we have used the fact that 
   $\Pr[\top\gets\mathsf{QM}.\Ver(k,|\$_k\rangle)]\ge1-\frac{1}{8p}$ for any $k\in T$.
   Finally, in the last inequality, we have used the Bernoulli's inequality.\footnote{$(1+x)^r\le 1+rx$ for any real $r$ and $x$ such that $0\le r\le1$ and $x\ge-1$.}

  Let us take $\ell\ge \max(16p(t+1),16^2p^3)$.
  Then, for any $(k,k')$ such that $(k,k')\in S$ and $k\in T$,
    \begin{eqnarray*}
    \Pr[\mathsf{Count}(k,|\$_{k'}\rangle^{\otimes \ell})\ge t+1]
    &=&
    \Pr\Big[\sum_{j=1}^\ell v_j\ge t+1\Big]\\
    &\ge&
    \Pr\Big[\sum_{j=1}^\ell v_j\ge \frac{\ell}{16p}\Big]\\
    &=&
    \Pr\Big[\sum_{j=1}^\ell v_j\ge \frac{\ell}{8p} -\frac{\ell}{16p}\Big]\\
    &\ge&
   \Pr\Big[\sum_{j=1}^\ell v_j\ge \mathbb{E}(\sum_{j=1}^\ell v_j)-\frac{\ell}{16p}\Big]\\
   &\ge&1-2\exp\Big[-\frac{2\ell}{16^2p^2}\Big]\\
   &\ge&1-2e^{-2p}.
   \end{eqnarray*}
   Here, in the third inequality, we have used Hoeffding's inequality.
The probability that $\cB$ breaks the security of the quantum money scheme is therefore
    \begin{eqnarray*} 
    &&\sum_{k,k'}\Pr[k\gets\mathsf{QM}.\KeyGen(1^\secp)]\Pr[k'\gets\cA(|\$_k\rangle^{\otimes t})]
    \Pr[\mathsf{Count}(k,|\$_{k'}\rangle^{\otimes \ell})\ge t+1]\\
    &\ge&\sum_{(k,k')\in S\wedge k\in T}\Pr[k\gets\mathsf{QM}.\KeyGen(1^\secp)]\Pr[k'\gets\cA(|\$_k\rangle^{\otimes t})]
    \Pr[\mathsf{Count}(k,|\$_{k'}\rangle^{\otimes \ell})\ge t+1]\\
    &\ge&
   (1-2e^{-2p})
    \sum_{(k,k')\in S\wedge k\in T}\Pr[k\gets\mathsf{QM}.\KeyGen(1^\secp)]\Pr[k'\gets\cA(|\$_k\rangle^{\otimes t})]\\
   &\ge&(1-2e^{-2p})
  \Big(\frac{1}{2p}-\negl(\secp)\Big),
   \end{eqnarray*} 
   which is non-negligible. The $\cB$ therefore breaks the security of the private-key quantum money scheme.
\end{proof}

\subsection{OWSGs from Quantum Money with Symmetric Verifiability}
\label{sec:OWSGfromQmoney_symmetric}

We consider the following restriction for quantum money. 

\begin{definition}[Symmetric-verifiability]
We say that a private-key quantum money scheme satisfies the symmetric-verifiability
if $\Pr[\top\gets\Ver(k,\$_{k'})]=\Pr[\top\gets\Ver(k',\$_{k})]$
for all $k\neq k'$.
\end{definition}

\begin{remark}
For example, if all money states are pure, and $\Ver(\alpha,\rho)$ is the following algorithm,
the symmetric-verifiability is satisfied:
Measure $\rho$ with the basis $\{|\$_{\alpha}\rangle\langle\$_\alpha|,I-|\$_\alpha\rangle\langle\$_\alpha|\}$.
If the first result is obtained, output $\top$. Otherwise, output $\bot$.
\end{remark}

\begin{theorem}\label{thm:money_symmetric}
If private-key quantum money schemes with symmetric-verifiability
exist, then OWSGs exist.
\end{theorem}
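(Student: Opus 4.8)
The plan is to follow the template of the proof of \cref{thm:money_pure}, with the fidelity/trace-distance step (which exploited purity) replaced by the symmetric-verifiability property. First I would build the candidate OWSG $(\KeyGen,\StateGen,\Ver)$ from a symmetric-verifiable private-key quantum money scheme $(\mathsf{QM}.\KeyGen,\mathsf{QM}.\Mint,\mathsf{QM}.\Ver)$ in the obvious way: $\KeyGen\seteq\mathsf{QM}.\KeyGen$; $\StateGen(k)$ runs $\$_k\gets\mathsf{QM}.\Mint(k)$ and outputs $\phi_k\seteq\$_k$; and $\Ver(k',\phi_k)$ simply runs $\mathsf{QM}.\Ver(k',\phi_k)$. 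Correctness of the OWSG is immediate from correctness of the money scheme. For security I would argue by contradiction: suppose there are a QPT $\cA$ and polynomials $t,p$ with
\[
\sum_{k,k'}\Pr[k\gets\mathsf{QM}.\KeyGen(1^\secp)]\,\Pr[k'\gets\cA(\$_k^{\otimes t})]\,\Pr[\top\gets\mathsf{QM}.\Ver(k',\$_k)]\ge\frac1p .
\]

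Next, exactly as in the pure case, I would pass to the set $S\seteq\{(k,k'):\Pr[\top\gets\mathsf{QM}.\Ver(k',\$_k)]\ge\frac1{2p}\}$: splitting the sum above over $S$ and its complement and bounding the complement's contribution by $\frac1{2p}$ yields $\sum_{(k,k')\in S}\Pr[k\gets\mathsf{QM}.\KeyGen(1^\secp)]\Pr[k'\gets\cA(\$_k^{\otimes t})]\ge\frac1{2p}$. The one new ingredient is the following: for every $(k,k')\in S$ with $k\ne k'$, symmetric-verifiability gives $\Pr[\top\gets\mathsf{QM}.\Ver(k,\$_{k'})]=\Pr[\top\gets\mathsf{QM}.\Ver(k',\$_k)]\ge\frac1{2p}$, while in the degenerate case $k=k'$ correctness of the money scheme gives $\Pr[\top\gets\mathsf{QM}.\Ver(k,\$_{k'})]\ge1-\negl(\secp)\ge\frac1{2p}$. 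Hence for every $(k,k')\in S$, a freshly minted copy of $\$_{k'}$ passes $\mathsf{QM}.\Ver(k,\cdot)$ with probability at least $\frac1{2p}$.

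Finally I would define the money adversary $\cB$: on input $\$_k^{\otimes t}$, run $k'\gets\cA(\$_k^{\otimes t})$, mint $\ell$ fresh copies of $\$_{k'}$, and output $\xi\seteq\$_{k'}^{\otimes\ell}$ on $\ell$ registers, for a polynomial $\ell$ fixed below. For each fixed $(k,k')\in S$, the registers of $\xi$ are i.i.d., so $\mathsf{Count}(k,\xi)=\sum_{j=1}^{\ell}v_j$ where the $v_j\in\bit$ are i.i.d. with $\mathbb{E}[v_j]=\Pr[\top\gets\mathsf{QM}.\Ver(k,\$_{k'})]\ge\frac1{2p}$, hence $\mathbb{E}[\mathsf{Count}(k,\xi)]\ge\frac{\ell}{2p}$. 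Choosing $\ell\seteq\max(4p(t+1),p^2\secp)$ makes $\frac{\ell}{4p}\ge t+1$, and Hoeffding's inequality gives $\Pr[\mathsf{Count}(k,\xi)\ge t+1]\ge\Pr[\mathsf{Count}(k,\xi)\ge\frac{\ell}{4p}]\ge1-2\exp(-\frac{\ell}{8p^2})\ge1-\negl(\secp)$. Averaging over $(k,k')\in S$, $\cB$ wins the money game with probability at least $(1-\negl(\secp))\cdot\frac1{2p}$, which is non-negligible, contradicting security of the money scheme; therefore the constructed OWSG is secure. I do not expect a serious obstacle here, since the whole difficulty of \cref{thm:money_pure} (relating $\$_{k'}$ passing $\mathsf{QM}.\Ver(k,\cdot)$ to $\cA$'s success, which needed the fidelity bound) is handed to us directly by symmetric-verifiability; the only point requiring care is that the symmetric-verifiability definition constrains only pairs $k\ne k'$, so the $k=k'$ case must be dispatched separately via correctness, as above. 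As with the pure case, combining with \cref{thm:amplification_OWSG} also upgrades the hypothesis to weak security.
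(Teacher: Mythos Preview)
Your proposal is correct and matches the paper's proof essentially step for step: same construction, same set $S$, same adversary $\cB$ that mints $\$_{k'}^{\otimes\ell}$, and the same Hoeffding argument (the paper takes $\ell\ge\max(4p(t+1),16p^3)$, but your choice works equally well). One minor remark: your separate handling of $k=k'$ via correctness is unnecessary, since when $k=k'$ the symmetric-verifiability equality is a tautology and membership $(k,k')\in S$ already yields $\Pr[\top\gets\mathsf{QM}.\Ver(k,\$_{k'})]\ge\tfrac{1}{2p}$ directly.
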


\begin{remark}
By using the equivalence between OWSGs and wOWSGs (\cref{thm:amplification_OWSG}),
this result can be improved to a stronger result (with a similar proof) that
private-key quantum money schemes with symmetric-verifiability and with weak security imply OWSGs.
Here, the weak security means that
there exists a polynomial $p$ such that for any QPT adversary $\cA$ and any polynomial $t$,
\begin{eqnarray*}
\Pr[\mathsf{Count}(k,\xi)\ge t+1
:k\gets\KeyGen(1^\secp),\$_k\gets\Mint(k),\xi\gets\cA(1^\secp,\$_k^{\otimes t})]
\le1-\frac{1}{p}.
\end{eqnarray*}
\end{remark}

\ifnum\submission=1
The proof of \cref{thm:money_symmetric} is similar to that of \cref{thm:money_pure}.
For a proof, see the full version.
\else
\begin{proof}[Proof of \cref{thm:money_symmetric}]
Let $(\mathsf{QM}.\KeyGen,\mathsf{QM}.\Mint,\mathsf{QM}.\Ver)$ be a private-key quantum money scheme with the symmetric verifiability.
From it, we construct a OWSG as follows.
\begin{itemize}
    \item 
    $\KeyGen(1^\secp)\to k:$
    Run $k\gets\mathsf{QM}.\KeyGen(1^\secp)$.
    Output $k$.
    \item
    $\StateGen(k)\to\phi_k:$
    Run $\$_k\gets\mathsf{QM}.\Mint(k)$.
    Output $\phi_k\coloneqq \$_k$.
    \item
    $\Ver(k',\phi_k)\to\top/\bot:$
    Parse $\phi_k=\$_k$.
    Run $v\gets\mathsf{QM}.\Ver(k',\$_k)$.
    Output $v$. 
\end{itemize}
The correctness is clear. Let us show the security. Assume that it is not secure.
Then, there exists a QPT adversary $\cA$, a polynomial $t$, and a polynomial $p$ such that
\begin{eqnarray*}
\sum_{k,k'}
\Pr[k\gets\mathsf{QM}.\KeyGen(1^\secp)]
    \Pr[k'\gets\cA(\$_k^{\otimes t})]
    \Pr[\top\gets\mathsf{QM}.\Ver(k',\$_k)]
    \ge\frac{1}{p}.
    \end{eqnarray*}
    
Define the set
    \begin{eqnarray*}
    S\coloneqq\Big\{(k,k')~\Big|~\Pr[\top\gets\mathsf{QM}.\Ver(k',\$_k)]\ge
    \frac{1}{2p}\Big\}.
    \end{eqnarray*}
    Then, we have
    \begin{eqnarray*}
    \sum_{(k,k')\in S}\Pr[k\gets\mathsf{QM}.\KeyGen(1^\secp)]\Pr[k'\gets\cA(\$_k^{\otimes t})]> \frac{1}{2p}.
    \end{eqnarray*}
    This is shown as follows.
   \begin{eqnarray*} 
   \frac{1}{p}&\le&
    \sum_{k,k'}\Pr[k\gets\mathsf{QM}.\KeyGen(1^\secp)]\Pr[k'\gets\cA(\$_k^{\otimes t})]
    \Pr[\top\gets\mathsf{QM}.\Ver(k',\$_k)]\\
    &=&\sum_{(k,k')\in S}\Pr[k\gets\mathsf{QM}.\KeyGen(1^\secp)]\Pr[k'\gets\cA(\$_k^{\otimes t})]
    \Pr[\top\gets\mathsf{QM}.\Ver(k',\$_k)]\\
    &&+\sum_{(k,k')\notin S}\Pr[k\gets\mathsf{QM}.\KeyGen(1^\secp)]\Pr[k'\gets\cA(\$_k^{\otimes t})]
    \Pr[\top\gets\mathsf{QM}.\Ver(k',\$_k)]\\
    &<&
    \sum_{(k,k')\in S}\Pr[k\gets\mathsf{QM}.\KeyGen(1^\secp)]\Pr[k'\gets\cA(\$_k^{\otimes t})]
    +\frac{1}{2p}.
   \end{eqnarray*} 
   
     From the $\cA$, we construct a QPT adversary $\cB$ that breaks the security of the private-key quantum money 
    scheme as follows:
    On input $\$_k^{\otimes t}$, it runs $k'\gets\cA(\$_k^{\otimes t})$.
    It then runs $\$_{k'}\gets\mathsf{QM}.\Mint(k')$ $\ell$ times, where 
  $\ell$ is a polynomial specified later,
    and outputs $\xi\coloneqq \$_{k'}^{\otimes \ell}$.
    Let us show that thus defined $\cB$ breaks the security of the private-key quantum money scheme.
  Let 
   $v_j$ be the bit that is 1 if the output of $\mathsf{QM}.\Ver(k,\xi_j)$ is $\top$, and
   is 0 otherwise.
   Then, for any $(k,k')\in S$,
   \begin{eqnarray*}
   \Pr[v_j=1]
   &=&\Pr[\top\gets\mathsf{QM}.\Ver(k,\xi_j)]\\
   &=&\Pr[\top\gets\mathsf{QM}.\Ver(k,\$_{k'})]\\
   &=&\Pr[\top\gets\mathsf{QM}.\Ver(k',\$_k)]\\
   &\ge&\frac{1}{2p}
   \end{eqnarray*}
   for each $j\in[1,2,...,\ell]$.
  Let us take $\ell\ge \max(4p(t+1),16p^3)$.
  Then, for any $(k,k')\in S$,
    \begin{eqnarray*}
    \Pr[\mathsf{Count}(k,\$_{k'}^{\otimes \ell})\ge t+1]
    &=&
    \Pr\Big[\sum_{j=1}^\ell v_j\ge t+1\Big]\\
    &\ge&
    \Pr\Big[\sum_{j=1}^\ell v_j\ge \frac{\ell}{4p}\Big]\\
    &=&
    \Pr\Big[\sum_{j=1}^\ell v_j\ge \frac{\ell}{2p} -\frac{\ell}{4p}\Big]\\
    &\ge&
   \Pr\Big[\sum_{j=1}^\ell v_j\ge \mathbb{E}(\sum_{j=1}^\ell v_j)-\frac{\ell}{4p}\Big]\\
   &\ge&1-2\exp\Big[-\frac{2\ell}{16p^2}\Big]\\
   &\ge&1-2e^{-2p}.
   \end{eqnarray*}
   
The probability that the $\cB$ breaks the security of the quantum money scheme is therefore
    \begin{eqnarray*} 
    &&\sum_{k,k'}\Pr[k\gets\mathsf{QM}.\KeyGen(1^\secp)]\Pr[k'\gets\cA(\$_k^{\otimes t})]
    \Pr[\mathsf{Count}(k,\$_{k'}^{\otimes \ell})\ge t+1]\\
    &\ge&\sum_{(k,k')\in S}\Pr[k\gets\mathsf{QM}.\KeyGen(1^\secp)]\Pr[k'\gets\cA(\$_k^{\otimes t})]
    \Pr[\mathsf{Count}(k,\$_{k'}^{\otimes \ell})\ge t+1]\\
    &\ge&
   (1-2e^{-2p})
    \sum_{(k,k')\in S}\Pr[k\gets\mathsf{QM}.\KeyGen(1^\secp)]\Pr[k'\gets\cA(\$_k^{\otimes t})]\\
   &\ge&(1-2e^{-2p})
  \frac{1}{2p},
   \end{eqnarray*} 
   which is non-negligible. The $\cB$ therefore breaks the security of the private-key quantum money scheme.
   \end{proof}
\fi

\section{QPOTP}
\label{sec:QPOTP}
In this section, we first define (IND-based) QPOTP schemes (\cref{sec:QPOTP_def}).
We then show that QPOTP schemes imply OWSGs (\cref{sec:OWSGfromQPOTP}),
and that single-copy-secure QPOTP schemes imply EFI pairs (\cref{sec:EFIfromQPOTP}).

\subsection{Definition of QPOTP}
\label{sec:QPOTP_def}

Quantum pseudo one-time pad schemes are defined as follows.
\begin{definition}[(IND-based) quantum pseudo one-time pad (QPOTP)]
\label{def:OTP}
An (IND-based) quantum pseudo one-time pad (QPOTP) scheme with the key length $\kappa$ and
the plaintext length $\ell$ ($\ell>\kappa$) is a set of algorithms $(\KeyGen,\Enc,\Dec)$ such that 
\begin{itemize}
\item 
$\KeyGen(1^\secp)\to \sk:$
It is a QPT algorithm that, on input the security parameter $\secp$, outputs a classical secret key $\sk\in\bit^\kappa$.
    \item 
    $\Enc(\sk,x)\to\ct:$ It is a QPT algorithm that, on input $\sk$ and 
    a classical plaintext message $x\in\bit^\ell$,
    outputs an $\ell n$-qubit quantum ciphertext $\ct$.
    \item
    $\Dec(\sk,\ct)\to x':$ It is a QPT algorithm that, on input $\sk$ and $\ct$, outputs $x'\in\bit^\ell$.
\end{itemize}
We require the following correctness and security.

\paragraph{\bf Correctness:}
For any $x\in\bit^\ell$,
\begin{eqnarray*}
\Pr[x\gets\Dec(\sk,\ct):\sk\gets\KeyGen(1^\secp),\ct\gets\Enc(\sk,x)]\ge1-\negl(\secp).
\end{eqnarray*}
\paragraph{\bf Security:}
For any $x_0,x_1\in\bit^\ell$, any QPT adversary $\cA$, and any polynomial $t$,
\begin{eqnarray*}
&&|\Pr[1\gets\cA(\ct_0^{\otimes t}):\sk\gets\KeyGen(1^\secp),\ct_0\gets\Enc(\sk,x_0)]\\
&&-\Pr[1\gets\cA(\ct_1^{\otimes t}):\sk\gets\KeyGen(1^\secp),\ct_1\gets\Enc(\sk,x_1)]|
\le\negl(\secp).
\end{eqnarray*}
\end{definition}

\begin{definition}
We say that a QPOTP scheme is single-copy-secure if the security holds only for $t=1$.
\end{definition}

\begin{remark}
Note that the above definition of QPOTP is different from that of \cite{C:AnaQiaYue22} in the following two points.
First, we consider a general secret key generation QPT algorithm, while they consider uniform sampling of the secret key.
Second, we consider the IND-based version of the security, while
the security definition of \cite{C:AnaQiaYue22} is as follows:
For any $x\in\bit^\ell$, any QPT adversary $\cA$, and any polynomial $t$,
\begin{eqnarray*}
&&|\Pr[1\gets\cA(\ct^{\otimes t}):\sk\gets\bit^\kappa,\ct\gets\Enc(\sk,x)]\\
&&-\Pr[1\gets\cA((|\psi_1\rangle\otimes...\otimes|\psi_\ell\rangle)^{\otimes t})
:|\psi_1\rangle,...,|\psi_\ell\rangle\gets\mu_n]|
\le\negl(\secp),
\end{eqnarray*}
where $|\psi\rangle\gets\mu_n$ means the Haar random sampling of $n$-qubit states.
It is clear that the security definition of \cite{C:AnaQiaYue22} implies
our IND-based security, and therefore if QPOTP schemes of \cite{C:AnaQiaYue22} exist,
those of \cref{def:OTP} exist. Since our results are constructions of OWSGs and EFI pairs from QPOTP, the above modification only makes our results stronger. 
\end{remark}

\begin{remark}
QPOTP is constructed from PRSGs~\cite{C:AnaQiaYue22}.
\end{remark}

\subsection{OWSGs from QPOTP}
\label{sec:OWSGfromQPOTP}

\begin{theorem}\label{thm:QPOTP_OWSG}
If QPOTP schemes with $\kappa<\ell$ exist, then OWSGs exist.
\end{theorem}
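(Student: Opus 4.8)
The plan is to build a OWSG directly out of the QPOTP $(\KeyGen,\Enc,\Dec)$, taking the OWSG key to be the QPOTP key and the OWSG state to be an encryption of a \emph{fixed} message. Concretely: $\KeyGen'(1^\secp)$ runs $\sk\gets\KeyGen(1^\secp)$ and outputs $k\seteq\sk$; $\StateGen'(\sk)\seteq\Enc(\sk,0^\ell)$; and $\Ver'(\sk',\ct)$ recomputes $\Enc(\sk',0^\ell)$ and tests whether it is consistent with $\ct$ — when ciphertexts are pure this is the canonical projective measurement of $\ct$ against $\Enc(\sk',0^\ell)$, and in general one either passes to the pure-state variant of the primitive or additionally checks $\Dec(\sk',\ct)=0^\ell$. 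Correctness of $(\KeyGen',\StateGen',\Ver')$ is immediate from correctness of the QPOTP. Note that the hypothesis $\kappa<\ell$ is essential: for $\kappa\ge\ell$ the QPOTP can be the information-theoretic one-time pad, which implies nothing.

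Before the reduction I would amplify the key-to-message gap. Running $r$ independent instances of the QPOTP in parallel (independent keys, one message block each) is again a QPOTP, with key length $r\kappa$, message length $r\ell$, and gap $r(\ell-\kappa)$; IND security of the repetition follows from that of a single instance by a standard hybrid over the $r$ coordinates (the reduction generates all but one coordinate itself). Hence I may assume $\ell-\kappa\ge\secp$.

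For security I argue the contrapositive. Suppose a QPT adversary $\cA$ and a polynomial $t$ break the OWSG with non-negligible probability $\varepsilon$. I build a QPT distinguisher $\cB$ against IND security: it sets $x_0\seteq 0^\ell$ and fixes a ``good'' message $x_1\in\bit^\ell$ chosen by the averaging argument below; it receives $\ct^{\otimes(t+1)}$ with $\ct\gets\Enc(\sk,x_\beta)$ for the secret challenge bit $\beta$; it runs $\sk'\gets\cA(\ct^{\otimes t})$ and then runs $\Ver'(\sk',\ct)$ on the remaining copy, outputting $0$ if $\Ver'$ accepts and $1$ otherwise. When $\beta=0$, $\cB$ has handed $\cA$ an honest OWSG challenge $\Enc(\sk,0^\ell)^{\otimes t}$, so $\Ver'$ accepts with probability at least $\varepsilon$, i.e.\ $\Pr[\cB\to 0\mid\beta=0]\ge\varepsilon$. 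The crux is that $\Pr[\cB\to 0\mid\beta=1]$ is negligible, i.e.\ an encryption of $x_1$ is essentially never $\Ver'$-consistent with any key, and this is where $\kappa<\ell$ enters by counting: there are only $2^{\kappa}$ keys $\sk'$, hence at most $2^{\kappa}$ candidate re-encryptions $\Enc(\sk',0^\ell)$, whereas by correctness of decryption the ciphertexts $\{\Enc(\sk,x)\}_{x\in\bit^\ell}$ of a fixed $\sk$ are (near-)perfectly distinguishable and hence pairwise (near-)orthogonal, so for uniformly random $x$ the state $\Enc(\sk,x)$ has overlap above a threshold $\delta$ with one of the $2^{\kappa}$ candidates only with probability $O(2^{\kappa-\ell}/\delta)$; averaging over $x$ (Markov) and optimizing $\delta=\Theta(2^{(\kappa-\ell)/2})$ fixes a single $x_1$ for which $\Pr[\cB\to0\mid\beta=1]\le O(2^{-(\ell-\kappa)/2})=\negl(\secp)$ since $\ell-\kappa\ge\secp$. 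Thus $\cB$ has advantage $\varepsilon-\negl(\secp)$, contradicting IND security.

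The step I expect to be the main obstacle is making both the verification and the counting argument work for general QPOTP schemes whose encryption is randomized and whose ciphertexts are mixed: then ``re-encryption equality'' has no test with negligible completeness error, and the near-orthogonality argument must be phrased via trace distances / fidelities of ciphertext ensembles rather than via orthogonality in a Hilbert space. A clean route is to carry out the above for pure (deterministic) ciphertexts and reduce the general case to it; alternatively one keeps only the $\Dec(\sk',\ct)=0^\ell$ check in $\Ver'$, so that verification is always well defined, at the cost of running the counting argument over decryption preimage regions instead of over states — which additionally requires arguing that a successful OWSG-breaker cannot exploit ``spurious'' keys that happen to decrypt encryptions of many messages to $0^\ell$.
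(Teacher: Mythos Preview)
Your plan has a genuine gap at exactly the place you flagged, and unfortunately neither of your two proposed fixes closes it.

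For the $\Dec$-based verification (``accept iff $\Dec(\sk',\ct)=0^\ell$''): this is not merely hard to analyze, it is \emph{false} as a OWSG construction. Take any QPOTP and modify it so that one fixed key $\sk^\star$ (outside the support of $\KeyGen$, or of negligible mass) has $\Dec(\sk^\star,\cdot)\equiv 0^\ell$; correctness and IND security are unchanged, but the adversary that always outputs $\sk^\star$ breaks your OWSG with probability $1$. So the ``spurious key'' worry is a real attack, not a proof artifact, and no counting over ``decryption preimage regions'' can rule it out.

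For the projection-based verification: beyond the pure/deterministic case the test is not well-defined (two fresh runs of $\Enc(\sk,0^\ell)$ need not coincide, so even correctness can fail), and there is no black-box way to ``reduce the general case to pure deterministic ciphertexts'' --- purifying $\Enc$ entangles the ciphertext with encryption randomness that the verifier does not have.

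The paper sidesteps both problems with one idea you are missing: put a \emph{random} message into the key. Set $k=(\sk,x)$ with $x\gets\bit^\ell$, let $\phi_k=\Enc(\sk,x)\otimes|x\rangle\langle x|$, and verify $k'=(\sk',x')$ by checking $x'=x$ and $\Dec(\sk',\ct)=x$. This verification is always well-defined for arbitrary mixed, randomized QPOTP, and it kills the spurious-key attack: a key that decrypts everything to some fixed $z$ helps only when $x=z$. In the reduction, when the challenge is $\Enc(\sk,x_1)$ but the adversary is told $x_0$, the event ``$\exists\,\sk'$ with $\Dec(\sk',\ct)=x_0$'' has probability at most $2^{\kappa}/2^{\ell}\le 1/2$ by the same union-bound you sketched --- but now the bound is unconditional, with no orthogonality or purity needed. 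This yields only a \emph{weak} OWSG (constant security gap rather than negligible), and the paper finishes by invoking the wOWSG$\Rightarrow$OWSG amplification theorem; your parallel-repetition step to enlarge $\ell-\kappa$ is then unnecessary.
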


\begin{proof}[Proof of \cref{thm:QPOTP_OWSG}]
Let $(\mathsf{OTP}.\KeyGen,\mathsf{OTP}.\Enc,\mathsf{OTP}.\Dec)$ be a QPOTP scheme
with $\kappa<\ell$. From it, we construct a wOWSG as follows.\footnote{
A similar proof idea was given in Lemma 4.6 of \cite{SIAM:GGKT05}. However, the direct application of the proof will not work, 
because ciphertexts (and therefore output states of OWSGs) are quantum and the verification of "preimages" is done by the additional verification algorithm.}
(From \cref{thm:amplification_OWSG}, it is enough for the existence of OWSGs.)
\begin{itemize}
    \item 
    $\KeyGen(1^\secp)\to k:$
    Run $\sk\gets\mathsf{OTP}.\KeyGen(1^\secp)$.
    Choose $x\gets\bit^\ell$.
    Output $k\coloneqq(\sk,x)$.
    \item
    $\StateGen(k)\to\phi_k:$
    Parse $k=(\sk,x)$.
    Run $\ct_{\sk,x}\gets\mathsf{OTP}.\Enc(\sk,x)$.
    Output $\phi_k\coloneqq \ct_{\sk,x}\otimes|x\rangle\langle x|$.
    \item
    $\Ver(k',\phi_k)\to\top/\bot:$
    Parse $k'=(\sk',x')$.
    Parse $\phi_k=\ct_{\sk,x}\otimes|x\rangle\langle x|$.
    Run $x''\gets\mathsf{OTP}.\Dec(\sk',\ct_{\sk,x})$.
    If $x''=x'=x$, output $\top$. Otherwise, output $\bot$.
\end{itemize}

The correctness is clear. Let us show the security. Assume that it is not secure.
It means that for any polynomial $p$ there exist a QPT adversary $\cA$ and a polynomial $t$ such that
\begin{eqnarray}
\Pr
\left[
    x'= x''=x:
    \begin{array}{ll}
    \sk\gets\mathsf{OTP}.\KeyGen(1^\secp),\\
    x\gets\bit^\ell,\\
    \ct_{\sk,x}\gets\mathsf{OTP}.\Enc(\sk,x),\\
    (\sk',x')\gets\cA(\ct_{\sk,x}^{\otimes t}\otimes |x\rangle\langle x|^{\otimes t})\\
    x''\gets\mathsf{OTP}.\Dec(\sk',\ct_{\sk,x})
    \end{array}
    \right]\ge1-\frac{1}{p}.
    \label{eq:OWSGfromQPOTP1}
    \end{eqnarray}
    From this $\cA$, we construct a QPT adversary $\cB$ that breaks the security of the QPOTP scheme as follows.
    Let $b\in\bit$ be the parameter of the following security game.
   \begin{enumerate}
       \item 
       $\cB$ chooses $x_0,x_1\gets\bit^\ell$, and sends them to the challenger $\cC$.
       \item
       $\cC$ runs $\sk\gets\mathsf{OTP}.\KeyGen(1^\secp)$.
       \item
       $\cC$ runs $\ct_{\sk,x_b}\gets\mathsf{OTP}.\Enc(\sk,x_b)$ $t+1$ times.
       \item
       $\cC$ sends $\ct_{\sk,x_b}^{\otimes t+1}$ to $\cB$.
       \item
      $\cB$ runs $(\sk',x')\gets\cA(\ct_{\sk,x_b}^{\otimes t}\otimes|x_0\rangle\langle x_0|^{\otimes t})$. 
      \item
     $\cB$ runs $x''\gets\mathsf{OTP}.\Dec(\sk',\ct_{k,x_b})$. 
     If $x'=x''=x_0$, $\cB$ outputs $b'=0$.
     Otherwise, it outputs $b'=1$.
   \end{enumerate} 
   It is clear that $\Pr[b'=0|b=0]$ is equivalent to the left-hand-side of Eq.~(\ref{eq:OWSGfromQPOTP1}).
   On the other hand,
   \begin{eqnarray*}
&&   \Pr[b'=0|b=1]\\
&=&
\frac{1}{2^{2\ell}}\sum_{x_0,x_1,\sk,\sk'}
   \Pr[\sk\gets\mathsf{OTP}.\KeyGen(1^\secp)]
   \Pr[\sk'\gets\cA(\ct_{\sk,x_1}^{\otimes t}\otimes|x_0\rangle\langle x_0|^{\otimes t})]\\
&&\times    \Pr[x_0 \gets\mathsf{OTP}.\Dec(\sk',\ct_{\sk,x_1})]\\
&\le&\frac{1}{2^{2\ell}}\sum_{x_0,x_1,\sk,\sk'}
   \Pr[\sk\gets\mathsf{OTP}.\KeyGen(1^\secp)]
   \Pr[x_0 \gets\mathsf{OTP}.\Dec(\sk',\ct_{\sk,x_1})]\\
 &=&\frac{1}{2^{2\ell}}\sum_{x_1,\sk,\sk'}
   \Pr[\sk\gets\mathsf{OTP}.\KeyGen(1^\secp)]
   \sum_{x_0}\Pr[x_0 \gets\mathsf{OTP}.\Dec(\sk',\ct_{\sk,x_1})]\\
  &=&\frac{1}{2^{2\ell}}\sum_{x_1,\sk,\sk'}
   \Pr[\sk\gets\mathsf{OTP}.\KeyGen(1^\secp)]\\
&=&\frac{2^\kappa}{2^\ell}
\le\frac{1}{2}.
   \end{eqnarray*}
   Therefore 
$|\Pr[b'=0|b=0]-\Pr[b'=0|b=1]|$ is non-negligible, which means that the $\cB$ breaks the security of
the QPOTP.
\end{proof}

\subsection{EFI Pairs from Single-Copy-Secure QPOTP}
\label{sec:EFIfromQPOTP}

\ifnum\submission=1
\else
We will use the following result, which is (implicitly) shown in \cite{LC19}.\footnote{See Theorem 4 of \cite{LC19}.
$\rho_0$ and $\rho_1$ correspond to $\rho_{MC}$ and $\sigma_{MC}$, respectively. Moreover,
Take $\epsilon=\gamma=\negl(\secp)$.}
\begin{theorem}[\cite{LC19}]
\label{theorem:LC19}
Let $\KeyGen(1^\secp)\to k$
be an algorithm that, on input the security parameter $\secp$,
outputs a classical secret key $k\in\bit^\kappa$.
Let $\{\Enc^k,\Dec^k\}_k$ be quantum operations. 
Assume that the following is satisfied:
For any $\rho_{\regA,\regB}$,
\begin{eqnarray}
\frac{1}{2}\Big\|
\sum_k\Pr[k]
(\Dec^k_{\regA}\otimes I_{\regB})
(\Enc^k_{\regA}\otimes I_{\regB})
(\rho_{\regA,\regB})-\rho_{\regA,\regB}\Big\|_1\le \negl(\secp),
\label{SVOWSG_ITS_correctness}
\end{eqnarray}
where $\Pr[k]\coloneqq \Pr[k\gets\KeyGen(1^\secp)]$.
Let us define 
\begin{eqnarray*}
\rho_0&\coloneqq&\sum_{k}\Pr[k]
(\Enc^k_{\regA}\otimes I_{\regB})|\Psi\rangle\langle\Psi|_{\regA,\regB},\\
\rho_1&\coloneqq&\sum_{k}\Pr[k]
(\Enc^k_{\regA}\otimes I_{\regB})\Big(|0^n\rangle\langle0^n|_\regA\otimes\frac{I^{\otimes n}}{2^n}_{\regB}\Big),
\end{eqnarray*}
where $|\Psi\rangle_{\regA,\regB}$ is the maximally entangled state over $n$-qubit registers $\regA$ and $\regB$.
If $\frac{1}{2}\|\rho_0-\rho_1\|_1\le\negl(\secp)$,
then $\kappa\ge 2n+\log(1-\negl(\secp))$. 
\end{theorem}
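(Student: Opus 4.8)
The plan is to establish a converse bound in the style of the lower bounds for (approximate) quantum one-time pads: a correct and statistically secure encryption of $n$ qubits whose key is a \emph{classical} string must use roughly $2n$ bits of key, with the ``bit-flip plus phase-flip'' factor of $2$ coming precisely from the key being classical. I would work with the tripartite state
\[
\sigma_{\regA\regB\mathbf{K}}\seteq\sum_{k}\Pr[k]\,(\Enc^k_{\regA}\otimes I_{\regB})|\Psi\rangle\!\langle\Psi|_{\regA\regB}\otimes|k\rangle\!\langle k|_{\mathbf{K}},
\]
where $\mathbf{K}$ is the classical key register and $\Pr[k]\seteq\Pr[k\gets\KeyGen(1^\secp)]$, so that $\Tr_{\mathbf{K}}\sigma=\rho_0$ and, since tracing out $\regA$ from $(\Enc^k_{\regA}\otimes I)|\Psi\rangle\!\langle\Psi|$ always yields $I_{\regB}/2^n$, register $\regB$ is uncorrelated with $\mathbf{K}$ in $\sigma$.

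I would then combine three observations. (i) \textbf{Correctness gives a good test.} Put $\widetilde M_k\seteq((\Dec^k_{\regA})^{\dagger}\otimes I_{\regB})(|\Psi\rangle\!\langle\Psi|)$ and $\widetilde M\seteq\sum_k|k\rangle\!\langle k|_{\mathbf{K}}\otimes\widetilde M_k$; since $(\Dec^k_{\regA})^{\dagger}$ is completely positive and unital we have $0\le\widetilde M_k\le I$, so $\widetilde M$ is a POVM element, and plugging $\rho_{\regA\regB}=|\Psi\rangle\!\langle\Psi|$ into the correctness inequality gives $\Tr[\widetilde M\,\sigma]\ge 1-2\negl$. (ii) \textbf{Decorrelating the key.} Because $\Pr[k](\Enc^k_{\regA}\otimes I)|\Psi\rangle\!\langle\Psi|\le\rho_0$ for each $k$, one gets the operator inequality $\sigma\le\sum_{k\in\bit^\kappa}|k\rangle\!\langle k|_{\mathbf{K}}\otimes\rho_0=2^{\kappa}\bar\sigma$ with $\bar\sigma\seteq 2^{-\kappa}\sum_{k\in\bit^\kappa}|k\rangle\!\langle k|_{\mathbf{K}}\otimes\rho_0$, hence $1-2\negl\le\Tr[\widetilde M\,\sigma]\le 2^{\kappa}\Tr[\widetilde M\,\bar\sigma]=\sum_{k\in\bit^\kappa}\Tr[\widetilde M_k\,\rho_0]$. (iii) \textbf{Security.} Using the identity $\langle\Psi|(X_{\regA}\otimes I_{\regB})|\Psi\rangle=2^{-n}\Tr[X]$ together with trace-preservation of $\Enc^{k'}$ and $\Dec^k$, a direct computation shows $\Tr[\widetilde M_k\,\rho_1]=2^{-2n}$ exactly for every $k$; since $0\le\widetilde M_k\le I$, the hypothesis $\tfrac12\|\rho_0-\rho_1\|_1\le\negl$ yields $\Tr[\widetilde M_k\,\rho_0]\le 2^{-2n}+2\negl$. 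Combining, $1-2\negl\le 2^{\kappa}(2^{-2n}+2\negl)$, which rearranges to $\kappa\ge 2n+\log(1-\negl')$. (An equivalent entropic route: data processing through the controlled decryption of (i) gives $I(\regA\mathbf{K}:\regB)_{\sigma}\ge 2n-\negl'$, while the chain rule and the classicality of $\mathbf{K}$ give $I(\regA\mathbf{K}:\regB)_{\sigma}=I(\regA:\regB)_{\rho_0}+I(\mathbf{K}:\regB\mid\regA)_{\sigma}\le\negl'+H(\mathbf{K})\le\negl'+\kappa$.)

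The step I expect to be the main obstacle is reconciling the two \emph{approximate} guarantees: security only controls the average ciphertext $\rho_0$, whereas the individual states $(\Enc^k_{\regA}\otimes I)|\Psi\rangle\!\langle\Psi|$ can be pure and maximally entangled with $\regB$ and hence arbitrarily far from $\rho_1$, which rules out any key-by-key argument. Pushing everything through the averaged register $\mathbf{K}$, the operator inequality $\sigma\le 2^{\kappa}\bar\sigma$, and the classicality of $\mathbf{K}$ is exactly what makes the factor $2n$ (rather than $n$) survive. A secondary subtlety is tracking the error terms carefully enough to land on the clean bound $2n+\log(1-\negl)$ rather than a lossy $2n-\poly(n)\cdot\negl$; this is where the operator-inequality / hypothesis-testing formulation (or the one-shot information quantities used in \cite{LC19}) is preferable to Fannes-type continuity estimates.
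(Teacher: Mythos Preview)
The paper does not give its own proof of this statement: \cref{theorem:LC19} is quoted from \cite{LC19} (see the footnote pointing to their Theorem~4) and used as a black box in the proof of \cref{thm:QPOTP_EFI}. So there is nothing in the paper to compare your argument against directly.

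That said, your primary route has a quantitative gap. Steps (i)--(iii) are correct and yield $1-2\negl\le 2^{\kappa}(2^{-2n}+\negl)$, but this does \emph{not} rearrange to $\kappa\ge 2n+\log(1-\negl')$ for a negligible $\negl'$. What it actually gives is
\[
\kappa\;\ge\; 2n+\log(1-2\negl)-\log\bigl(1+2^{2n}\cdot\negl\bigr),
\]
and the term $\log(1+2^{2n}\cdot\negl)$ is not controlled: $n$ is polynomial in $\secp$ while $\negl$ is only sub-inverse-polynomial, so $2^{2n}\cdot\negl$ can diverge (e.g.\ with $n=\secp$ and $\negl=2^{-\secp}$ your inequality only forces $\kappa\gtrsim n$, not $2n$). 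The underlying issue is that you invoke the single security bound $\tfrac12\|\rho_0-\rho_1\|_1\le\negl$ once for each of the $2^{\kappa}$ operators $\widetilde M_k$, and those errors accumulate. This is precisely the obstacle you flag at the end, but the operator-inequality formulation you propose does not sidestep it.

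By contrast, the von~Neumann entropic route you sketch parenthetically and then dismiss as ``lossy'' actually delivers the statement. Alicki--Fannes continuity gives $I(\regA:\regB)_{\rho_0}\le \negl\cdot\poly(\secp)$ from security (since $\rho_1$ is product) and $I(\regA\mathbf{K}:\regB)_\sigma\ge 2n-\negl\cdot\poly(\secp)$ from correctness (via data processing through controlled decryption); the chain rule with classical $\mathbf{K}$ then yields $\kappa\ge 2n-\negl'$ for some negligible $\negl'$, which is exactly $\kappa\ge 2n+\log(1-\negl'')$ after reparametrizing. The $\poly(\secp)$ factors are harmless because they are absorbed into the negligible function. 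If you want the sharper dimension-free constants you do need smooth one-shot entropies as in \cite{LC19}, but for the theorem as phrased here the ordinary mutual-information argument already suffices.
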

\fi

\begin{theorem}\label{thm:QPOTP_EFI}
If single-copy-secure QPOTP schemes with $\kappa< \ell$ exist then EFI pairs exist.  
\end{theorem}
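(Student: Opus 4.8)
The plan is to distill from the QPOTP a pair of efficiently generatable mixed states whose computational indistinguishability is exactly the IND-security of the QPOTP, while their statistical farness is forced by $\kappa<\ell$ via the (approximate, quantum-ciphertext) Shannon-type bound underlying \cref{theorem:LC19}. Let $(\mathsf{OTP}.\KeyGen,\mathsf{OTP}.\Enc,\mathsf{OTP}.\Dec)$ be a single-copy-secure QPOTP with $\kappa<\ell$, write $\Pr[\sk]\coloneqq\Pr[\sk\gets\mathsf{OTP}.\KeyGen(1^\secp)]$, and for $x\in\bit^\ell$ set $\rho_\regC^x\coloneqq\sum_\sk\Pr[\sk]\,\mathsf{OTP}.\Enc(\sk,x)$ and $\rho_\regC\coloneqq\frac{1}{2^\ell}\sum_{x\in\bit^\ell}\rho_\regC^x$. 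On an $\ell$-qubit register $\regB$ and the ciphertext register $\regC$, define
\[
\rho_0\coloneqq\frac{1}{2^\ell}\sum_{x\in\bit^\ell}|x\rangle\langle x|_\regB\otimes\rho_\regC^x,
\qquad
\rho_1\coloneqq\frac{I_\regB^{\otimes\ell}}{2^\ell}\otimes\rho_\regC .
\]
Equivalently, $\rho_0$ is generated by running $\sk\gets\mathsf{OTP}.\KeyGen(1^\secp)$, sampling $x\gets\bit^\ell$ and outputting $|x\rangle\langle x|\otimes\mathsf{OTP}.\Enc(\sk,x)$, and $\rho_1$ the same way except the ciphertext encrypts a fresh independent $x'\gets\bit^\ell$; both are QPT-generatable. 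It then suffices to show $\rho_0\cind\rho_1$ and $\frac12\|\rho_0-\rho_1\|_1\ge\nonnegl(\secp)$, which is an EFI pair (amplifying the statistical gap afterwards if a larger one is wanted).

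For $\rho_0\cind\rho_1$, I would reduce to single-copy security. Given a QPT distinguisher $\cD$ and each fixed pair $(x_0,x_1)\in(\bit^\ell)^2$, let $\cA_{x_0,x_1}$ be the QPT adversary that on a single ciphertext $\ct$ outputs $\cD(|x_0\rangle\langle x_0|\otimes\ct)$. If $\ct\gets\mathsf{OTP}.\Enc(\sk,x_0)$ then, averaged over $x_0\gets\bit^\ell$ and $\sk$, the state fed to $\cD$ is exactly $\rho_0$; if $\ct\gets\mathsf{OTP}.\Enc(\sk,x_1)$ then, averaged over independent $x_0,x_1\gets\bit^\ell$ and $\sk$, it is exactly $\rho_1$. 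Single-copy security bounds $\bigl|\Pr[\cA_{x_0,x_1}(\ct_0)=1]-\Pr[\cA_{x_0,x_1}(\ct_1)=1]\bigr|\le\negl(\secp)$ for every $(x_0,x_1)$, and averaging over $x_0,x_1$ gives $\bigl|\Pr[\cD(\rho_0)=1]-\Pr[\cD(\rho_1)=1]\bigr|\le\negl(\secp)$.

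For statistical farness I would argue by contradiction: assume $\frac12\|\rho_0-\rho_1\|_1\le\negl(\secp)$. Adjoin a classical register $\mathbf{K}$ holding the key used, forming the cq-state $\tau_{\regB\mathbf{K}\regC}=\frac{1}{2^\ell}\sum_x|x\rangle\langle x|_\regB\otimes\sum_\sk\Pr[\sk]\,|\sk\rangle\langle\sk|_{\mathbf{K}}\otimes\mathsf{OTP}.\Enc(\sk,x)_\regC$, so $\tau_{\regB\regC}=\rho_0$. Since $\rho_1$ is a product across $\regB:\regC$, continuity of von Neumann entropy (Fannes/Alicki--Fannes, using that $\ell$ and the ciphertext length are polynomial) gives $I(\regB:\regC)_\tau\le\negl(\secp)$; since $\mathbf{K}$ is classical, $I(\regB:\mathbf{K}\mid\regC)_\tau\le H(\mathbf{K})\le\kappa$; hence by the chain rule $I(\regB:\mathbf{K}\regC)_\tau\le\kappa+\negl(\secp)$. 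Conversely, running $\mathsf{OTP}.\Dec$ with the key read off $\mathbf{K}$ recovers the uniform $\ell$-bit string in $\regB$ from $\mathbf{K}\regC$ with probability $1-\negl(\secp)$ by correctness, so data processing and Fano's inequality give $I(\regB:\mathbf{K}\regC)_\tau\ge\ell-\negl(\secp)$. Combining, $\ell\le\kappa+\negl(\secp)$, impossible for large $\secp$ since $\ell,\kappa$ are integer-valued with $\ell>\kappa$. Tracking the error terms in Fannes and Fano refines this to $\frac12\|\rho_0-\rho_1\|_1=\Omega(1/\ell)$, an inverse polynomial, as needed. (This step is precisely a classical-message instance of the lower-bound argument behind \cref{theorem:LC19}: statistical hiding of a uniformly random $\ell$-bit message together with correctness would force $\kappa\ge\ell$.)

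The main obstacle is this last step: one must run the information-theoretic argument with classically distributed messages rather than the quantum messages of \cref{theorem:LC19} (so the bound obtained is the available $\ell\lesssim\kappa$ rather than $\ell\lesssim\kappa/2$), and one must verify that the polynomial factors introduced by the Fannes and Fano estimates are absorbed into the negligible terms. The computational-indistinguishability step and the efficient generatability are routine.
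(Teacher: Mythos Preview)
Your proof is correct and takes a genuinely different, more direct route than the paper's.

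The paper constructs its EFI pair via a hybrid encryption: it uses the QPOTP to encrypt a $2n$-bit quantum one-time-pad key $(x,z)$ with $2n=\ell$, applies $X^xZ^z$ to one half of a maximally entangled $n$-qubit state (for $\rho_0$) or to $|0^n\rangle\langle0^n|\otimes I/2^n$ (for $\rho_1$), and then outputs the QPOTP ciphertext together with this state. Computational indistinguishability comes from replacing the QPOTP ciphertext by an encryption of $(0^n,0^n)$ and observing both states collapse to the same maximally mixed state. Statistical farness is obtained by invoking the Lai--Chung quantum Shannon bound (\cref{theorem:LC19}) as a black box: if the pair were statistically close, the hybrid scheme would be an information-theoretically one-time-secure encryption of $n$-qubit quantum messages with a $\kappa$-bit key, forcing $\kappa\ge 2n=\ell$, a contradiction.

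Your construction is considerably simpler: $\rho_0$ is just (random plaintext, its ciphertext) and $\rho_1$ is (random plaintext, ciphertext of an independent plaintext). You avoid the quantum-one-time-pad detour entirely and, instead of citing \cref{theorem:LC19}, you run the underlying entropic argument directly in the classical-message setting (Fano for correctness, Alicki--Fannes for continuity of $I(B:C)$, chain rule plus $I(B:K\mid C)\le H(K)\le\kappa$). This yields the bound $\kappa\ge\ell$ rather than $\kappa\ge 2n$, but since the hypothesis is already $\kappa<\ell$, nothing is lost. What your approach buys is self-containment and transparency; what the paper's approach buys is a clean reduction to an existing theorem statement. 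Both land on an inverse-polynomial statistical gap that is then amplified. One small cosmetic point: your averaging step in the computational-indistinguishability argument implicitly uses that a non-negligible average advantage over $(x_0,x_1)$ yields (via non-uniform advice) a single pair violating the fixed-message security; this is standard and matches how the paper itself treats the security game elsewhere.
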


\ifnum\submission=1
We prove this theorem based on a result shown by Lai and Chung~\cite{LC19}, which gives a quantum analogue of Shannon's impossibility. Roughly speaking, they show that if a SKE scheme for $n$-qubit messages and $\kappa$-bit secret keys is information theoretically one-time-secure, then we must have $\kappa \geq 2n$. By a reduction to their result via a hybrid encryption of QPOTP and quantum one-time pads, we can show that any QPOTP scheme with $\kappa< \ell$ is \emph{not} one-time-secure against unbounded-time adversaries. On the other hand, we assume that it is one-time-secure against QPT adversaries. This computationally-secure and information-theoretically-insecure encryption scheme can be directly used to construct EFI pairs. 
For a formal proof, see the full version.
\else
\begin{proof}[Proof of \cref{thm:QPOTP_EFI}]
Let $(\mathsf{OTP}.\KeyGen,\mathsf{OTP}.\Enc,\mathsf{OTP}.\Dec)$ be a single-copy-secure 
QPOTP scheme with $\kappa<\ell$.
From it, we construct an EFI pair $\StateGen(1^\secp,b)\to\rho_b$ as a QPT algorithm that outputs
\begin{eqnarray*}
\rho_0&\coloneqq&\frac{1}{2^{2n}}\sum_{\sk,x,z}\Pr[\sk]\ct_{\sk,(x,z)}\otimes
\Big[(X^xZ^z\otimes I)|\Psi\rangle\langle\Psi|(X^xZ^z\otimes I)\Big],\\
\rho_1&\coloneqq&\frac{1}{2^{2n}}\sum_{\sk,x,z}\Pr[\sk]\ct_{\sk,(x,z)}\otimes
\Big[(X^xZ^z\otimes I)\Big(|0^n\rangle\langle0^n|\otimes\frac{I^{\otimes n}}{2^n}\Big)|(X^xZ^z\otimes I)\Big],
\end{eqnarray*}
where 
$\Pr[\sk]\coloneqq \Pr[\sk\gets\mathsf{OTP}.\KeyGen(1^\secp)]$,
$\ct_{\sk,(x,z)}\gets \mathsf{OTP}.\Enc(\sk,(x,z))$, 
$x,z\in\bit^n$,
$X^x\coloneqq \bigotimes_{i=1}^n X_i^{x_i}$,
$Z^z\coloneqq \bigotimes_{i=1}^n Z_i^{z_i}$,
$2n=\ell$ and $|\Psi\rangle$ is the maximally-entangled state on two $n$-qubit registers.
It is clear that $\rho_0$ and $\rho_1$ can be generated in QPT in a natural way.

First let us show the computational indistinguishability of $\rho_0$ and $\rho_1$.
Let us define
\begin{eqnarray*}
\rho_0'&\coloneqq&\frac{1}{2^{2n}}\sum_{\sk,x,z}\Pr[\sk]\ct_{\sk,(0^n,0^n)}\otimes
\Big[(X^xZ^z\otimes I)|\Psi\rangle\langle\Psi|(X^xZ^z\otimes I)\Big],\\
\rho_1'&\coloneqq&\frac{1}{2^{2n}}\sum_{\sk,x,z}\Pr[\sk]\ct_{\sk,(0^n,0^n)}\otimes
\Big[(X^xZ^z\otimes I)\Big(|0^n\rangle\langle0^n|\otimes\frac{I^{\otimes n}}{2^n}\Big)(X^xZ^z\otimes I)\Big].
\end{eqnarray*}
Then, for any QPT adversary $\cA$,
from the security of the QPOTP scheme,
\begin{eqnarray*}
&&|\Pr[1\gets\cA(\rho_0)]
-\Pr[1\gets\cA(\rho_0')]|\\
&\le&
\frac{1}{2^{2n}}\sum_{x,z}\Big|
\Pr\Big[1\gets\cA\Big(\sum_\sk\Pr[\sk]\ct_{\sk,(x,z)}\otimes
\Big[(X^xZ^z\otimes I)|\Psi\rangle\langle\Psi|(X^xZ^z\otimes I)\Big]\Big)\Big]\\
&&-\Pr\Big[1\gets\cA\Big(\sum_\sk\Pr[\sk]\ct_{\sk,(0^n,0^n)}\otimes
\Big[(X^xZ^z\otimes I)|\Psi\rangle\langle\Psi|(X^xZ^z\otimes I)\Big]\Big)\Big]
\Big|\\
&\le&\negl(\secp),
\end{eqnarray*}
and
\begin{eqnarray*}
|\Pr[1\gets\cA(\rho_1)]
-\Pr[1\gets\cA(\rho_1')]|
\le\negl(\secp)
\end{eqnarray*}
in a similar way.
Moreover, $\rho_0'=\rho_1'$, because
\begin{eqnarray*}
\rho_0'&=&\sum_{\sk}\Pr[\sk]\ct_{\sk,(0^n,0^n)}\otimes
\frac{1}{2^{2n}}\sum_{x,z}
\Big[(X^xZ^z\otimes I)|\Psi\rangle\langle\Psi|(X^xZ^z\otimes I)\Big],\\
&=&\sum_{\sk}\Pr[\sk]\ct_{\sk,(0^n,0^n)}\otimes
\frac{I^{\otimes n}}{2^{n}}
\otimes
\frac{I^{\otimes n}}{2^{n}},\\
\rho_1'&=&\sum_\sk\Pr[\sk]\ct_{\sk,(0^n,0^n)}\otimes
\frac{1}{2^{2n}}\sum_{x,z}
\Big[(X^xZ^z\otimes I)\Big(|0^n\rangle\langle0^n|\otimes\frac{I^{\otimes n}}{2^n}\Big)(X^xZ^z\otimes I)\Big]\\
&=&\sum_\sk\Pr[\sk]\ct_{\sk,(0^n,0^n)}\otimes
\frac{I^{\otimes n}}{2^{n}}
\otimes
\frac{I^{\otimes n}}{2^{n}}.
\end{eqnarray*}
Hence we have
\begin{eqnarray*}
|\Pr[1\gets\cA(\rho_0)]-\Pr[1\gets\cA(\rho_1)]|\le\negl(\secp)
\end{eqnarray*}
for any QPT adversary $\cA$.

Next let us show the statistical distinguishability of $\rho_0$ and $\rho_1$.
From the QPOTP, construct $\KeyGen$ and $\Enc^k,\Dec^k$ of \cref{theorem:LC19}
as follows.
\begin{itemize}
    \item 
    $\KeyGen(1^\secp)\to k:$
    Run $\sk\gets\mathsf{OTP}.\KeyGen(1^\secp)$.
    Output $k\coloneqq \sk$.
    \item
    $\Enc^k(\rho_{\regA,\regB})=\rho'_{\regA',\regB}:$
    Parse $k=\sk$.
    Choose $x,z\gets\bit^n$.
    Run $\ct_{\sk,(x,z)}\gets\mathsf{OTP}.\Enc(\sk,(x,z))$.
    Output 
   \begin{eqnarray*} 
    \rho_{\regA',\regB}'\coloneqq \frac{1}{2^{2n}}\sum_{x,z}
    \ct_{\sk,(x,z)}\otimes\Big([(X^xZ^z)_\regA\otimes I_\regB]
    \rho_{\regA,\regB}[(X^xZ^z)_\regA\otimes I_\regB]\Big).
   \end{eqnarray*} 
    \item
   $\Dec^k(\rho_{\regA',\regB}')=\rho_{\regA,\regB}:$ 
   Parse $k=\sk$.
   Parse
   \begin{eqnarray*}
   \rho_{\regA',\regB}'= \frac{1}{2^{2n}}\sum_{x,z}\ct_{\sk,(x,z)}\otimes\Big([(X^xZ^z)_\regA\otimes I_\regB]
   \rho_{\regA,\regB}[(X^xZ^z)_\regA\otimes I_\regB]\Big).
   \end{eqnarray*}
   Run $(x,z)\gets\mathsf{OTP}.\Dec(\sk,\ct_{\sk,(x,z)})$.
   Obtain $\rho_{\regA,\regB}$ by correcting the Pauli one-time pad.
   Output $\rho_{\regA,\regB}$.
\end{itemize}
It satisfies Eq.~(\ref{SVOWSG_ITS_correctness}).
Assume that 
$\frac{1}{2}\|\rho_0-\rho_1\|_1\le\negl(\secp)$.
Then, from \cref{theorem:LC19},
$\kappa\ge 2n+\log(1-\negl(\secp))$, which means
$\kappa\ge \ell+\log(1-\negl(\secp))$, 
but it contradicts the assumption of $\kappa<\ell$.
Therefore, the statistical distinguishability of $\rho_0$ and $\rho_1$ is shown.
\end{proof}
\fi

\section{SV-SI-OWSGs}
\label{sec:SVOWSG}

In this section, we define SV-SI-OWSGs (\cref{sec:SVOWSG_def}), and show that SV-SI-OWSGs are equivalent to EFI pairs (\cref{sec:SVOWSG_EFI}).
In \cref{sec:SVOWSG_def}, before defining SV-SI-OWSGs, we first define SV-OWSGs for a didactic purpose.
We will point out that SV-OWSGs seem to need a more constraint so that they become equivalent to EFI.
We then define SV-SI-OWSGs.

\if0
\subsection{Preliminaries}
\label{sec:SVOWSG_pre}

Let us first explain quantum shadow tomography~\cite{Shadow1,Shadow2},
which we will use later.
The following result gives the best parameters \cite{Shadow_best}.
\begin{theorem}[Quantum shadow tomography~\cite{Shadow_best}]
\label{theorem:shadow}
Let $0<\epsilon<\frac{1}{2}$.
Consider $M$ different POVM elements $E_1,E_2,...,E_M$, i.e.,
$0\le E_i\le I$ for each $i\in\{1,2,...,M\}$.
There exists a (not necessarily polynomial-time) quantum algorithm that, on input
the 
$T$ copies $\rho^{\otimes T}$ of a $d$-dimensional quantum state $\rho$ with
\begin{eqnarray*}
T=\frac{[(\log M)^2+\log(\frac{\log d}{\delta\epsilon})]\log d}{\epsilon^4}
\times O\left(\log\left(\frac{\log d}{\delta\epsilon}\right)\right),
\end{eqnarray*}
outputs $(\eta_1,...,\eta_M)$
such that
$|\eta_j-\mbox{Tr}(E_j\rho)|\le\epsilon$ for all $j\in\{1,2,...,M\}$
with probability at
least $1-\delta$. 
\end{theorem}
In particular, if we take $\epsilon=O(\frac{1}{\poly(\secp)})$,
$\delta=O(\frac{1}{2^{\poly(\secp)}})$, 
$M=O(2^{\poly(\secp)})$, 
and $d=O(2^{\poly(\secp)})$, 
we have $T=O(\poly(\secp))$.

We also use the following result.
\begin{theorem}[\cite{Montanaro19}]
\label{theorem:Montanaro}
Let $\{\rho_i\}_i$ be a set of states.
Define the POVM measurement $\{\mu_i\}_i$ with
$\mu_i\coloneqq\Sigma^{-1/2}\rho_i\Sigma^{-1/2}$,
where $\Sigma\coloneqq \sum_i\rho_i$,
and the inverse is taken on the support of $\Sigma$.
Then,
$\max_i(1-\mbox{Tr}(\mu_i\rho_i))\le\sum_{i\neq j}F(\rho_i,\rho_j)$.
\end{theorem}
\fi

\subsection{Definition of SV-SI-OWSGs}
\label{sec:SVOWSG_def}

We first define secretly-verifiable OWSGs (SV-OWSGs) as follows.
\begin{definition}[Secretly-verifiable OWSGs (SV-OWSGs)]\label{def:SV-OWSG}
A secretly-verifiable OWSG (SV-OWSG) is a set of algorithms $(\KeyGen,\StateGen,\Ver)$ as follows.
\begin{itemize}
    \item 
    $\KeyGen(1^\secp)\to k:$
    It is a QPT algorithm that, on input the security parameter $\secp$, outputs a key $k\in\bit^\kappa$.
    \item
    $\StateGen(k)\to\phi_k:$
    It is a QPT algorithm that, on input $k$, outputs an $m$-qubit state $\phi_k$.
    \item
    $\Ver(k',k)\to\top/\bot:$
    It is a QPT algorithm that, on input $k$ and $k'$, outputs $\top/\bot$.
\end{itemize}
We require the following two properties.

\paragraph{\bf Correctness:}
\begin{eqnarray*}
\Pr[\top\gets\Ver(k,k):k\gets\KeyGen(1^\secp)]\ge 1-\negl(\secp).
\end{eqnarray*}

\paragraph{\bf Security:}
For any QPT adversary $\cA$ and any polynomial $t$, 
\begin{eqnarray*}
\Pr[\top\gets\Ver(k',k):k\gets\KeyGen(1^\secp),\phi_k\gets\StateGen(k),k'\gets\cA(\phi_k^{\otimes t})]\le \negl(\secp).
\end{eqnarray*}
\end{definition}

The following lemma shows that, without loss of generality, $\Ver$ can be replaced with the algorithm
of just checking whether $k=k'$ or not.
\begin{lemma}\label{lemma:SVOWSG_amp}
Let $(\KeyGen,\StateGen,\Ver)$ be a SV-OWSG.
Then, the following SV-OWSG $(\KeyGen',\StateGen',\Ver')$ exists.
\begin{itemize}
    \item 
    $\KeyGen'$ and $\StateGen'$ are the same as $\KeyGen$ and $\StateGen$, respectively.
    \item
    $\Ver'(k',k)\to\top/\bot:$
    On input $k$ and $k'$, output $\top$ if $k=k'$.
    Otherwise, output $\bot$.
\end{itemize}
\end{lemma}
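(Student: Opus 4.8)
The plan is to show that weakening the verification of an SV-OWSG to the literal equality test $k' \isEq k$ preserves both correctness and security, so that $(\KeyGen',\StateGen',\Ver')$ is again an SV-OWSG. Correctness is immediate and requires no argument: $\Ver'(k,k)$ outputs $\top$ with probability $1$ by definition, so the correctness bound holds trivially. The content is therefore entirely in the security claim, and the approach is a reduction to the security of the original scheme using the \emph{same} adversary (with the same polynomial $t$).

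Concretely, I would argue by contradiction. Suppose some QPT adversary $\cA$ and polynomial $t$ violate the security of $(\KeyGen',\StateGen',\Ver')$, i.e., $\cA(\phi_k^{\otimes t})$ outputs exactly $k$ with non-negligible probability $\epsilon(\secp)$, where $k\gets\KeyGen(1^\secp)$ and $\phi_k\gets\StateGen(k)$. I claim $\cA$ already breaks the security of $(\KeyGen,\StateGen,\Ver)$. Writing $p(k):=\Pr[\top\gets\Ver(k,k)]$ (over the internal randomness of $\Ver$) and $q(k):=\Pr[k\gets\cA(\phi_k^{\otimes t})]$ (over generation of the challenge and $\cA$'s randomness), the success probability of $\cA$ against the original scheme is at least $\mathbb{E}_{k\gets\KeyGen(1^\secp)}[q(k)p(k)]$, obtained by keeping only the summand in which $\cA$'s output equals $k$. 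Since $0\le q(k)\le 1$ we have $q(k)p(k)\ge q(k)-(1-p(k))$, hence
\[
\mathbb{E}_k[q(k)p(k)]\ \ge\ \mathbb{E}_k[q(k)]-\mathbb{E}_k[1-p(k)]\ \ge\ \epsilon(\secp)-\negl(\secp),
\]
where the last step uses $\mathbb{E}_k[q(k)]=\epsilon(\secp)$ by assumption and $\mathbb{E}_k[p(k)]\ge 1-\negl(\secp)$ by the correctness of the original SV-OWSG. Thus $\cA$ wins the original security game with non-negligible probability, contradicting the security of $(\KeyGen,\StateGen,\Ver)$.

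I do not expect a genuine obstacle here: the construction is trivial and the reduction is the identity on the adversary. The one point that needs a moment of care is that the correctness of an SV-OWSG is an \emph{average} guarantee over $k\gets\KeyGen(1^\secp)$, not a worst-case one, so it cannot be applied pointwise and must instead be combined with the adversary's average success probability — this is exactly what the elementary bound $\mathbb{E}[qp]\ge\mathbb{E}[q]-\mathbb{E}[1-p]$ for $q\in[0,1]$ accomplishes above. It is also worth noting explicitly that $\StateGen'=\StateGen$, so the states handed to $\cA$ in the two games are identically distributed, and that $\Ver$ of an SV-OWSG depends only on the keys $(k',k)$ and not on $\phi_k$, so no issue about reusing copies of $\phi_k$ arises in the reduction.
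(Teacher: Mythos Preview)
Your proof is correct and follows the same overall strategy as the paper: correctness is trivial, and for security you use the identical adversary $\cA$ and lower-bound its success against the original scheme by keeping only the term where $\cA$ outputs exactly $k$.

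Where you differ is in the last step of combining the two averages. The paper defines ``good'' sets $S=\{k:\Pr[k\gets\cA(\phi_k^{\otimes t})]\ge 1/(2p)\}$ and $T=\{k:\Pr[\top\gets\Ver(k,k)]\ge 1-1/p\}$, bounds the weight of each by a Markov-type argument, intersects them via a union bound, and then multiplies the pointwise lower bounds on $S\cap T$ to get a non-negligible quantity of order $1/p^3$. You instead use the pointwise inequality $q(k)p(k)\ge q(k)-(1-p(k))$ for $q(k)\in[0,1]$ and take expectations directly, yielding $\epsilon(\secp)-\negl(\secp)$. Your route is shorter, avoids the good-set machinery entirely, and gives a tighter bound (essentially no loss rather than a cubic loss). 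The paper's approach is the standard averaging template reused elsewhere in the paper, which explains the choice, but for this particular lemma your argument is the cleaner one.
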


\ifnum\submission=1
For a proof, see the full version.
\else
\begin{proof}
The correctness of $(\KeyGen',\StateGen',\Ver')$ is clear.
Let us show the security. Assume that it is not secure.
Then there exist a QPT adversary $\cA$, a polynomial $t$, and a polynomial $p$ such that
\begin{eqnarray*}
\sum_k\Pr[k\gets\KeyGen(1^\secp)]\Pr[k\gets\cA(\phi_k^{\otimes t})]\ge \frac{1}{p}.
\end{eqnarray*}
Define the set
\begin{eqnarray*}
S\coloneqq\Big\{k~\Big|~\Pr[k\gets\cA(\phi_k^{\otimes t})]\ge \frac{1}{2p}\Big\}.
\end{eqnarray*}
Then, we have
\begin{eqnarray*}
\sum_{k\in S}\Pr[k\gets\KeyGen(1^\secp)]> \frac{1}{2p}.
\end{eqnarray*}
This is because
\begin{eqnarray*}
\frac{1}{p}&\le&\sum_k\Pr[k\gets\KeyGen(1^\secp)]\Pr[k\gets\cA(\phi_k^{\otimes t})]\\
&=&\sum_{k\in S}\Pr[k\gets\KeyGen(1^\secp)]\Pr[k\gets\cA(\phi_k^{\otimes t})]\\
&&+\sum_{k\notin S}\Pr[k\gets\KeyGen(1^\secp)]\Pr[k\gets\cA(\phi_k^{\otimes t})]\\
&<&\sum_{k\in S}\Pr[k\gets\KeyGen(1^\secp)]+\frac{1}{2p}.
\end{eqnarray*}

Also define the set
\begin{eqnarray*}
T\coloneqq\Big\{k~\Big|~\Pr[\top\gets\Ver(k,k)]\ge 1-\frac{1}{p}\Big\}.
\end{eqnarray*}
Then, we have
\begin{eqnarray*}
\sum_{k\in T}\Pr[k\gets\KeyGen(1^\secp)]> 1-\negl(\secp).
\end{eqnarray*}
This is because
\begin{eqnarray*}
1-\negl(\secp)&\le&\sum_{k}\Pr[k\gets\KeyGen(1^\secp)]\Pr[\top\gets\Ver(k,k)]\\
&=&\sum_{k\in T}\Pr[k\gets\KeyGen(1^\secp)]\Pr[\top\gets\Ver(k,k)]\\
&&+\sum_{k\notin T}\Pr[k\gets\KeyGen(1^\secp)]\Pr[\top\gets\Ver(k,k)]\\
&<&\sum_{k\in T}\Pr[k\gets\KeyGen(1^\secp)]\\
&&+\Big(1-\frac{1}{p}\Big)\Big(1-\sum_{k\in T}\Pr[k\gets\KeyGen(1^\secp)]\Big).
\end{eqnarray*}
Here, the first inequality is from the correctness of $(\KeyGen,\StateGen,\Ver)$.
From the union bound, we have
\begin{eqnarray*}
\sum_{k\in S\cap T}\Pr[k\gets\KeyGen(1^\secp)]> \frac{1}{2p}-\negl(\secp).
\end{eqnarray*}

From the $\cA$, we construct a QPT adversary that breaks the security of $(\KeyGen,\StateGen,\Ver)$ as follows:
On input $\phi_k^{\otimes t}$, run $k'\gets\cA(\phi_k^{\otimes t})$.
Output $k'$.
Then, the probability that $\cB$ breaks the security is
\begin{eqnarray*}
&&\sum_{k,k'}\Pr[k\gets\KeyGen(1^\secp)]\Pr[k'\gets\cA(\phi_k^{\otimes t})]\Pr[\top\gets\Ver(k',k)]\\
&\ge&\sum_k\Pr[k\gets\KeyGen(1^\secp)]\Pr[k\gets\cA(\phi_k^{\otimes t})]\Pr[\top\gets\Ver(k,k)]\\
&\ge&\sum_{k\in S\cap T}\Pr[k\gets\KeyGen(1^\secp)]\Pr[k\gets\cA(\phi_k^{\otimes t})]\Pr[\top\gets\Ver(k,k)]\\
&\ge&\frac{1}{2p}\Big(1-\frac{1}{p}\Big)\Big(\frac{1}{2p}-\negl(\secp)\Big),
\end{eqnarray*}
which is non-negligible.
Therefore $\cB$ breaks the security of
$(\KeyGen,\StateGen,\Ver)$.
\end{proof}
\fi

Note that statistically-secure SV-OWSGs are easy to realize.
For example, consider the following construction:
\begin{itemize}
\item
$\KeyGen(1^\secp):$ Sample $k\gets\bit^\secp$.
\item
$\StateGen(k):$ Output $\frac{I^{\otimes m}}{2^m}$.
\item
$\Ver(k',k):$ Output $\top$ if $k'=k$. Otherwise, output $\bot$.
\end{itemize}
We therefore need a constraint to have a meaningful primitive.
We define secretly-verifiable and statistically-invertible OWSGs (SV-SI-OWSGs) as follows.
Introducing the statistical invertibility allows us to avoid trivial constructions with
the statistical security.

\begin{definition}[Secretly-verifiable and statistically-invertible OWSGs (SV-SI-OWSGs)]
A secretly-verifiable and statistically-invertible OWSG (SV-SI-OWSG) is a set of algorithms $(\KeyGen,\StateGen)$ as follows.
\begin{itemize}
    \item 
    $\KeyGen(1^\secp)\to k:$
    It is a QPT algorithm that, on input the security parameter $\secp$, outputs a key $k\in\bit^\kappa$.
    \item
    $\StateGen(k)\to\phi_k:$
    It is a QPT algorithm that, on input $k$, outputs an $m$-qubit state $\phi_k$.
\end{itemize}
We require the following two properties.

\paragraph{\bf Statistical invertibility:}
There exists a polynomial $p$ such that, for any $k$ and $k'$ ($k\neq k'$),
$
\frac{1}{2}\|\phi_k-\phi_{k'}\|_1\ge\frac{1}{p}.
$

\paragraph{\bf Computational non-invertibility:}
For any QPT adversary $\cA$ and any polynomial $t$, 
\begin{eqnarray*}
\Pr[k\gets\cA(\phi_k^{\otimes t}):
k\gets\KeyGen(1^\secp),\phi_k\gets\StateGen(k)]\le \negl(\secp).
\end{eqnarray*}
\end{definition}

The following lemma shows that the statistical invertibility with advantage $\frac{1}{\poly(\secp)}$
can be amplified to $1-2^{-q}$ for any polynomial $q$.
\begin{lemma}
\label{lemma:SVOWSG_amp}
If a SV-SI-OWSG exists
then
a SV-SI-OWSG with statistical invertibility larger than $1-2^{-q}$ with any polynomial $q$ exists.
\end{lemma}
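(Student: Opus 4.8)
The plan is to amplify statistical invertibility by a \emph{tensor power} of the state generator, leaving the key generator untouched. Given a SV-SI-OWSG $(\KeyGen,\StateGen)$ whose statistical invertibility holds with polynomial $p$, set $n\seteq 2p^2q$, which is still $\poly(\secp)$, and define $\KeyGen'\seteq\KeyGen$ and $\StateGen'(k)\seteq\StateGen(k)^{\otimes n}$ (i.e., run $\StateGen(k)$ independently $n$ times and output the tuple, whose density matrix is $\phi_k^{\otimes n}$). Note that the naive alternative of taking $n$ independent keys $(k_1,\dots,k_n)$ and outputting $\bigotimes_i\StateGen(k_i)$ does \emph{not} work: two distinct keys could differ in only one coordinate, giving statistical distance only $\ge 1/p$. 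So the key space must stay the same and only the number of emitted copies should grow.

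For statistical invertibility of $(\KeyGen',\StateGen')$, first I would fix $k\neq k'$ and use the Fuchs--van de Graaf inequalities to turn the hypothesis $\frac12\|\phi_k-\phi_{k'}\|_1\ge \frac1p$ into a fidelity bound $F(\phi_k,\phi_{k'})\le\sqrt{1-1/p^2}$. Then, since fidelity is multiplicative under tensor products, $F(\phi_k^{\otimes n},\phi_{k'}^{\otimes n})=F(\phi_k,\phi_{k'})^n\le (1-1/p^2)^{n/2}\le e^{-n/(2p^2)}=e^{-q}\le 2^{-q}$, and applying the other Fuchs--van de Graaf inequality $\frac12\|\rho-\sigma\|_1\ge 1-F(\rho,\sigma)$ gives $\frac12\|\phi_k^{\otimes n}-\phi_{k'}^{\otimes n}\|_1\ge 1-2^{-q}$, as required. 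For the pure-state variant one can argue directly with $|\langle\phi_k|\phi_{k'}\rangle|$, which is exactly what the fidelity computes.

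For computational non-invertibility, I would give the obvious reduction: an adversary against $(\KeyGen',\StateGen')$ receiving $t$ copies of $\StateGen'(k)$ is receiving $\phi_k^{\otimes nt}$, and outputting $k$ from this directly breaks the computational non-invertibility of the original scheme given $nt$ copies; since $nt=\poly(\secp)$, this is a contradiction. The only point that needs care is precisely this bookkeeping --- keeping $n$, hence the blow-up in the number of copies, polynomial --- and I do not expect a genuine obstacle: unlike the hardness amplification for (weak) OWSGs, statistical invertibility is a ``for all pairs $k\neq k'$'' statement, so it amplifies pairwise with no union bound, no weakly-verifiable-puzzle machinery, and no verification algorithm to worry about (SV-SI-OWSGs have none in their syntax).
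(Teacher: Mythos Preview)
Your proposal is correct and follows essentially the same approach as the paper: keep $\KeyGen$ unchanged and replace $\StateGen(k)$ by a polynomial-in-$\secp$ tensor power $\phi_k^{\otimes n}$, then reduce computational non-invertibility of the new scheme to that of the old one with $nt$ copies. The only difference is cosmetic: the paper amplifies the trace distance via a direct inequality of the form $\frac{1}{2}\|\phi_k^{\otimes n}-\phi_{k'}^{\otimes n}\|_1\ge 1-\exp(-n\|\phi_k-\phi_{k'}\|_1/4)$ with $n=2pq$, whereas you go through Fuchs--van de Graaf and fidelity multiplicativity with $n=2p^2q$; both routes are standard and yield the same conclusion.
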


\begin{proof}
Let $(\KeyGen,\StateGen)$ be a SV-SI-OWSG with statistical invertibility larger than $\frac{1}{p}$,
where $p$ is a polynomial.
From it, we construct a new SV-SI-OWSG ($\KeyGen',\StateGen'$) as follows:
\begin{itemize}
    \item 
    $\KeyGen'(1^\secp)\to k$:
    Run $k\gets\KeyGen(1^\secp)$, and output $k$.
    \item
    $\StateGen'(k)\to\phi_k':$
    Run $\phi_k\gets \StateGen(k)$ $2pq$ times, and output
    $\phi_k'\coloneqq \phi_k^{\otimes 2pq}$.
\end{itemize}
First, for any $k$ and $k'$ ($k\neq k'$),
\begin{eqnarray*}
\frac{1}{2}\|\phi_k'-\phi_{k'}'\|_1
&=&\frac{1}{2}\|\phi_k^{\otimes 2pq}-\phi_{k'}^{\otimes 2pq}\|_1\\
&\ge& 1-\exp(-2qp\|\phi_k-\phi_{k'}\|_1/4)\\
&\ge& 1-\exp(-q)\\
&\ge& 1-2^{-q},
\end{eqnarray*}
which shows the statistical invertibility of $(\KeyGen',\StateGen')$ with the advantage
larger than $1-2^{-q}$.
Second, from the computational non-invertibility of $(\KeyGen,\StateGen)$, 
\begin{eqnarray*}
&&\Pr[k\gets\cA(\phi_k'^{\otimes t}):k\gets\KeyGen'(1^\secp),\phi_k'\gets\StateGen'(k)]\\
&=&\Pr[k\gets\cA(\phi_k^{\otimes 2pqt}):k\gets\KeyGen(1^\secp),\phi_k\gets\StateGen(k)]\\
&\le& \negl(\secp)
\end{eqnarray*}
for any QPT adversary $\cA$ and any polynomial $t$, 
which shows the computational non-invertibility of $(\KeyGen',\StateGen')$.
\end{proof}

The following lemma shows that the statistical invertibility 
is equivalent to the existence of a (unbounded) adversary that can find
the correct $k$ given many copies of $\phi_k$ except for a negligible error.
\begin{lemma}
\label{lemma:SVOWSG_equivalence}
The statistical invertibility is satisfied if and only if the 
following is satisfied:
There exists a 
(not necessarily QPT) POVM measurement $\{\Pi_k\}_{k\in\bit^\kappa}$ 
and a polynomial $t$ such that
$\mbox{Tr}(\Pi_k\phi_k^{\otimes t})\ge1-\negl(\secp)$
and $\mbox{Tr}(\Pi_{k'}\phi_k^{\otimes t})\le\negl(\secp)$ for all $k$ and $k'$ ($k\neq k'$).

\if0
If $\frac{1}{2}\|\phi_k-\phi_{k'}\|_1\ge1-2^{-3\secp}$ for any $k\neq k'$,
then the following holds:
there exists a (not necessarily QPT) POVM measurement $\{\Pi_k\}_{k\in\bit^\secp}$ such that
$\mbox{Tr}(\Pi_k\phi_k)\ge1-\negl(\secp)$
and $\mbox{Tr}(\Pi_{k'}\phi_k)\le\negl(\secp)$ for all $k\neq k'$.
\fi
\end{lemma}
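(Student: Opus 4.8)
The plan is to prove the two implications separately. The ``if'' direction --- existence of the distinguishing POVM implies statistical invertibility --- is routine, while the ``only if'' direction is the substantive one.

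For the easy direction, suppose $\{\Pi_k\}_{k\in\bit^\kappa}$ and a polynomial $t$ are as in the statement, and fix $k\neq k'$. First I would use the variational characterization of trace distance to get $\tfrac12\|\phi_k^{\otimes t}-\phi_{k'}^{\otimes t}\|_1\ge \Tr(\Pi_k\phi_k^{\otimes t})-\Tr(\Pi_k\phi_{k'}^{\otimes t})\ge 1-\negl(\secp)$, and then the telescoping identity $\phi_k^{\otimes t}-\phi_{k'}^{\otimes t}=\sum_{i=1}^{t}\phi_{k'}^{\otimes(i-1)}\otimes(\phi_k-\phi_{k'})\otimes\phi_k^{\otimes(t-i)}$ together with multiplicativity of the trace norm under tensor products to get $\tfrac12\|\phi_k^{\otimes t}-\phi_{k'}^{\otimes t}\|_1\le t\cdot\tfrac12\|\phi_k-\phi_{k'}\|_1$. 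Combining gives $\tfrac12\|\phi_k-\phi_{k'}\|_1\ge (1-\negl(\secp))/t\ge 1/(2t)$ for all sufficiently large $\secp$, so statistical invertibility holds with $p\seteq 2t$.

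For the ``only if'' direction, assume $\tfrac12\|\phi_k-\phi_{k'}\|_1\ge 1/p$ for all $k\neq k'$. By the Fuchs--van de Graaf inequality $F(\phi_k,\phi_{k'})\le\sqrt{1-1/p^2}$, and by multiplicativity of fidelity under tensor products, $F(\phi_k^{\otimes t},\phi_{k'}^{\otimes t})=F(\phi_k,\phi_{k'})^t\le (1-1/p^2)^{t/2}\le e^{-t/(2p^2)}$. I would then set $t\seteq 2p^2\secp\kappa$, which is a polynomial since $\kappa=\poly(\secp)$ ($\KeyGen$ is QPT), so that $2^\kappa e^{-t/(2p^2)}=\negl(\secp)$, and take $\Pi_k$ to be the pretty good (square-root) measurement for the ensemble $\{\phi_k^{\otimes t}\}_k$, i.e.\ $\Pi_k\seteq\Sigma^{-1/2}\phi_k^{\otimes t}\Sigma^{-1/2}$ with $\Sigma\seteq\sum_k\phi_k^{\otimes t}$ and the inverse taken on the support of $\Sigma$, together with the extra element $I-\sum_k\Pi_k$ to make a valid POVM. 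Montanaro's bound on the error of the pretty good measurement, namely $1-\Tr(\Pi_k\phi_k^{\otimes t})\le\sum_{k'\neq k}F(\phi_k^{\otimes t},\phi_{k'}^{\otimes t})$, then gives $\Tr(\Pi_k\phi_k^{\otimes t})\ge 1-2^\kappa e^{-t/(2p^2)}=1-\negl(\secp)$; and since $\sum_{k'}\Tr(\Pi_{k'}\phi_k^{\otimes t})\le 1$, this forces $\Tr(\Pi_{k'}\phi_k^{\otimes t})\le\negl(\secp)$ for every $k'\neq k$, as required.

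The hard part is the ``only if'' direction: pairwise near-orthogonality of the $\phi_k$ does not by itself yield a single measurement that identifies all (exponentially many) keys, and the essential tool is the fidelity-based error bound for the pretty good measurement. Minor points to handle carefully are the case where $\Sigma$ is rank-deficient (use the Moore--Penrose pseudoinverse on $\operatorname{supp}(\Sigma)$, and note each $\phi_k^{\otimes t}$ is supported there, so the extra POVM element contributes nothing) and, as usual, reading the quantitative statements as holding for all sufficiently large $\secp$.
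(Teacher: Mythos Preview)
Your proposal is correct and follows essentially the same approach as the paper: both directions match, and in particular the ``only if'' direction uses the pretty good measurement together with Montanaro's fidelity bound after amplifying pairwise distinguishability by tensoring. The only cosmetic differences are that the paper amplifies the trace distance first (via $\tfrac12\|\phi_k^{\otimes t}-\phi_{k'}^{\otimes t}\|_1\ge 1-e^{-t\|\phi_k-\phi_{k'}\|_1/4}$) and then passes to fidelity, yielding a slightly smaller $t=12p\kappa$, whereas you pass to fidelity first and use multiplicativity, giving $t=2p^2\secp\kappa$; and the paper states Montanaro's bound as a double sum over all ordered pairs rather than your single sum over $k'\neq k$ --- either form suffices for the argument with your choice of $t$.
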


\begin{proof}
First, we show the if part.
Assume that
there exists a 
POVM measurement $\{\Pi_k\}_{k\in\bit^\kappa}$ and a polynomial $t$ such that
$\mbox{Tr}(\Pi_k\phi_k^{\otimes t})\ge1-\negl(\secp)$
and $\mbox{Tr}(\Pi_{k'}\phi_k^{\otimes t})\le\negl(\secp)$ for all $k$ and $k'$ ($k\neq k'$).
Then,
\begin{eqnarray*}
\frac{t}{2}\|\phi_k-\phi_{k'}\|_1
&\ge&
\frac{1}{2}\|\phi_k^{\otimes t}-\phi_{k'}^{\otimes t}\|_1\\
&\ge&
\Tr(\Pi_k \phi_k^{\otimes t})
-\Tr(\Pi_k \phi_{k'}^{\otimes t})\\
&\ge&
1-\negl(\secp)
-\negl(\secp)\\
&=&1-\negl(\secp),
\end{eqnarray*}
which means
\begin{eqnarray*}
\frac{1}{2}\|\phi_k-\phi_{k'}\|_1
\ge
\frac{1}{t}-\negl(\secp)
\ge
\frac{1}{2t}.
\end{eqnarray*}

Next, we show the only if part.
Assume that the statistical invertibility is satisfied.
Then, there exists a polynomial $p$ such that
$\frac{1}{2}\|\phi_k-\phi_{k'}\|_1\ge\frac{1}{p}$
for all $k$ and $k'$ ($k\neq k'$).
Let $t\coloneqq 12p\kappa$. Then,
\begin{eqnarray*}
\frac{1}{2}\|\phi_k^{\otimes t}-\phi_{k'}^{\otimes t}\|_1
\ge
1-e^{-t\frac{\|\phi_k-\phi_{k'}\|_1}{4}}
\ge
1-e^{-6\kappa}
\ge
1-2^{-6\kappa},
\end{eqnarray*}
which 
means
$F(\phi_k^{\otimes t},\phi_{k'}^{\otimes t})\le 2^{-6\kappa+1}$.
From \cref{theorem:Montanaro} below,
\begin{eqnarray*}
\max_k(1-\mbox{Tr}(\mu_k\phi_k^{\otimes t}))
\le \sum_{k\neq k'}\sqrt{F(\phi_k^{\otimes t},\phi_{k'}^{\otimes t})}
\le 2^{-3\kappa+1}(2^{2\kappa}-2^\kappa)
\le 2^{-\kappa+1},
\end{eqnarray*}
which means $\mbox{Tr}(\mu_k \phi_k^{\otimes t})\ge1-2^{-\kappa+1}$ and
$\mbox{Tr}(\mu_{k'} \phi_k^{\otimes t})\le 2^{-\kappa+1}$ for any $k$ and $k'$ ($k'\neq k$).
\end{proof}

\begin{theorem}[\cite{Montanaro19}]
\label{theorem:Montanaro}
Let $\{\rho_i\}_i$ be a set of states.
Define the POVM measurement $\{\mu_i\}_i$ with
$\mu_i\coloneqq\Sigma^{-1/2}\rho_i\Sigma^{-1/2}$,
where $\Sigma\coloneqq \sum_i\rho_i$,
and the inverse is taken on the support of $\Sigma$.
Then,
$\max_i(1-\mbox{Tr}(\mu_i\rho_i))\le\sum_{i\neq j}\sqrt{F(\rho_i,\rho_j)}$.
\end{theorem}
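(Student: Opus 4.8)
The plan is to recognize \cref{theorem:Montanaro} as the familiar upper bound on the error probability of the pretty good (square‑root) measurement, and to reprove it along the lines of Barnum--Knill and Hayashi. The first step is a reduction to the support of $\Sigma$: since $\rho_i\le\Sigma$ for every $i$, each $\rho_i$ is supported on $\mathrm{supp}(\Sigma)$, on which $\sum_i\mu_i=\Pi$ acts as the identity, so $\mathrm{Tr}(\Pi\rho_i)=\mathrm{Tr}(\rho_i)=1$ and hence, for every fixed $i$,
\[
1-\mathrm{Tr}(\mu_i\rho_i)=\mathrm{Tr}\big((\Pi-\mu_i)\rho_i\big)=\sum_{j\neq i}\mathrm{Tr}(\mu_j\rho_i).
\]
It therefore suffices to prove the \emph{per-pair} bound $\mathrm{Tr}(\mu_j\rho_i)\le\sqrt{F(\rho_i,\rho_j)}$ for all $i\neq j$; summing over $j\neq i$ and then using $\max_i(\cdots)\le\sum_{i\neq j}(\cdots)$ (every summand is nonnegative) gives the claimed inequality.

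For the per-pair bound I would first put the quantity in a symmetric Hilbert--Schmidt form: by cyclicity of the trace, $\mathrm{Tr}(\mu_j\rho_i)=\mathrm{Tr}(\Sigma^{-1/2}\rho_j\Sigma^{-1/2}\rho_i)=\|\rho_i^{1/2}\Sigma^{-1/2}\rho_j^{1/2}\|_2^2=\mathrm{Tr}(CD)$ with $C\coloneqq\Sigma^{-1/4}\rho_i\Sigma^{-1/4}\ge 0$ and $D\coloneqq\Sigma^{-1/4}\rho_j\Sigma^{-1/4}\ge 0$. The core claim is then the matrix inequality $\mathrm{Tr}(\Sigma^{-1/2}\rho_j\Sigma^{-1/2}\rho_i)\le\|\sqrt{\rho_i}\sqrt{\rho_j}\|_1$ (here $\sqrt{F(\rho_i,\rho_j)}=\|\sqrt{\rho_i}\sqrt{\rho_j}\|_1$, since $F$ denotes the squared fidelity throughout the paper). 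To establish it I would exploit that $\rho_i\le\Sigma$ and $\rho_j\le\Sigma$, which yield operators $K\coloneqq\Sigma^{-1/2}\rho_i^{1/2}$ and $L\coloneqq\Sigma^{-1/2}\rho_j^{1/2}$ with $\|K\|_\infty,\|L\|_\infty\le 1$ and $\mu_i=KK^\dagger$, $\mu_j=LL^\dagger$; combining these contraction bounds with operator monotonicity of the square root, a Schatten-norm Hölder inequality, and the variational characterization $\|\sqrt{\rho_i}\sqrt{\rho_j}\|_1=\max_{U\text{ unitary}}|\mathrm{Tr}(U\rho_i^{1/2}\rho_j^{1/2})|$ should reduce the estimate to this fidelity. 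The final step is then merely to sum over $j\neq i$ and take the maximum over $i$.

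I expect the main obstacle to be exactly this per-pair matrix inequality. The naive estimates—e.g. $\mathrm{Tr}(CD)\le\|C\|_2\|D\|_2\le1$, or $\|\rho_i^{1/2}\Sigma^{-1/2}\rho_j^{1/2}\|_2^2\le\|\rho_i^{1/2}\Sigma^{-1/2}\rho_j^{1/2}\|_\infty\cdot\|\rho_i^{1/2}\Sigma^{-1/2}\rho_j^{1/2}\|_1$—only give the trivial bound $\le1$, because $\Sigma^{-1/2}$ need not be a contraction on $\mathrm{supp}(\Sigma)$ and bounding a middle factor by $\Pi$ discards too much. One has to use the two operator inequalities $\rho_i\le\Sigma$ and $\rho_j\le\Sigma$ simultaneously and carefully, which is precisely the technical heart of the pretty-good-measurement analysis. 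Since \cref{theorem:Montanaro} is quoted verbatim from \cite{Montanaro19}, an acceptable alternative is simply to cite it there; the outline above indicates how a self-contained proof would proceed.
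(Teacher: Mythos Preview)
The paper does not prove \cref{theorem:Montanaro}; it is stated with a citation to \cite{Montanaro19} and then used as a black box in the proof of \cref{lemma:SVOWSG_equivalence}. Your closing remark that ``an acceptable alternative is simply to cite it there'' is precisely what the paper does, so at that level your proposal matches the paper exactly.

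Your additional outline goes beyond the paper by sketching a self-contained argument. The reduction $1-\mathrm{Tr}(\mu_i\rho_i)=\sum_{j\neq i}\mathrm{Tr}(\mu_j\rho_i)$ is correct, and reducing the claim to the per-pair estimate $\mathrm{Tr}(\mu_j\rho_i)\le\sqrt{F(\rho_i,\rho_j)}$ is indeed how the pretty-good-measurement bound is usually established. You are also honest that this per-pair inequality is the real obstacle and that the naive Schatten-norm estimates fail; your sketch of how to proceed (using $\rho_i\le\Sigma$ and $\rho_j\le\Sigma$ simultaneously via the contractions $K=\Sigma^{-1/2}\rho_i^{1/2}$, $L=\Sigma^{-1/2}\rho_j^{1/2}$) points in the right direction but stops short of actually deriving the bound. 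Since the paper itself supplies no proof, there is nothing further to compare against: your citation suffices, and your partial sketch is a bonus rather than a gap.
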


\subsection{Equivalence of SV-SI-OWSGs and EFI Pairs}
\label{sec:SVOWSG_EFI}

\begin{theorem}\label{thm:SVSIOWSG_EFI}
SV-SI-OWSGs exist if and only if EFI pairs exist.
\end{theorem}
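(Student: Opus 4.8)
The plan is to prove the two implications separately, reusing the amplification machinery already developed in the paper.

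\textbf{EFI pairs $\Rightarrow$ SV-SI-OWSGs.} Let $(\rho_0,\rho_1)$ be an EFI pair, so that $\rho_0,\rho_1$ are efficiently generatable, $\frac12\|\rho_0-\rho_1\|_1\ge 1/\poly(\secp)$, and (by a standard hybrid over copies) $\rho_0^{\otimes t}\cind\rho_1^{\otimes t}$ for every polynomial $t$. Set $(\sigma_0,\sigma_1):=(\rho_0,\rho_1)$ and define $\KeyGen(1^\secp)$ to sample $k\gets\bit^n$ uniformly and $\StateGen(k):=\bigotimes_{i=1}^n \sigma_{k_i}$, where $n$ is a polynomial fixed below. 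Statistical invertibility is immediate: if $k\ne k'$ they differ in some coordinate $i$, and tracing out all blocks except the $i$-th shows $\frac12\|\StateGen(k)-\StateGen(k')\|_1\ge \frac12\|\sigma_{k_i}-\sigma_{k'_i}\|_1\ge 1/\poly(\secp)$ by monotonicity of trace distance under partial trace. For computational non-invertibility I would package the task ``recover $b$ from $\sigma_b^{\otimes t}$ for $b\gets\bit$'' as a weakly verifiable quantum puzzle ($\CheckGen$ samples $b\gets\bit$, $\PuzzleGen(b)=\sigma_b$, $\Ver(\ans,b)=\top$ iff $\ans=b$); any QPT solver with success probability $\tfrac12+\mu$ immediately distinguishes $\rho_0^{\otimes t}$ from $\rho_1^{\otimes t}$ with advantage $2\mu$, so $\mu=\negl(\secp)$. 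The $n$-fold parallel repetition of this puzzle is exactly the inversion game for $(\KeyGen,\StateGen)$. Taking $n$ large enough that $(\tfrac12+\tfrac1\secp)^n=\negl(\secp)$ (say $n=\secp^2$) and applying \cref{thm:amplify_qpuzzle} with $\delta:=\tfrac12+\tfrac1\secp$ and a suitable polynomial $q$: an adversary inverting $(\KeyGen,\StateGen)$ with non-negligible probability solves the $n$-fold puzzle with probability exceeding $\delta^n$, hence yields a QPT solver for the single puzzle with probability $\delta(1-\tfrac1q)\ge \tfrac12+1/\poly(\secp)$, contradicting EFI indistinguishability.

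\textbf{SV-SI-OWSGs $\Rightarrow$ EFI pairs.} Let $(\KeyGen,\StateGen)$ be an SV-SI-OWSG and let $t=\poly(\secp)$ be the copy bound from \cref{lemma:SVOWSG_equivalence}, so that there is a (possibly inefficient) POVM $\{\Pi_k\}_k$ with $\mbox{Tr}(\Pi_k\phi_k^{\otimes t})\ge 1-\negl(\secp)$ and $\mbox{Tr}(\Pi_{k'}\phi_k^{\otimes t})\le\negl(\secp)$ for $k'\ne k$. Writing $r\cdot k:=\bigoplus_i r_i k_i$, define the pair
\begin{align*}
\rho_0&:=\sum_{k,r}\Pr[k]\,2^{-\kappa}\;\phi_k^{\otimes t}\otimes|r\rangle\langle r|\otimes|r\cdot k\rangle\langle r\cdot k|,\\
\rho_1&:=\sum_{k,r}\Pr[k]\,2^{-\kappa}\;\phi_k^{\otimes t}\otimes|r\rangle\langle r|\otimes\tfrac{I}{2},
\end{align*}
with $\Pr[k]:=\Pr[k\gets\KeyGen(1^\secp)]$ and $r$ ranging over $\bit^\kappa$; both are QPT-generatable. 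They are statistically far: measure the first register with $\{\Pi_{k'}\}$ to obtain $\tilde k$ and compare $r\cdot\tilde k$ with the last bit; on $\rho_0$ this matches with probability $\ge 1-\negl(\secp)$, while on $\rho_1$ the last bit is uniform and independent, so it matches with probability exactly $1/2$. They are computationally indistinguishable: a QPT distinguisher with advantage $1/q$ gives, via the ``flip a coin, run the distinguisher once, output accordingly'' trick, a QPT predictor of $r\cdot k$ from $(\phi_k^{\otimes t},r)$ with advantage $1/q$ (averaged over $k\gets\KeyGen$, $r\gets\bit^\kappa$); by the black-box Goldreich--Levin list-decoding algorithm (which invokes the predictor, hence $\phi_k^{\otimes t}$, only polynomially many times) this yields a QPT algorithm that on input $\phi_k^{\otimes t'}$, $t'=t\cdot\poly(\secp)$, outputs $k$ with probability $1/\poly(\secp)$, contradicting computational non-invertibility.

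\textbf{Main obstacle.} The delicate step is computational non-invertibility in the first direction. A naive reduction that plants the EFI challenge in one random coordinate and reads off the adversary's guess there loses a factor governed by the key length and breaks down once the adversary's inversion probability drops below $1/n$; routing through the hardness-amplification theorem for weakly verifiable quantum puzzles (\cref{thm:amplify_qpuzzle}) is precisely what makes the argument work uniformly for all inverse-polynomial inversion probabilities. In the reverse direction the analogous subtlety is that the naive pair $\phi_k^{\otimes t}\otimes|k\rangle\langle k|$ versus $\phi_k^{\otimes t}\otimes\big(\sum_{k}\Pr[k]|k\rangle\langle k|\big)$ admits no efficient search-to-decision reduction (one cannot enumerate keys), which is why a Goldreich--Levin hardcore bit is inserted.
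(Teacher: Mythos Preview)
Your proposal is essentially correct but follows genuinely different routes from the paper in both directions.

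For EFI $\Rightarrow$ SV-SI-OWSG, you use the same construction as the paper ($\phi_k=\bigotimes_i\rho_{k_i}$), but your non-invertibility argument detours through the puzzle amplification theorem, whereas the paper gives a much more elementary direct hybrid: replacing $\phi_k^{\otimes t}$ by $\phi_{0^\kappa}^{\otimes t}$ one block at a time shows $|\Pr[k\gets\cA(\phi_k^{\otimes t})]-\Pr[k\gets\cA(\phi_{0^\kappa}^{\otimes t})]|\le\negl(\secp)$ for every $k$, and then averaging over uniform $k$ collapses the second term to $2^{-\kappa}$. This sidesteps the factor-of-$n$ loss you worried about without any amplification machinery. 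Your parameter choice also hides a running-time issue: with $\delta=\tfrac12+\tfrac1\secp$ and $n=\secp^2$, the CHS-style reduction in \cref{thm:amplify_qpuzzle} repeats $\Theta(1/\delta^n)\approx 2^{\secp^2}$ times, so $\cA'$ is not QPT as written; you would need to set $\delta$ adaptively so that $\delta^n$ equals the inverter's (non-negligible) success probability.

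For SV-SI-OWSG $\Rightarrow$ EFI, the paper takes a structural route: it builds a canonical quantum bit commitment $\{Q_0,Q_1\}$ with $Q_b$ holding $|k,\mu_k\rangle\otimes|\psi_k\rangle^{\otimes t}$ together with either $|0\rangle$ or $|k\rangle$ in the reveal register, proves it is statistically hiding (using the invertibility POVM to coherently recover $k$ from the reveal side) and computationally binding (a binding-breaker extracts $k$ and contradicts non-invertibility), and then invokes the flavor-conversion theorem (\cref{thm:convertingflavors}) to obtain a computationally-hiding/statistically-binding commitment, i.e.\ an EFI pair. Your Goldreich--Levin construction is a legitimate and more self-contained alternative: it avoids the commitment detour and the flavor-conversion black box, at the cost of relying on GL list-decoding with fresh state copies per predictor call. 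Both approaches work; the paper's buys a tighter structural connection to commitments, yours is shorter and more elementary.
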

This Theorem is shown by combining the following two theorems.

\begin{theorem}\label{thm:SVSIOWSGfromEFI}
If EFI pairs exist then SV-SI-OWSGs exist.
\end{theorem}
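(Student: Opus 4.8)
The plan is to encode a uniformly random string bit-by-bit using the EFI pair. Let $(\rho_0,\rho_1)$ be an EFI pair: each of $\rho_0,\rho_1$ is generatable by a QPT algorithm, $\rho_0\cind\rho_1$, and $\frac{1}{2}\|\rho_0-\rho_1\|_1\ge 1/p$ for some polynomial $p$ (if the EFI pair only guarantees a non-negligible statistical distance, first replace it by a suitable polynomial number of parallel copies of itself, which amplifies the statistical distance and preserves computational indistinguishability by a standard hybrid). I then define the SV-SI-OWSG $(\KeyGen,\StateGen)$ by letting $\KeyGen(1^\secp)$ output $k\gets\bit^\secp$, and letting $\StateGen(k)$ generate $\rho_{k_i}$ for every $i\in[\secp]$, running the EFI generator on security parameter $\secp$, and output $\phi_k\coloneqq\bigotimes_{i\in[\secp]}\rho_{k_i}$. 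Both algorithms are QPT and $\phi_k$ has a fixed polynomial number of qubits, so it remains to verify the two security properties.

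Statistical invertibility is immediate: if $k\neq k'$ then they differ in some coordinate $i_0$, and since the trace distance does not increase under the partial trace over all registers except the $i_0$-th, $\frac{1}{2}\|\phi_k-\phi_{k'}\|_1\ge\frac{1}{2}\|\rho_{k_{i_0}}-\rho_{k'_{i_0}}\|_1=\frac{1}{2}\|\rho_0-\rho_1\|_1\ge 1/p$, so the polynomial $p$ works. (If one wants the statistical-invertibility advantage close to $1$, one can additionally apply the statistical-invertibility amplification lemma proved above.)

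For computational non-invertibility, fix a QPT adversary $\cA$ and a polynomial $t$. View $\phi_k^{\otimes t}$ as a tensor product of $\secp t$ factors, each equal to $\rho_0$ or $\rho_1$, and compare it with the single fixed state $\phi_{0^\secp}^{\otimes t}$ in which all factors are $\rho_0$. A standard hybrid that flips the disagreeing factors one at a time — in each step the reduction embeds its single $\rho_0$-or-$\rho_1$ challenge into the flipped factor, generates all the remaining factors by itself (possible since $\rho_0,\rho_1$ are QPT-generatable), runs $\cA$, and outputs $\cA$'s verdict of membership in a target set — together with $\rho_0\cind\rho_1$ yields a negligible function $\delta$, \emph{not depending on} $k$, such that $\bigl|\Pr[\cA(\phi_k^{\otimes t})\in T]-\Pr[\cA(\phi_{0^\secp}^{\otimes t})\in T]\bigr|\le\delta(\secp)$ for every $k\in\bit^\secp$ and every $T\subseteq\bit^\secp$. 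Taking $T=\{k\}$ and averaging over $k\gets\bit^\secp$ then gives
\[
\Pr_{k\gets\bit^\secp}[\cA(\phi_k^{\otimes t})=k]\;\le\;\frac{1}{2^\secp}\sum_{k\in\bit^\secp}\Pr[\cA(\phi_{0^\secp}^{\otimes t})=k]+\delta(\secp)\;\le\;\frac{1}{2^\secp}+\delta(\secp)=\negl(\secp),
\]
using that $\cA(\phi_{0^\secp}^{\otimes t})$ is a single distribution over $\bit^\secp$. This is exactly the required computational non-invertibility. The only delicate point, and thus the one to get right, is that $\delta$ must be independent of $k$, since it is averaged against $2^\secp$ choices of $k$; this is precisely why the comparison is made against the one fixed reference state $\phi_{0^\secp}^{\otimes t}$ rather than between the $\phi_k^{\otimes t}$'s. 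Everything else is routine.
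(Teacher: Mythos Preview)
Your proof is correct and follows essentially the same construction and argument as the paper: encode a uniform key bit-by-bit via the EFI pair, prove statistical invertibility from the statistical far-ness of $\rho_0,\rho_1$, and prove computational non-invertibility by a hybrid to the fixed reference state $\phi_{0^\secp}^{\otimes t}$ followed by averaging over $k$. The only cosmetic difference is in the statistical-invertibility step: the paper first amplifies the EFI pair so that $\frac{1}{2}\|\rho_0-\rho_1\|_1\ge 1-\negl(\secp)$ and then uses multiplicativity of fidelity, whereas you use monotonicity of trace distance under partial trace directly on the unamplified pair---your route is slightly cleaner since it avoids the preliminary amplification.
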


\begin{theorem}\label{thm:EFIfromSVSIOWSG}
If 
SV-SI-OWSGs exist
then EFI pairs exist. 
\end{theorem}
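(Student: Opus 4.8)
To prove \cref{thm:EFIfromSVSIOWSG} (SV-SI-OWSGs imply EFI pairs), the plan is to mimic the classical template ``injective one-way function $\Rightarrow$ Goldreich--Levin hardcore bit $\Rightarrow$ EFI'', using \cref{lemma:SVOWSG_equivalence} to supply the role of ``injectivity''. Let $(\KeyGen,\StateGen)$ be a SV-SI-OWSG with $\kappa$-bit keys. By \cref{lemma:SVOWSG_equivalence}, statistical invertibility yields a polynomial $T$ and a (possibly inefficient) POVM $\{\Pi_k\}_{k\in\bit^\kappa}$ with $\Tr(\Pi_k\phi_k^{\otimes T})\ge 1-\negl(\secp)$ and $\Tr(\Pi_{k'}\phi_k^{\otimes T})\le\negl(\secp)$ for all $k\ne k'$. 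I would then take the EFI pair $(\rho_0,\rho_1)$ to be
\begin{align*}
\rho_0&\seteq\sum_{k}\Pr[k]\,\frac{1}{2^\kappa}\sum_{r\in\bit^\kappa}\phi_k^{\otimes T}\otimes|r\rangle\langle r|\otimes|\innerp{k}{r}\rangle\langle\innerp{k}{r}|,\\
\rho_1&\seteq\sum_{k}\Pr[k]\,\frac{1}{2^\kappa}\sum_{r\in\bit^\kappa}\,\frac{1}{2}\sum_{b\in\bit}\phi_k^{\otimes T}\otimes|r\rangle\langle r|\otimes|b\rangle\langle b|,
\end{align*}
where $\Pr[k]\seteq\Pr[k\gets\KeyGen(1^\secp)]$ and $\innerp{k}{r}\in\bit$ is the mod-$2$ inner product. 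Both states are QPT-generatable: sample $k\gets\KeyGen(1^\secp)$, generate $\phi_k^{\otimes T}$ by running $\StateGen$ $T$ times, sample $r\gets\bit^\kappa$, and place either $|\innerp{k}{r}\rangle$ or a fresh uniform $|b\rangle$ in the last register.

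For statistical farness I would exploit the inefficient POVM. Consider the unbounded measurement $\cM$ that applies $\{\Pi_{\hat k}\}$ to the $\phi_k^{\otimes T}$-register to get $\hat k$, reads $r$ and $b$ off the two classical registers, and outputs $1$ iff $b=\innerp{\hat k}{r}$. On $\rho_0$ we have $\hat k=k$ except with negligible probability, since $\sum_k\Pr[k]\Tr(\Pi_k\phi_k^{\otimes T})\ge 1-\negl(\secp)$, and then $b=\innerp{k}{r}=\innerp{\hat k}{r}$, so $\cM$ accepts with probability $\ge 1-\negl(\secp)$; on $\rho_1$ the bit $b$ is uniform and independent of the rest, so $\cM$ accepts with probability exactly $1/2$. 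Hence $\tfrac{1}{2}\|\rho_0-\rho_1\|_1\ge \tfrac{1}{2}-\negl(\secp)$, which is the statistical-distinguishability requirement of an EFI pair.

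For computational indistinguishability I would reduce to computational non-invertibility in the two standard steps. First, a QPT distinguisher $\cD$ for $(\rho_0,\rho_1)$ with advantage $\epsilon$ yields, via the usual guessing argument (run $\cD$ with a guessed bit for the last register and flip the output accordingly), a QPT predictor $\cP$ with $\Pr_{k\gets\KeyGen,\,r\gets\bit^\kappa}[\cP(\phi_k^{\otimes T},r)=\innerp{k}{r}]\ge \tfrac{1}{2}+\epsilon$. Second, I would run a quantum Goldreich--Levin list-decoding argument: feeding $\cP$ a fresh batch of $T$ copies of $\phi_k$ for each of its $\poly(\kappa,1/\epsilon)$ invocations produces a QPT adversary that, on input $\phi_k^{\otimes T'}$ with $T'=T\cdot\poly(\kappa,1/\epsilon)$, outputs $k$ with probability $\poly(\epsilon)$. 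If $\epsilon$ were non-negligible then, since $\kappa=\poly(\secp)$, $T'$ would be polynomial and this success probability non-negligible, contradicting computational non-invertibility; hence $(\rho_0,\rho_1)$ is computationally indistinguishable. (Combined with \cref{thm:SVSIOWSGfromEFI} this gives \cref{thm:SVSIOWSG_EFI}; for that converse one sets $\KeyGen(1^\secp)$ to sample $k\gets\bit^\secp$ and $\StateGen(k)$ to output $\bigotimes_{i\in[\secp]}\sigma_{k_i}$ for an EFI pair $(\sigma_0,\sigma_1)$, obtaining statistical invertibility from monotonicity of trace distance under partial trace and computational non-invertibility from a per-$k$ hybrid reduction to computational indistinguishability of $(\sigma_0,\sigma_1)$.)

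The main obstacle is the quantum Goldreich--Levin step. Unlike the classical setting, each query of the list-decoding procedure to the predictor consumes copies of the quantum ``advice'' $\phi_k^{\otimes T}$, which cannot be cloned, so one must fix a polynomial copy budget $T'$ in advance and check that both the number of copies and the overall success probability stay polynomial -- including the Markov step that restricts attention to the noticeable fraction of ``good'' keys $k$ on which $\cP$ retains advantage $\ge\epsilon/2$. A secondary, routine point is that the POVM from \cref{lemma:SVOWSG_equivalence} is only guaranteed for some fixed polynomial $T$, so one should confirm that using $T$ copies inside $\rho_0,\rho_1$ and $T'\ge T$ copies in the extracted inverter is consistent with computational non-invertibility holding for every polynomial number of copies.
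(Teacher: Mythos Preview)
Your approach is correct but takes a genuinely different route from the paper. You build the EFI pair directly via a Goldreich--Levin hardcore bit, whereas the paper goes through canonical quantum bit commitments: it constructs a statistically-hiding, computationally-binding commitment $\{Q_0,Q_1\}$ from the SV-SI-OWSG (with $Q_0,Q_1$ purifying, roughly, $\sum_k\Pr[k]\,|k\rangle\langle k|\otimes\phi_k^{\otimes t}$ together with an extra reveal-side register holding either $|0\rangle$ or $|k\rangle$), proves statistical hiding using exactly the POVM from \cref{lemma:SVOWSG_equivalence} that you invoke (an unbounded party extracts $k$ from the reveal-side copies of $\phi_k$ and writes it coherently into that extra register, mapping $Q_0$ to $Q_1$ up to negligible error), proves computational binding by observing that any binding attack acting only on the reveal registers must write $k$ into that extra register given only $\phi_k^{\otimes t}$ (hence inverts), and finally invokes the flavor-conversion theorem (\cref{thm:convertingflavors}) to obtain a computationally-hiding, statistically-binding commitment, which is an EFI pair.

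The paper's route sidesteps precisely the obstacle you flag: its computational-binding reduction is a one-shot argument with no Goldreich--Levin list decoding, no copy-budget bookkeeping, and no Markov restriction to good keys; the price is the black-box appeal to \cref{thm:convertingflavors}, itself a nontrivial result. Your route is more self-contained once GL is in hand, and in fact the obstacle is milder than you suggest: the single-query quantum Goldreich--Levin of Adcock and Cleve applies here by treating the $T$ copies of $\phi_k$ as part of the predictor's workspace (for mixed $\phi_k$, pass to a purification whose purifying register the algorithm never touches), so one coherent call to the predictor and one to its inverse already recover $k$ with probability $\Omega(\epsilon_k^2)$, and the copy budget collapses to $T'=T$. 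Both proofs use \cref{lemma:SVOWSG_equivalence} at the same point---to certify that an unbounded party can recover $k$ from polynomially many copies of $\phi_k$---and differ only in how computational non-invertibility is leveraged.
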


\begin{proof}[Proof of \cref{thm:SVSIOWSGfromEFI}]
We show that if
EFI pairs exist then SV-SI-OWSGs exist.
Let $\mathsf{EFI}.\StateGen(1^\secp,b)\to\rho_b$ be an EFI pair.
As is explained in \cref{remark:EFI}, we can assume without loss of generality that
$\frac{1}{2}\|\rho_0-\rho_1\|_1\ge1-\negl(\secp)$,
which means $F(\rho_0,\rho_1)\le\negl(\secp)$.
From the EFI pair, we construct a SV-SI-OWSG as follows.
\begin{itemize}
    \item 
    $\KeyGen(1^\secp)\to k:$
    Choose $k\gets\bit^\kappa$, and output $k$.
    \item
    $\StateGen(k)\to\phi_k:$
    Run $\mathsf{EFI}.\StateGen(1^\secp,k_i)\to\rho_{k_i}$ for each $i\in[\kappa]$.
    Output $\phi_k\coloneqq \bigotimes_{i=1}^\kappa \rho_{k_i}$.
\end{itemize}

The statistical invertibility is easily shown as follows. 
If $k\neq k'$, there exists a $j\in[\kappa]$ such that $k_j\neq k_j'$.
Then,
\begin{eqnarray*}
F(\phi_k,\phi_{k'})&=&\prod_{i=1}^\kappa F(\rho_{k_i},\rho_{k_i'})
\le F(\rho_{k_j},\rho_{k_j'})
\le\negl(\secp),
\end{eqnarray*}
which means $\frac{1}{2}\|\phi_k-\phi_{k'}\|_1\ge1-\negl(\secp)$.
This shows the statistical invertibility.

Let us next show the computational non-invertibility.
From the standard hybrid argument, and the computational indistinguishability of $\rho_0$ and $\rho_1$, we have
\begin{eqnarray}
\Big|\frac{1}{2^\kappa}\sum_{k\in\bit^\kappa}
\Pr[k\gets\cA(\phi_k^{\otimes t})]-
\frac{1}{2^\kappa}\sum_{k\in\bit^\kappa}
\Pr[k\gets\cA(\phi_{0^\kappa}^{\otimes t})]\Big|\le\negl(\secp)
\label{SVSIOWSGsfromEFI_hybrid}
\end{eqnarray}
for any QPT adversary $\cA$ and any polynomial $t$.
(It will be shown later.)
Hence
\begin{eqnarray*}
&&\Pr[k\gets\cA(\phi_k^{\otimes t}):k\gets\KeyGen(1^\secp),\phi_k\gets\StateGen(k)]\\
&=&\frac{1}{2^\kappa}\sum_{k\in\bit^\kappa}\Pr[k\gets\cA(\phi_k^{\otimes t})]\\
&\le&\frac{1}{2^\kappa}\sum_{k\in\bit^\kappa}\Pr[k\gets\cA(\phi_{0^\kappa}^{\otimes t})]
+\negl(\secp)\\
&=&\frac{1}{2^\kappa}+\negl(\secp),
\end{eqnarray*}
which shows the computational non-invertibility.

Let us show Eq.~(\ref{SVSIOWSGsfromEFI_hybrid}).
For each $z\in\bit^{\kappa t}$,
define $\Phi_z\coloneqq\bigotimes_{i=1}^{\kappa t}\rho_{z_i}$.
Let $z,z'\in\bit^{\kappa t}$ be two bit strings 
such that, for a single $j\in[\kappa t]$, $z_j=0$, $z'_j=1$, and $z_i=z'_i$ for all $i\neq j$.
(In other words, $z$ and $z'$ are the same except for the $j$th bit.)
Then, we can show that
\begin{eqnarray}
\Big|\frac{1}{2^\kappa}\sum_{k\in\bit^\kappa}\Pr[k\gets\cA(\Phi_z)]
-\frac{1}{2^\kappa}\sum_{k\in\bit^\kappa}\Pr[k\gets\cA(\Phi_{z'})]\Big|\le\negl(\secp)
\label{SVSIOWSGsfromEFI_hybrid2}
\end{eqnarray}
for any QPT adversary $\cA$. 
In fact, assume that
\begin{eqnarray*}
\Big|\frac{1}{2^\kappa}\sum_k\Pr[k\gets\cA(\Phi_z)]
-\frac{1}{2^\kappa}\sum_k\Pr[k\gets\cA(\Phi_{z'})]\Big|\ge \frac{1}{\poly(\secp)}
\end{eqnarray*}
for a QPT adversary $\cA$. 
Then, from this $\cA$, we can construct a QPT adversary $\cB$ that breaks the security of
the EFI pair as follows:
On input $\rho_b$, choose $k\gets\bit^\kappa$,
and run $k'\gets \cA((\bigotimes_{i=1}^{j-1}\rho_{z_i})\otimes \rho_b \otimes(\bigotimes_{i=j+1}^{\kappa t}\rho_{z_i}))$.
If $k'=k$, output $b'=1$.
If $k'\neq k$, output $b'=0$.
Because
\begin{eqnarray*}
\Pr[b'=1|b=0]&=&\frac{1}{2^\kappa}\sum_k\Pr[k\gets\cA(\Phi_z)],\\
\Pr[b'=1|b=1]&=&\frac{1}{2^\kappa}\sum_k\Pr[k\gets\cA(\Phi_{z'})],
\end{eqnarray*}
we have
\begin{eqnarray*}
|\Pr[b'=1|b=0]-
\Pr[b'=1|b=1]|\ge\frac{1}{\poly(\secp)},
\end{eqnarray*}
which means that
the $\cB$ breaks the security of the EFI pair.
From the standard hybrid argument and
Eq.~(\ref{SVSIOWSGsfromEFI_hybrid2}),
we have Eq.~(\ref{SVSIOWSGsfromEFI_hybrid}).
\end{proof}

\begin{proof}[Proof of \cref{thm:EFIfromSVSIOWSG}]
We show that
if SV-SI-OWSGs exist then EFI pairs exist.
Let $(\mathsf{OWSG}.\KeyGen,\mathsf{OWSG}.\StateGen)$ be a SV-SI-OWSG.
Without loss of generality, we can assume that
$\mathsf{OWSG}.\KeyGen$ is the following algorithm:
first apply a QPT unitary $U$ on $|0...0\rangle$ to generate 
\begin{eqnarray*}
U|0...0\rangle=\sum_{k}\sqrt{\Pr[k\gets\mathsf{OWSG}.\KeyGen(1^\secp)]}|k\rangle|\mu_k\rangle, 
\end{eqnarray*}
and trace out the second register,
where $\{|\mu_k\rangle\}_k$ are some normalized states.
Moreover, without loss of generality, we can also assume that $\mathsf{OWSG}.\StateGen$ is the following algorithm:
first apply a QPT unitary $V_k$ that depends on $k$ on $|0...0\rangle$ to generate 
$V_k|0...0\rangle=|\psi_k\rangle_{\regA,\regB}$, and trace out the register $\regA$.

From the SV-SI-OWSG, we want to construct an EFI pair.
For that goal, we construct a statistically-hiding and computationally-binding canonical quantum bit commitment scheme from
SV-SI-OWSG. Due to \cref{thm:convertingflavors} (the equivalence between different flavors of commitments), we then have
a statistically-binding and computationally-hiding canonical quantum bit commitment scheme, which is equivalent to an EFI pair.
From the SV-SI-OWSG, we construct
a statistically-hiding and computationally-binding canonical quantum bit commitment scheme $\{Q_0,Q_1\}$ as follows.
\begin{eqnarray*}
Q_0|0\rangle_{\regC,\regR}&\coloneqq&\sum_{k}\sqrt{\Pr[k]} 
(|k\rangle|\mu_k\rangle)_{\regC_1}
|\psi_k\rangle_{\regC_2,\regR_2}^{\otimes t}|0\rangle_{\regR_3},\\
Q_1|0\rangle_{\regC,\regR}&\coloneqq& \sum_{k}\sqrt{\Pr[k]}
(|k\rangle|\mu_k\rangle)_{\regC_1}
|\psi_k\rangle_{\regC_2,\regR_2}^{\otimes t}|k\rangle_{\regR_3},
\end{eqnarray*}
where $\Pr[k]\coloneqq\Pr[k\gets\mathsf{OWSG}.\KeyGen(1^\secp)]$, $\regC_2$ is the combination of all ``$\regA$ registers'' of $|\psi_k\rangle$,
$\regR_2$ is the combination of all ``$\regB$ registers'' of $|\psi_k\rangle$,
$\regC\coloneqq(\regC_1,\regC_2)$ and $\regR\coloneqq(\regR_2,\regR_3)$.
Moreover, $t$ is a polynomial specified later.
It is clear that such $\{Q_0,Q_1\}$ is implemented in QPT in a natural way.

Let us first show the computational binding of $\{Q_0,Q_1\}$.
Assume that it is not computationally binding. Then, there exists a QPT unitary $U$, an ancilla state $|\tau\rangle$,
and a polynomial $p$ such that
\begin{eqnarray*}
\|
(\langle 0|Q_1^\dagger)_{\regC,\regR} U_{\regR,\regZ}(Q_0|0\rangle_{\regC,\regR}\otimes|\tau\rangle_\regZ)
\|
\ge\frac{1}{p}.
\end{eqnarray*}
Then,
\begin{eqnarray}
\frac{1}{p^2}
&\le&
\|
(\langle 0|Q_1^\dagger)_{\regC,\regR} U_{\regR,\regZ}(Q_0|0\rangle_{\regC,\regR}\otimes|\tau\rangle_\regZ)
\|^2\nonumber\\
&=&
\Big\|
\Big(\sum_{k'}\sqrt{\Pr[k']}\langle k',\mu_{k'}|_{\regC_1}\langle\psi_{k'}|^{\otimes t}_{\regC_2,\regR_2}\langle k'|_{\regR_3}\Big)\nonumber\\
&&\times\Big(\sum_{k}\sqrt{\Pr[k]}|k,\mu_k\rangle_{\regC_1}U_{\regR,\regZ}|\psi_k\rangle^{\otimes t}_{\regC_2,\regR_2}|0\rangle_{\regR_3}|\tau\rangle_\regZ\Big)
\Big\|^2\nonumber\\
&=&
\Big\|
\sum_{k}\Pr[k]
\langle\psi_{k}|^{\otimes t}_{\regC_2,\regR_2}\langle k|_{\regR_3}
U_{\regR,\regZ}|\psi_k\rangle^{\otimes t}_{\regC_2,\regR_2}|0\rangle_{\regR_3}|\tau\rangle_\regZ
\Big\|^2\nonumber\\
&\le&
\Big(
\sum_{k}\Pr[k]
\Big\|\langle\psi_{k}|^{\otimes t}_{\regC_2,\regR_2}\langle k|_{\regR_3}
U_{\regR,\regZ}|\psi_k\rangle^{\otimes t}_{\regC_2,\regR_2}|0\rangle_{\regR_3}|\tau\rangle_\regZ
\Big\|\Big)^2\nonumber\\
&\le&
\sum_{k}\Pr[k]
\Big\|
\langle\psi_{k}|^{\otimes t}_{\regC_2,\regR_2}\langle k|_{\regR_3}
U_{\regR,\regZ}|\psi_k\rangle^{\otimes t}_{\regC_2,\regR_2}|0\rangle_{\regR_3}|\tau\rangle_\regZ
\Big\|^2\nonumber\\
&\le&
\sum_{k}\Pr[k]
\Big\|
\langle k|_{\regR_3}
U_{\regR,\regZ}|\psi_k\rangle^{\otimes t}_{\regC_2,\regR_2}|0\rangle_{\regR_3}|\tau\rangle_\regZ
\Big\|^2.
\label{POWSG_B}
\end{eqnarray}
In the third inequality, we have used Jensen's inequality.\footnote{For a real convex function $f$, $f(\sum_i p_i x_i)\le\sum_i p_i f(x_i)$.}
From this $U$, we construct a QPT adversary $\cB$ that breaks the computational non-invertibility of the SV-SI-OWSG as follows:
On input the $\regR_2$ register of $|\psi_k\rangle^{\otimes t}_{\regC_2,\regR_2}$, apply $U_{\regR,\regZ}$ on
$|\psi_k\rangle_{\regC_2,\regR_2}^{\otimes t}|0\rangle_{\regR_3}|\tau\rangle_\regZ$, and measure
the $\regR_3$ register in the computational basis. Output the result.
Then, the probability that $\cB$
correctly outputs $k$ is equal to Eq.~(\ref{POWSG_B}).
Therefore, $\cB$ breaks the computational non-invertibility of the SV-SI-OWSG.

Let us next show the statistical hiding of $\{Q_0,Q_1\}$.
In the following, we construct a (not necessarily QPT) unitary $W_{\regR,\regZ}$ such that
\begin{eqnarray}
\|W_{\regR,\regZ}Q_0|0\rangle_{\regC,\regR}|0\rangle_\regZ
-Q_1|0\rangle_{\regC,\regR}|0\rangle_\regZ \|_1
&\le&\negl(\secp).
\label{SVOWSG_V}
\end{eqnarray}
Then, we have
\begin{eqnarray*}
&&\|\mbox{Tr}_{\regR}(Q_0|0\rangle_{\regC,\regR})
-\mbox{Tr}_{\regR}(Q_1|0\rangle_{\regC,\regR})\|_1\\
&=&
\|\mbox{Tr}_{\regR,\regZ}(Q_0|0\rangle_{\regC,\regR}|0\rangle_\regZ)
-\mbox{Tr}_{\regR,\regZ}(Q_1|0\rangle_{\regC,\regR}|0\rangle_\regZ)\|_1\\
&=&
\|\mbox{Tr}_{\regR,\regZ}(W_{\regR,\regZ}Q_0|0\rangle_{\regC,\regR}|0\rangle_\regZ)
-\mbox{Tr}_{\regR,\regZ}(Q_1|0\rangle_{\regC,\regR}|0\rangle_\regZ)\|_1\\
&\le&
\|W_{\regR,\regZ}Q_0|0\rangle_{\regC,\regR}|0\rangle_\regZ
-Q_1|0\rangle_{\regC,\regR}|0\rangle_\regZ\|_1\\
&\le&
\negl(\secp),
\end{eqnarray*}
which shows the statistical hiding of $\{Q_0,Q_1\}$.

Now we explain how to construct $W_{\regR,\regZ}$.
From \cref{lemma:SVOWSG_equivalence}, there exists a (not necessarily QPT) POVM measurement 
$\{\Pi_k\}_k$ and a polynomial $t$ such that
$\mbox{Tr}(\Pi_k \phi_k^{\otimes t})\ge1-\negl(\secp)$ and
$\mbox{Tr}(\Pi_{k'} \phi_k^{\otimes t})\le\negl(\secp)$ for all $k$ and $k'$ ($k\neq k'$).
\if0
With this POVM measurement $\{\Pi_k\}_k$, we construct
the following quantum (not necessarily polynomial-time) algorithm, which we call $\mathsf{Shadow}$. 
\paragraph{$\mathsf{Shadow}(\rho^{\otimes t})\to k^*$}: 
\begin{enumerate}
    \item 
On input $t$ copies of a state $\rho$,
run the shadow tomography algorithm (\cref{theorem:shadow}) to get
$\{\eta_k\}_{k\in\bit^\kappa}$ such that $|\mbox{Tr}(\Pi_k\rho)-\eta_k|\le\frac{1}{\poly(\secp)}$ for each $k\in\bit^\kappa$.
\item
If there is a $k^*$ such that $\eta_{k^*}\ge 1-\frac{1}{\poly(\secp)}$, output the $k^*$.
\end{enumerate}
Due to the correctness of the shadow tomography algorithm,
there exists a polynomial $t$ such that
$\Pr[k\gets\mathsf{Shadow}(\phi_k^{\otimes t})]\ge1-\negl(\secp)$
for all $k$.
\fi
Let $U_{\regR_2,\regZ}$ be a unitary operator 
that implements the POVM measurement $\{\Pi_k\}_k$ in the following way
\begin{eqnarray*}
U_{\regR_2,\regZ}|\psi_k\rangle^{\otimes t}_{\regC_2,\regR_2}|0...0\rangle_\regZ
&=&\sqrt{1-\epsilon_k}|k\rangle|junk_k\rangle+\sum_{k':k'\neq k} \sqrt{\epsilon_{k'}}|k'\rangle|junk_{k'}\rangle,
\end{eqnarray*}
where $\regZ$ is the ancilla register, $\{\epsilon_i\}_i$ are real numbers such that $1-\epsilon_k\ge1-\negl(\secp)$
and $\epsilon_{k'}\le\negl(\secp)$ for all $k'\neq k$,
and $\{|junk_i\rangle\}_i$ are ``junk'' states that are normalized.
Measuring the first register of the state realizes the POVM.
Let $V_{\regR,\regZ}$ be the following unitary:\footnote{For simplicity, we define $V_{\regR,\regZ}$ by explaining how it acts on
$|\psi_k\rangle^{\otimes t}_{\regC_2,\regR_2}|0...0\rangle_\regZ|0\rangle_{\regR_3}$, but it is clear from the explanation how $V_{\regR,\regZ}$ is defined.}
\begin{enumerate}
    \item 
    Apply $U_{\regR_2,\regZ}$ on $|\psi_k\rangle^{\otimes t}_{\regC_2,\regR_2}|0...0\rangle_\regZ|0\rangle_{\regR_3}$:
    \begin{eqnarray*}
U_{\regR_2,\regZ}|\psi_k\rangle^{\otimes t}_{\regC_2,\regR_2}|0...0\rangle_\regZ|0\rangle_{\regR_3}
&=&\Big[\sqrt{1-\epsilon_k}|k\rangle|junk_k\rangle\\
&&+\sum_{k':k'\neq k} \sqrt{\epsilon_{k'}}|k'\rangle|junk_{k'}\rangle\Big]|0\rangle_{\regR_3}.
    \end{eqnarray*}
    \item
    Copy the content of the first register to the register $\regR_3$:
     \begin{eqnarray*}
\sqrt{1-\epsilon_k}|k\rangle|junk_k\rangle|k\rangle_{\regR_3}
+\sum_{k':k'\neq k} \sqrt{\epsilon_{k'}}|k'\rangle|junk_{k'}\rangle|k'\rangle_{\regR_3}.
    \end{eqnarray*}
\end{enumerate}
Define $W_{\regR,\regZ}\coloneqq U_{\regR_2,\regZ}^\dagger V_{\regR,\regZ}$.

Let us show that thus constructed $W_{\regR,\regZ}$ satisfies Eq.~(\ref{SVOWSG_V}).
\begin{eqnarray*}
&&\Big((\langle 0|Q_1^\dagger)_{\regC,\regR}\langle 0|_{\regZ}\Big)
\Big(W_{\regR,\regZ}Q_0|0\rangle_{\regC,\regR}|0\rangle_{\regZ}\Big)\\
&=&\Big((\langle 0|Q_1^\dagger)_{\regC,\regR}\langle 0|_{\regZ}\Big)
\Big(U_{\regR_2,\regZ}^\dagger V_{\regR,\regZ}Q_0|0\rangle_{\regC,\regR}|0\rangle_{\regZ}\Big)\\
&=&\Big((\langle 0|Q_1^\dagger)_{\regC,\regR}\langle 0|_{\regZ}
U_{\regR_2,\regZ}^\dagger 
\Big)
\Big(
V_{\regR,\regZ}Q_0|0\rangle_{\regC,\regR}|0\rangle_{\regZ}\Big)\\
&=&
\Big(\sum_k\sqrt{\Pr[k]}(\langle k|\langle \mu_k|)_{\regC_1}
\Big[\sqrt{1-\epsilon_k}\langle k|\langle junk_k|\langle k|_{\regR_3}
+\sum_{k'\neq k}\sqrt{\epsilon_{k'}}\langle k'|\langle junk_{k'}|\langle k|_{\regR_3}\Big]\Big)\\
&&\times\Big(\sum_k\sqrt{\Pr[k]}(| k\rangle|\mu_k\rangle)_{\regC_1}
\Big[\sqrt{1-\epsilon_k}| k\rangle | junk_k\rangle| k\rangle_{\regR_3}
+\sum_{k'\neq k}\sqrt{\epsilon_{k'}}| k'\rangle| junk_{k'}\rangle| k'\rangle_{\regR_3}\Big]\Big)\\
&=&
\sum_k\Pr[k]
(1-\epsilon_k)\\
&\ge&1-\negl(\secp).
\end{eqnarray*}

\end{proof}

\if0
\subsection{SV-SI-OWSGs from quantum money}
\label{sec:SVSIOWSGfromQmoney}

\begin{theorem}
If private-key quantum money schemes that satisfy
$\frac{1}{2}\|\$_k-\$_{k'}\|_1\ge\frac{1}{\poly(\secp)}$
for all $k\neq k'$
exist,
then SV-SI-OWSGs exist.
\end{theorem}

\begin{proof}
Let $(\mathsf{QM}.\KeyGen,\mathsf{QM}.\Mint,\mathsf{QM}.\Ver)$
be a private-key quantum money scheme that satisfies
$\frac{1}{2}\|\$_k-\$_{k'}\|_1\ge\frac{1}{\poly(\secp)}$
for all $k\neq k'$.
From it, we construct a SV-SI-OWSGs as follows.
\begin{itemize}
    \item 
    $\KeyGen(1^\secp)\to k:$
    Run $k\gets\mathsf{QM}.\KeyGen(1^\secp)$, and output $k$.
    \item
   $\StateGen(k)\to \phi_k:$ 
   Run $\$_k\gets\mathsf{QM}.\Mint(k)$, and output $\phi_k\coloneqq\$_k$.
\end{itemize}

The statistical invertibility is clear.
Let us show the computational non-invertibility.
Assume that it does not satisfy computational non-invertibility.
Then,
there exists a QPT adversary $\cA$, a polynomial $t$, and a polynomial $p$ such that
\begin{eqnarray*}
\sum_k\Pr[k\gets\mathsf{QM}.\KeyGen(1^\secp)]\Pr[k\gets\cA(\$_k^{\otimes t})]\ge\frac{1}{p}.
\end{eqnarray*}
Let us define the set
\begin{eqnarray*}
S\coloneqq\Big\{k~\Big|~\Pr[k\gets\cA(\$_k^{\otimes t})]\ge\frac{1}{2p}\Big\}.
\end{eqnarray*}
Then we have
\begin{eqnarray*}
\sum_{k\in S}\Pr[k\gets\mathsf{QM}.\KeyGen(1^\secp)]\ge\frac{1}{2p}.
\end{eqnarray*}
Let us also define the set 
\begin{eqnarray*}
T\coloneqq\Big\{k~\Big|~\Pr[\top\gets\mathsf{QM}.\Ver(k,\$_k)]\ge 1-\frac{1}{2(t+1)}\Big\}.
\end{eqnarray*}
Then, we have
\begin{eqnarray*}
\sum_{k\in T}\Pr[k\gets\mathsf{QM}.\KeyGen(1^\secp)]\ge1-\negl(\secp).
\end{eqnarray*}
From the union bound,
\begin{eqnarray*}
\sum_{k\in S\cap T}\Pr[k\gets\mathsf{QM}.\KeyGen(1^\secp)]
&\ge& \frac{1}{2p}+1-\negl(\secp)-1\\
&\ge& \frac{1}{2p}-\negl(\secp).
\end{eqnarray*}

From the $\cA$, we construct a QPT adversary $\cB$ that breaks the security of
the private-key quantum money scheme as follows:
On input $\$_k^{\otimes t}$, run $k'\gets\cA(\$_k^{\otimes t})$, 
run $\$_{k'}^{\otimes t+1}\gets\mathsf{QM}.\Mint(k')$, and output $\xi\coloneqq \$_{k'}^{\otimes t+1}$.
The probability that the $\cB$ breaks the security of the private-key quantum money scheme is
\begin{eqnarray*}
&&\sum_{k,k'}\Pr[k\gets\mathsf{QM}.\KeyGen(1^\secp)]
\Pr[k'\gets\cA(\$_k^{\otimes t})]\Pr[\mathsf{Count}(k,\$_{k'}^{\otimes t+1})\ge t+1]\\
&\ge&\sum_k\Pr[k\gets\mathsf{QM}.\KeyGen(1^\secp)]
\Pr[k\gets\cA(\$_k^{\otimes t})]\Pr[\mathsf{Count}(k,\$_k^{\otimes t+1})\ge t+1]\\
&=&
\sum_k\Pr[k\gets\mathsf{QM}.\KeyGen(1^\secp)]
\Pr[k\gets\cA(\$_k^{\otimes t})]
\Pr[\top\gets\mathsf{QM}.\Ver(k,\$_k)]^{t+1}\\
&\ge&
\sum_{k\in S\cap T}\Pr[k\gets\mathsf{QM}.\KeyGen(1^\secp)]
\Pr[k\gets\cA(\$_k^{\otimes t})]
\Pr[\top\gets\mathsf{QM}.\Ver(k,\$_k)]^{t+1}\\
&\ge&
\frac{1}{2p}
\Big(1-\frac{1}{2(t+1)}\Big)^{t+1}
\Big(\frac{1}{2p}-\negl(\secp)\Big)\\
&\ge&
\frac{1}{2p}
\Big(1-\frac{1}{2}\Big)
\Big(\frac{1}{2p}-\negl(\secp)\Big),
\end{eqnarray*}
which is non-negligible.
Here, in the last inequality, we have used Bernoulli's inequality.
The $\cB$ therefore breaks the security of the private-key quantum money scheme.
\end{proof}
\fi

\ifnum\anonymous=1
\else
{\bf Acknowledgements.}
TM is supported by
JST CREST JPMJCR23I3,
JST Moonshot R\verb|&|D JPMJMS2061-5-1-1, 
JST FOREST, 
MEXT QLEAP, 
the Grant-in-Aid for Scientific Research (B) No.JP19H04066, 
the Grant-in Aid for Transformative Research Areas (A) 21H05183,
and 
the Grant-in-Aid for Scientific Research (A) No.22H00522.
\fi

\ifnum\submission=0
\bibliographystyle{alpha} 
\else
\bibliographystyle{splncs04}
\fi
\bibliography{abbrev3,crypto,reference}

\newcommand{\etalchar}[1]{$^{#1}$}
\begin{thebibliography}{YWLQ15}

\bibitem[Aar19]{Shadow2}
Scott Aaronson.
\newblock Shadow tomography of quantum states.
\newblock {\em {SIAM} J. Comput.}, 49(5):STOC18--368, 2019.

\bibitem[AC12]{STOC:AarChr12}
Scott Aaronson and Paul Christiano.
\newblock Quantum money from hidden subspaces.
\newblock In Howard~J. Karloff and Toniann Pitassi, editors, {\em 44th ACM
  STOC}, pages 41--60. {ACM} Press, May 2012.

\bibitem[AGM21]{Alagic_2021}
Gorjan Alagic, Tommaso Gagliardoni, and Christian Majenz.
\newblock Can you sign a quantum state?
\newblock {\em Quantum}, 5:603, dec 2021.

\bibitem[AGQY22]{TCC:Luowen}
Prabhanjan Ananth, Aditya Gulati, Luowen Qian, and Henry Yuen.
\newblock Pseudorandom (function-like) quantum state generators: New
  definitions and applications.
\newblock TCC, 2022.

\bibitem[AQY22]{C:AnaQiaYue22}
Prabhanjan Ananth, Luowen Qian, and Henry Yuen.
\newblock Cryptography from pseudorandom quantum states.
\newblock In Yevgeniy Dodis and Thomas Shrimpton, editors, {\em CRYPTO~2022,
  Part~I}, volume 13507 of {\em {LNCS}}, pages 208--236. Springer, Heidelberg,
  August 2022.

\bibitem[BCQ22]{cryptoeprint:2022/1181}
Zvika Brakerski, Ran Canetti, and Luowen Qian.
\newblock On the computational hardness needed for quantum cryptography.
\newblock Cryptology ePrint Archive, Paper 2022/1181, 2022.

\bibitem[BS19]{TCC:BraShm19}
Zvika Brakerski and Omri Shmueli.
\newblock ({P}seudo) random quantum states with binary phase.
\newblock In Dennis Hofheinz and Alon Rosen, editors, {\em TCC~2019, Part~I},
  volume 11891 of {\em {LNCS}}, pages 229--250. Springer, Heidelberg, December
  2019.

\bibitem[BS20]{C:BraShm20}
Zvika Brakerski and Omri Shmueli.
\newblock Scalable pseudorandom quantum states.
\newblock In Daniele Micciancio and Thomas Ristenpart, editors, {\em
  CRYPTO~2020, Part~II}, volume 12171 of {\em {LNCS}}, pages 417--440.
  Springer, Heidelberg, August 2020.

\bibitem[CGG{\etalchar{+}}23]{CGGHLP23}
Bruno Cavalar, Eli Goldin, Matthew Gray, Peter Hall, Yanyi Liu, and Angelos
  Pelecanos.
\newblock On the computational hardness of quantum one-wayness.
\newblock {\em arXiv:2312.08363}, 2023.

\bibitem[CHS05]{TCC:CanHalSte05}
Ran Canetti, Shai Halevi, and Michael Steiner.
\newblock Hardness amplification of weakly verifiable puzzles.
\newblock In Joe Kilian, editor, {\em TCC~2005}, volume 3378 of {\em {LNCS}},
  pages 17--33. Springer, Heidelberg, February 2005.

\bibitem[CX22]{cryptoeprint:2022/1323}
Shujiao Cao and Rui Xue.
\newblock On constructing one-way quantum state generators, and more.
\newblock Cryptology ePrint Archive, Paper 2022/1323, 2022.

\bibitem[DMS00]{EC:DumMaySal00}
Paul Dumais, Dominic Mayers, and Louis Salvail.
\newblock Perfectly concealing quantum bit commitment from any quantum one-way
  permutation.
\newblock In Bart Preneel, editor, {\em EUROCRYPT~2000}, volume 1807 of {\em
  {LNCS}}, pages 300--315. Springer, Heidelberg, May 2000.

\bibitem[FUYZ20]{FUYZ20}
Junbin Fang, Dominique Unruh, Jun Yan, and Dehua Zhou.
\newblock How to base security on the perfect/statistical binding property of
  quantum bit commitment?
\newblock Cryptology ePrint Archive, Report 2020/621, 2020.
\newblock \url{https://ia.cr/2020/621}.

\bibitem[GGKT05]{SIAM:GGKT05}
Rosario Gennaro, Yael Gertner, Jonathan Katz, and Luca Trevisan.
\newblock Bounds on the efficiency of generic cryptographic constructions.
\newblock SIAM J. COMPUT. Vol. 35, No. 1, pp. 217–246, 2005.

\bibitem[Gol90]{Gol90}
Oded Goldreich.
\newblock A note on computational indistinguishability.
\newblock Information Processing Letters 34.6 (1990), pp.277–281., 1990.

\bibitem[Gol01]{DBLP:books/cu/Goldreich2001}
Oded Goldreich.
\newblock {\em The Foundations of Cryptography - Volume 1: Basic Techniques}.
\newblock Cambridge University Press, 2001.

\bibitem[HMY23]{EC:HhaMorYam23}
Minki Hhan, Tomoyuki Morimae, and Takashi Yamakawa.
\newblock From the hardness of detecting superpositions to cryptography:
  Quantum public key encryption and commitments.
\newblock In Carmit Hazay and Martijn Stam, editors, {\em EUROCRYPT~2023,
  Part~I}, volume 14004 of {\em {LNCS}}, pages 639--667. Springer, Heidelberg,
  April 2023.

\bibitem[HS11]{TCC:HolSch11}
Thomas Holenstein and Grant Schoenebeck.
\newblock General hardness amplification of predicates and puzzles (extended
  abstract).
\newblock In Yuval Ishai, editor, {\em TCC~2011}, volume 6597 of {\em {LNCS}},
  pages 19--36. Springer, Heidelberg, March 2011.

\bibitem[IJK09]{JC:ImpJaiKab09}
Russell Impagliazzo, Ragesh Jaiswal, and Valentine Kabanets.
\newblock Chernoff-type direct product theorems.
\newblock {\em Journal of Cryptology}, 22(1):75--92, January 2009.

\bibitem[Imp95]{Impagliazzo95}
Russell Impagliazzo.
\newblock A personal view of average-case complexity.
\newblock In {\em Proceedings of the Tenth Annual Structure in Complexity
  Theory Conference, Minneapolis, Minnesota, USA, June 19-22, 1995}, pages
  134--147. {IEEE} Computer Society, 1995.

\bibitem[IR90]{C:ImpRud88}
Russell Impagliazzo and Steven Rudich.
\newblock Limits on the provable consequences of one-way permutations.
\newblock In Shafi Goldwasser, editor, {\em CRYPTO'88}, volume 403 of {\em
  {LNCS}}, pages 8--26. Springer, Heidelberg, August 1990.

\bibitem[ISV{\etalchar{+}}17]{PKC:ISVWY17}
Gene Itkis, Emily Shen, Mayank Varia, David Wilson, and Arkady Yerukhimovich.
\newblock Bounded-collusion attribute-based encryption from minimal
  assumptions.
\newblock In Serge Fehr, editor, {\em PKC~2017, Part~II}, volume 10175 of {\em
  {LNCS}}, pages 67--87. Springer, Heidelberg, March 2017.

\bibitem[JLS18]{C:JiLiuSon18}
Zhengfeng Ji, Yi-Kai Liu, and Fang Song.
\newblock Pseudorandom quantum states.
\newblock In Hovav Shacham and Alexandra Boldyreva, editors, {\em CRYPTO~2018,
  Part~III}, volume 10993 of {\em {LNCS}}, pages 126--152. Springer,
  Heidelberg, August 2018.

\bibitem[KKNY12]{JC:KKNY12}
Akinori Kawachi, Takeshi Koshiba, Harumichi Nishimura, and Tomoyuki Yamakami.
\newblock Computational indistinguishability between quantum states and its
  cryptographic application.
\newblock {\em Journal of Cryptology}, 25(3):528--555, July 2012.

\bibitem[KQST23]{STOC23:KreQiaSinTal}
William Kretschmer, Luowen Qian, Makrand Sinha, and Avishay Tal.
\newblock Quantum cryptography in algorithmica.
\newblock STOC, 2023.

\bibitem[Kre21]{Kre21}
W.~Kretschmer.
\newblock Quantum pseudorandomness and classical complexity.
\newblock {\em TQC 2021}, 2021.

\bibitem[KT23]{cryptoeprint:2023/1620}
Dakshita Khurana and Kabir Tomer.
\newblock Commitments from quantum one-wayness.
\newblock Cryptology ePrint Archive, Paper 2023/1620, 2023.
\newblock \url{https://eprint.iacr.org/2023/1620}.

\bibitem[Lam79]{Lamport_signature}
Leslie Lamport.
\newblock Constructing digital signatures from a one-way function.
\newblock Technical Report SRI-CSL-98, 1979., 1979.

\bibitem[LC19]{LC19}
Ching-Yi Lai and Kai-Min Chung.
\newblock Quantum encryption and generalized quantum shannon impossibility.
\newblock {\em Designs, Codes and Cryptography volume 87, pages 1961–1972
  (2019)}, 2019.

\bibitem[LMW23]{cryptoeprint:2023/1602}
Alex Lombardi, Fermi Ma, and John Wright.
\newblock A one-query lower bound for unitary synthesis and breaking quantum
  cryptography.
\newblock Cryptology ePrint Archive, Paper 2023/1602, 2023.
\newblock \url{https://eprint.iacr.org/2023/1602}.

\bibitem[Mon19]{Montanaro19}
Ashley Montanaro.
\newblock Pretty simple bounds on quantum state discrimination.
\newblock {\em arXiv:1908.08312}, 2019.

\bibitem[MY22]{C:MorYam22}
Tomoyuki Morimae and Takashi Yamakawa.
\newblock Quantum commitments and signatures without one-way functions.
\newblock In Yevgeniy Dodis and Thomas Shrimpton, editors, {\em CRYPTO~2022,
  Part~I}, volume 13507 of {\em {LNCS}}, pages 269--295. Springer, Heidelberg,
  August 2022.

\bibitem[RS19]{CoRR:RadSat19}
Roy Radian and Or~Sattath.
\newblock Semi-quantum money.
\newblock {\em arXiv/1908.08889}, 2019.

\bibitem[Unr16]{EC:Unruh16}
Dominique Unruh.
\newblock Computationally binding quantum commitments.
\newblock In Marc Fischlin and Jean-S{\'{e}}bastien Coron, editors, {\em
  EUROCRYPT~2016, Part~II}, volume 9666 of {\em {LNCS}}, pages 497--527.
  Springer, Heidelberg, May 2016.

\bibitem[Yan20]{cryptoeprint:2020/1488}
Jun Yan.
\newblock General properties of quantum bit commitments.
\newblock Cryptology ePrint Archive, Paper 2020/1488, 2020.

\bibitem[Yan21]{AC:Yan21}
Jun Yan.
\newblock Quantum computationally predicate-binding commitments with
  application in quantum zero-knowledge arguments for {NP}.
\newblock In Mehdi Tibouchi and Huaxiong Wang, editors, {\em ASIACRYPT~2021,
  Part~I}, volume 13090 of {\em {LNCS}}, pages 575--605. Springer, Heidelberg,
  December 2021.

\bibitem[Yan22]{AC:Yan22}
Jun Yan.
\newblock General properties of quantum bit commitments (extended abstract).
\newblock In Shweta Agrawal and Dongdai Lin, editors, {\em ASIACRYPT~2022,
  Part~IV}, volume 13794 of {\em {LNCS}}, pages 628--657. Springer, Heidelberg,
  December 2022.

\bibitem[Yao82]{FOCS:Yao82a}
Andrew Chi-Chih Yao.
\newblock Theory and applications of trapdoor functions (extended abstract).
\newblock In {\em 23rd FOCS}, pages 80--91. {IEEE} Computer Society Press,
  November 1982.

\bibitem[YWLQ15]{YWLQ15}
Jun Yan, Jian Weng, Dongdai Lin, and Yujuan Quan.
\newblock Quantum bit commmitment with application in quantum zero-knowledge
  proof (extended abstract).
\newblock In Khaled~M. Elbassioni and Kazuhisa Makino, editors, {\em {ISAAC}
  2015}, volume 9472 of {\em Lecture Notes in Computer Science}, pages
  555--565. Springer, 2015.

\end{thebibliography}

\ifnum\submission=1
\else
\appendix
\section{PRSGs}
\label{sec:PRSs}
PRSGs are defined as follows.

\begin{definition}[Pseudorandom quantum states generators (PRSGs)~\cite{C:JiLiuSon18}]
\label{definition:PRSGs}
A pseudorandom quantum states generator (PRSG) is a set of algorithms $(\KeyGen,\StateGen)$ as follows.
\begin{itemize}
    \item $\KeyGen(1^\secp)\to k:$
    It is a QPT algorithm that, on input the security parameter $\secp$, outputs a classical key $k$.
    \item
    $\StateGen(k)\to|\phi_k\rangle:$
It is a QPT algorithm that, on input $k$,
outputs an $m$-qubit quantum state $|\phi_k\rangle$.
\end{itemize}
We require the following security:
For any polynomial $t$ and any QPT adversary $\cA$,  
\begin{eqnarray*}
|\Pr_{k\gets \KeyGen(1^\secp)}[1\gets\cA(|\phi_k\rangle^{\otimes t})]
-\Pr_{|\psi\rangle\leftarrow \mu_m}[1\gets\cA(|\psi\rangle^{\otimes t})]|
\le \negl(\secp),
\end{eqnarray*}
where $\mu_m$ is the Haar measure on $m$-qubit states.
\end{definition}

\if0
\begin{definition}[Pseudorandom quantum states generators (PRSGs)~\cite{C:JiLiuSon18}]
\label{definition:PRSGs}
A pseudorandom quantum states generator (PRSG) is a QPT
algorithm $\StateGen$ that, on input $k\in\{0,1\}^\secp$,
outputs an $m$-qubit quantum state $|\phi_k\rangle$ such that the following is satisfied.
For any polynomial $t$ and any QPT adversary $\cA$,  
\begin{eqnarray*}
|\Pr_{k\leftarrow \{0,1\}^\secp}[1\gets\cA(|\phi_k\rangle^{\otimes t})]
-\Pr_{|\psi\rangle\leftarrow \mu_m}[1\gets\cA(|\psi\rangle^{\otimes t})]|
\le \negl(\secp),
\end{eqnarray*}
where $\mu_m$ is the Haar measure on $m$-qubit states.
\end{definition}
\fi

\begin{remark}
In the original definition of PRSGs~\cite{C:JiLiuSon18}, the classical key $k$ is uniformly sampled at random.
We use a more general definition where $k$ is sampled by a QPT algorithm, which was introduced in \cite{TCC:BraShm19}.
\end{remark}

\begin{remark}
Note that in the above definition, the outputs of $\StateGen$ are assumed to be pure. 
It could be possible to consider mixed states, but anyway the states have to be negligibly close to pure states
(except for a negligible fraction of $k$).\footnote{
Consider the case $t=2$, and consider a QPT adversary $\cA$ that, given $\phi_k^{\otimes 2}$ or $|\psi\rangle^{\otimes 2}$, runs
the SWAP test on them. Then,
\begin{eqnarray*}
\Big|\frac{1}{2^\secp}\sum_k\Pr[1\gets\cA(\phi_k^{\otimes 2})]-\mathbb{E}_{|\psi\rangle\gets\mu_m}\Pr[1\gets\cA(|\psi\rangle^{\otimes 2})]\Big|
&=&\Big|\frac{1}{2^\secp}\sum_k\frac{1+\mbox{Tr}(\phi_k^2)}{2}-\frac{1+1}{2}\Big|\\
&=&\frac{1}{2}\Big|\frac{1}{2^\secp}\sum_k\mbox{Tr}(\phi_k^2)-1\Big|,
\end{eqnarray*}
which has to be negligible.
Therefore, 
$\mbox{Tr}(\phi_k^2)$ is negligibly close to 1 (except for a negligible fraction of $k$).
}
We therefore, for simplicity, assume that the outputs of $\StateGen$ are pure.
In that case, $\StateGen$ runs as follows: apply a QPT unitary $U$ on $|k\rangle|0...0\rangle$ to generate
$U(|k\rangle|0...0\rangle)=|\phi_k\rangle\otimes|\eta_k\rangle$, and output $|\phi_k\rangle$. Note that the existance of
the ``junk'' register $|\eta_k\rangle$ is essential, because otherwise it is not secure.\footnote{Consider a QPT adversary who applies $U^\dagger$ on each copy of the received states, and measures them in the computational
basis. If the same $k$ is obtained many times, the adversary concludes that
the states are copies of pseudorandom states.}
(The simplest example of the junk register $|\eta_k\rangle$ would be $|\eta_k\rangle=|k\rangle$,
i.e., $U(|k\rangle|0...0\rangle)=|k\rangle\otimes|\phi_k\rangle$.
In that case, it is often convenient to consider that
$\StateGen$ applies a QPT unitary $U_k$ that depends on $k$ on the state $|0...0\rangle$,
and outputs $|\phi_k\rangle\coloneqq U_k|0...0\rangle$.)
\end{remark}

\begin{remark}
PRSGs can be constructed from (quantum-secure) one-way functions~\cite{C:JiLiuSon18,TCC:BraShm19,C:BraShm20}.
\end{remark}

\begin{remark}
\cite{C:BraShm20} showed that PRSGs with $m=c\log\secp$ for some $0<c<1$ can be constructed
with the statistical security.
On the other hand, \cite{TCC:Luowen}, showed that PRSGs with $m\ge\log\secp$ require computational assumptions.
\cite{TCC:Luowen} also pointed out that the result of \cite{Kre21} can be refined to show that
the existence of PRSGs with $m=(1+\epsilon)\log\secp$ for all $\epsilon>0$ implies $\BQP\neq \bf{PP}$. 
\end{remark}

\section{Verification Algorithm for Special Case}
\label{sec:Verpurephi}

\begin{lemma}
Let $(\KeyGen,\StateGen,\Ver)$ be a OWSG that satisfies the following.
\begin{itemize}
\item 
All $\phi_k$ are pure.
\item
$\Pr[\top\gets\Ver(k,\phi_k)]\ge1-\negl(\secp)$
for all $k$.
\end{itemize}
Then, the following OWSG $(\KeyGen',\StateGen',\Ver')$ exists.
\begin{itemize}
    \item 
    $\KeyGen',\StateGen'$ are the same as $\KeyGen$ and $\StateGen$, respectively.
    \item
    $\Ver'(k',\phi_k)$ is the following algorithm.
    Measure $\phi_k$ with the basis $\{|\phi_{k'}\rangle\langle\phi_{k'}|,I-|\phi_{k'}\rangle\langle\phi_{k'}|\}$,
    and output $\top$ if the first result is obtained.
    Otherwise, output $\bot$.
\end{itemize}
\end{lemma}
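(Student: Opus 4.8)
The plan is to reuse, in a streamlined form, the argument from the proof of \cref{thm:money_pure}. Correctness of $(\KeyGen',\StateGen',\Ver')$ is immediate, since $\phi_k$ is pure and hence $\Ver'(k,\phi_k)$ obtains the first measurement outcome with probability $|\langle\phi_k|\phi_k\rangle|^2=1$. I would first record two elementary facts: (i) for all $k,k'$ one has $\Pr[\top\gets\Ver'(k',\phi_k)]=|\langle\phi_k|\phi_{k'}\rangle|^2$; and (ii) this measurement is QPT-implementable even when $\StateGen$ produces a junk register $|\eta_{k'}\rangle$, by the trick already used in \cref{thm:money_pure} — generate $U(|k'\rangle|0\cdots0\rangle)=|\phi_{k'}\rangle|\eta_{k'}\rangle$, discard the $|\phi_{k'}\rangle$ register, put the input $\phi_k$ in its place, apply $U^\dagger$, and accept iff a computational-basis measurement returns $k'0\cdots0$; a one-line computation shows the acceptance probability equals $|\langle\phi_k|\phi_{k'}\rangle|^2$ regardless of $|\eta_{k'}\rangle$.

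For security I would argue by contradiction. Suppose $(\KeyGen',\StateGen',\Ver')$ is not secure; then there are a QPT $\cA$, a polynomial $t$, and a polynomial $p$ with
\[
\sum_{k,k'}\Pr[k\gets\KeyGen(1^\secp)]\,\Pr[k'\gets\cA(\phi_k^{\otimes t})]\,|\langle\phi_k|\phi_{k'}\rangle|^2\ge\frac1p .
\]
Set $S\coloneqq\{(k,k'):|\langle\phi_k|\phi_{k'}\rangle|^2\ge\tfrac1{2p}\}$. Splitting the sum over $S$ and its complement exactly as in \cref{thm:money_pure} (bounding the overlap by $1$ on $S$ and by $\tfrac1{2p}$ off $S$) yields $\sum_{(k,k')\in S}\Pr[k\gets\KeyGen(1^\secp)]\Pr[k'\gets\cA(\phi_k^{\otimes t})]>\tfrac1{2p}$.

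Now comes the one step where the hypotheses are used. For $(k,k')\in S$, the fidelity/trace-distance identity for pure states gives $\tfrac12\big\||\phi_k\rangle\langle\phi_k|-|\phi_{k'}\rangle\langle\phi_{k'}|\big\|_1=\sqrt{1-|\langle\phi_k|\phi_{k'}\rangle|^2}\le\sqrt{1-\tfrac1{2p}}\le 1-\tfrac1{4p}$. Since ``$\Ver(k',\cdot)$ outputs $\top$'' is a POVM element and, by hypothesis, $\Pr[\top\gets\Ver(k',\phi_{k'})]\ge1-\negl(\secp)$ for every key (in particular $k'$), we obtain $\Pr[\top\gets\Ver(k',\phi_k)]\ge(1-\negl(\secp))-(1-\tfrac1{4p})\ge\tfrac1{8p}$ for all large $\secp$. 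Finally, let $\cB$ be the adversary against $(\KeyGen,\StateGen,\Ver)$ that on input $\phi_k^{\otimes t}$ runs $k'\gets\cA(\phi_k^{\otimes t})$ and outputs $k'$; its success probability is at least $\sum_{(k,k')\in S}\Pr[k\gets\KeyGen(1^\secp)]\Pr[k'\gets\cA(\phi_k^{\otimes t})]\cdot\tfrac1{8p}>\tfrac1{16p^2}$, which is non-negligible and contradicts the security of the original OWSG. Hence $(\KeyGen',\StateGen',\Ver')$ is a OWSG.

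This argument has no genuine obstacle; the only point requiring care is the passage from ``$|\langle\phi_k|\phi_{k'}\rangle|^2$ is inverse-polynomial'' to ``$\Ver(k',\phi_k)$ accepts with inverse-polynomial probability'', which crucially uses both the near-perfect correctness of $\Ver$ and the trace-distance bound, and where the constants (the $\tfrac1{2p}$ threshold, the $\tfrac1{4p}$ slack) must be chosen so that the final advantage remains $1/\poly(\secp)$. Everything else is bookkeeping identical to \cref{thm:money_pure}; note that, unlike there, no ``good-key'' set $T$ is needed because here the correctness hypothesis on $\Ver$ holds pointwise in $k$ rather than only on average.
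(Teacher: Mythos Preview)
Your proof is correct and follows essentially the same approach as the paper's: the same set $S$, the same trace-distance/Bernoulli step to transfer acceptance from $\phi_{k'}$ to $\phi_k$, and the same reduction adversary $\cB$. Your constants are slightly looser (you land at $\tfrac{1}{16p^2}$ instead of roughly $\tfrac{1}{8p^2}$), and you add a helpful explicit check that $\Ver'$ is QPT-implementable via the junk-register trick, but otherwise the arguments are identical.
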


\begin{proof}
The correctness of $(\KeyGen',\StateGen',\Ver')$ is clear.
Let us show the security.
Assume that it is not secure.
Then, there exists a QPT adversary $\cA$, a polynomial $t$, and a polynomial $p$, such that
\begin{eqnarray*}
\sum_{k,k'}\Pr[k\gets\KeyGen(1^\secp)]
\Pr[k'\gets\cA(|\phi_k\rangle^{\otimes t})]|\langle\phi_k|\phi_{k'}\rangle|^2\ge\frac{1}{p}.
\end{eqnarray*}
If we define
\begin{eqnarray*}
S\coloneqq\Big\{(k,k')~\Big|~|\langle\phi_k|\phi_{k'}\rangle|^2\ge\frac{1}{2p}\Big\},
\end{eqnarray*}
we have
\begin{eqnarray*}
\sum_{(k,k')\in S}\Pr[k\gets\KeyGen(1^\secp)]
\Pr[k'\gets\cA(|\phi_k\rangle^{\otimes t})]\ge\frac{1}{2p}.
\end{eqnarray*}
From the $\cA$, we construct a QPT adversary $\cB$ that breaks the security of the original OWSG as follows:
On input $|\phi_k\rangle^{\otimes t}$, run $k'\gets\cA(|\phi_k\rangle^{\otimes t})$, and output $k'$.
Then the probability that $\cB$ breaks the original OWSG is
\begin{eqnarray*}
&&\sum_{k,k'}\Pr[k\gets\KeyGen(1^\secp)]
\Pr[k'\gets\cA(|\phi_k\rangle^{\otimes t})]
\Pr[\top\gets\Ver(k',|\phi_k\rangle)]\\
&\ge&
\sum_{(k,k')\in S}\Pr[k\gets\KeyGen(1^\secp)]
\Pr[k'\gets\cA(|\phi_k\rangle^{\otimes t})]
\Pr[\top\gets\Ver(k',|\phi_k\rangle)]\\
&\ge&
\sum_{(k,k')\in S}\Pr[k\gets\KeyGen(1^\secp)]
\Pr[k'\gets\cA(|\phi_k\rangle^{\otimes t})]
\Big(\Pr[\top\gets\Ver(k',|\phi_{k'}\rangle)]-\sqrt{1-\frac{1}{2p}}\Big)\\
&\ge&
\frac{1}{2p}
\Big(1-\negl(\secp)-\Big(1-\frac{1}{4p}\Big)\Big),
\end{eqnarray*}
which is non-negligible. Therefore the $\cB$
breaks the security of the original OWSG.
Here, in the second inequality, we have used the fact that for any $(k,k')\in S$, $|\langle\phi_k|\phi_{k'}\rangle|^2\ge\frac{1}{2p}$,
which means that $\mbox{Tr}(\Pi |\phi_k\rangle\langle\phi_k|)-\mbox{Tr}(\Pi |\phi_{k'}\rangle\langle\phi_{k'}|)\le\sqrt{1-\frac{1}{2p}}$
for any POVM element $\Pi$.
In the last inequality, we have used Bernoulli's inequality, $(1+x)^r\le 1+rx$ for any $0\le r\le 1$ and $x\ge-1$.
\end{proof}
\section{OWSGs from PRSGs with Improved Parameters}
\label{sec:OWSGfromPRSG_improved}

The following result was pointed out by Luowen Qian.
Also, a concurrent paper~\cite{CGGHLP23} shows similar results.
\begin{theorem}\label{thm:PRSG_OWSG}
If PRSGs with $m\ge\log\kappa$ exist, then OWSGs exist.
\end{theorem}

\begin{proof}[Proof of \cref{thm:PRSG_OWSG}]
Let $\mathsf{PRSG}.\KeyGen(1^\secp)$ and $\mathsf{PRSG}.\StateGen(k)\to |\xi_k\rangle$ be a PRSG, where $|\xi_k\rangle$ is an $m$-qubit state.
From it, we construct a OWSG as follows.
\begin{itemize}
    \item 
    $\KeyGen(1^\secp)\to k:$ Run $k\gets\mathsf{PRSG}.\KeyGen(1^\secp)$.
    \item
    $\StateGen(k)\to|\phi_k\rangle:$ Run $|\xi_k\rangle\gets\mathsf{PRSG}.\StateGen(k)$ $r$ times,
    where $r$ is a polynomial specified later.
    Output $|\phi_k\rangle\coloneqq|\xi_k\rangle^{\otimes r}$.
    \item
    $\Ver(k',|\phi_k\rangle)\to\top/\bot:$
    Parse $|\phi_k\rangle=|\xi_k\rangle^{\otimes r}$.
    Measure $|\phi_k\rangle$ with the basis $\{|\phi_{k'}\rangle\langle\phi_{k'}|,I-|\phi_{k'}\rangle\langle\phi_{k'}|\}$.
    If the first result is obtained, output $\top$. Otherwise, output $\bot$.
\end{itemize}

Its correctness is clear. Let us show the security.
Assume that it is not secure. Then, there exists a QPT adversary $\cA$, a polynomial $t$, and a polynomial $p$, such that
\begin{eqnarray}
\sum_{k,k'}\Pr[k\gets\mathsf{PRSG}.\KeyGen(1^\secp)]\Pr[k'\gets\cA(|\phi_k\rangle^{\otimes t})]|\langle\phi_k|\phi_{k'}\rangle|^2\ge\frac{1}{p}.
\label{OWSGfromPRSG_A}
\end{eqnarray}
From this $\cA$, we construct a QPT adversary $\cB$ that breaks the security of the PRSG as follows.
Let $b\in\bit$ be the parameter of the following security game.
\begin{enumerate}
    \item 
    The challenger $\cC$ of the security game of the PRSG sends $\rho^{\otimes rt+r}$ to $\cB$.
    Here, $\rho$ is chosen as follows.
    If $b=0$,
    $\cC$ runs $k\gets\mathsf{PRSG}.\KeyGen(1^\secp)$,
    runs $|\xi_k\rangle\gets\mathsf{PRSG}.\StateGen(k)$, and set $\rho=|\xi_k\rangle$.
    If $b=1$,
    $\rho$ is an $m$-qubit Haar random state $|\psi\rangle$.
    \item
    $\cB$ runs $k'\gets \cA(\rho^{\otimes rt})$. 
    $\cB$ measures $\rho^{\otimes r}$ with the basis $\{|\xi_{k'}\rangle\langle\xi_{k'}|^{\otimes r},I-|\xi_{k'}\rangle\langle\xi_{k'}|^{\otimes r}\}$.
    If the first result is obtained, $\cB$ outputs $b'=0$.
    Otherwise, it outputs $b'=1$.
\end{enumerate}
Then, $\Pr[b'=0~|~b=0]$ is equivalent to the left-hand-side of Eq.~(\ref{OWSGfromPRSG_A}), and therefore non-negligible.
On the other hand,
\begin{eqnarray*}
\Pr[b'=0~|~b=1]
&=&
\mathbb{E}_{|\psi\rangle\gets\mu_m}
\sum_{k'}
\Pr[k'\gets\cA(|\psi\rangle^{\otimes rt})]
|\langle\xi_{k'}|\psi\rangle|^{2r}\\
&\le&
\mathbb{E}_{|\psi\rangle\gets\mu_m}
\sum_{k'}
|\langle\xi_{k'}|\psi\rangle|^{2r}\\
&=&
\sum_{k'}
\langle\xi_{k'}|^{\otimes r}
\Big(\mathbb{E}_{|\psi\rangle\gets\mu_m}
|\psi\rangle\langle\psi|^{\otimes r}
\Big)
|\xi_{k'}\rangle^{\otimes r}\\
&=&
\sum_{k'}
\langle\xi_{k'}|^{\otimes r}
\frac{\Pi_{sym}^{2^m,r}}
{\mbox{Tr}(\Pi_{sym}^{2^m,r})}
|\xi_{k'}\rangle^{\otimes r}\\
&\le&
\sum_{k'}
\frac{1}
{\mbox{Tr}(\Pi_{sym}^{2^m,r})}\\
&\le&
2^\kappa 2^{-rm}r!\\
&\le&
\negl(\secp),
\end{eqnarray*}
where $\mathbb{E}_{|\psi\rangle\gets\mu_m}$ is the expectation value over
$m$-qubit Haar random,
$\Pi_{sym}^{2^m,r}$ is the projector onto the symmetric subspace of $(\mathbb{C}^{2^m})^{\otimes r}$.
In the third inequality, we have used the fact that
$\mbox{Tr}(\Pi_{sym}^{N,t})\ge\frac{N^t}{t!}$.
In the last inequality, we have taken $m\ge\log\kappa$ and $r=\kappa$,
and used
Stirling's approximation.
The $\cB$ therefore breaks the security of the PRSG.
\end{proof} 
\section{Impossibility of Statistically-Secure QDSs}
\label{app:stat_QDSs}
In this appendix, we show the following:
\begin{theorem}
Statistically-secure QDSs do not exist.    
\end{theorem}

\begin{proof}
Let us consider the following unbounded adversary $\cA$:    
\begin{enumerate}
    \item 
    Given $\pk^{\otimes t}$ with a certain polynomial $t$, run the shadow tomography algorithm to find $\sigma$
    such that $\Pr[\top\gets\Ver(\pk,m,\sigma)]\ge 1-\frac{1}{p(\secp)}$,
    where $m$ is any fixed bit string, and $p$ is any fixed polynomial.
    If there is no such $\sigma$, choose a bit string $\sigma$ uniformly at random.
    \item 
    Output $m$ and $\sigma$.
\end{enumerate}
We show that the probability that such $\cA$ win the security game is non-negligible.
First, define the set 
\begin{align}
G\coloneqq\left\{\sk:\sum_\sigma\Pr[\sigma\gets\Sign(\sk,m)]\Pr[\top\gets\Ver(\pk,m,\sigma)]\ge1-\frac{1}{p}\right\}.    
\label{setG}
\end{align}
Then, from the correctness,
\begin{align}
1-\negl(\secp)&\le \sum_\sk \Pr[\sk\gets\SKGen(1^\secp)]\sum_\sigma\Pr[\sigma\gets\Sign(\sk,m)]\Pr[\top\gets\Ver(\pk,m,\sigma)]\\    
&=
\sum_{\sk\in G} \Pr[\sk\gets\SKGen(1^\secp)]\sum_\sigma\Pr[\sigma\gets\Sign(\sk,m)]\Pr[\top\gets\Ver(\pk,m,\sigma)]\\    
&+\sum_{\sk\notin G} \Pr[\sk\gets\SKGen(1^\secp)]\sum_\sigma\Pr[\sigma\gets\Sign(\sk,m)]\Pr[\top\gets\Ver(\pk,m,\sigma)]\\    
&\le
\sum_{\sk\in G} \Pr[\sk\gets\SKGen(1^\secp)]
+1-\frac{1}{p},
\end{align}
which means
\begin{align}
\sum_{\sk\in G} \Pr[\sk\gets\SKGen(1^\secp)]
\ge\frac{1}{p}-\negl(\secp).
\label{fraction}
\end{align}
For each $\sk\in G$,
there exists $\sigma_\sk$ such that
$\Pr[\top\gets\Ver(\pk,m,\sigma_\sk)]\ge1-\frac{1}{p}$, because otherwise it contradicts
\cref{setG}.
Therefore, for each $\sk\in G$, $\cA$ can find $\sigma$ such that
$\Pr[\top\gets\Ver(\pk,m,\sigma)]\ge1-\frac{1}{p}$, 
and because of \cref{fraction}, the probability that $\cA$ wins is non-negligible.

\end{proof}
\section{Quantum SKE and Quantum PKE}\label{sec:QSKE}
In this section, we define quantum SKE (with classical keys and quantum ciphertexts) and its IND-CPA security. 
Then we show that IND-CPA secure quantum SKE implies QPOTP. 
By \cref{thm:QPOTP_OWSG,thm:QPOTP_EFI}, this means that it implies both OWSGs and EFI pairs. 
We also observe that IND-CPA secure quantum SKE as defined here is equivalent to quantum PKE defined in \cite{JC:KKNY12}. 
\subsection{Quantum SKE}
We define quantum SKE as SKE with classical keys and quantum ciphertexts. 
\begin{definition}[Quantum SKE]
A quantum SKE scheme is a tuple of algorithms $(\KeyGen,\Enc,\Dec)$ such that
\begin{itemize}
    \item 
    $\KeyGen(1^\secp)\to \sk:$
    It is a QPT algorithm that, on input the security parameter $\secp$, outputs
    a classical secret key $\sk$.
    \item
    $\Enc(\sk,x)\to\ct:$
    It is a QPT algorithm that, on input a key $\sk$ and a message $x\in\bit^\ell$, outputs
    a quantum ciphertext $\ct$.
    \item
    $\Dec(\sk,\ct)\to x':$
    It is a QPT algorithm that, on input $\sk$ and $\ct$, outputs $x'\in\bit^\ell$.
\end{itemize}
\paragraph{\bf Correctness:}
For any $x\in\bit^\ell$,
\begin{eqnarray*}
\Pr[x\gets\Dec(\sk,\ct):\sk\gets\KeyGen(1^\secp),\ct\gets\Enc(\sk,x)]
\ge1-\negl(\secp).
\end{eqnarray*}
\paragraph{\bf IND-CPA Security:}
For any QPT adversary $\cA=(\cA_1,\cA_2)$, 
\begin{eqnarray*}
&&|\Pr[1\gets \cA_2^{\Enc(\sk,\cdot)}(\mathsf{st},\ct^*):\sk\gets\KeyGen(1^\secp),(x_0,x_1,\mathsf{st})\gets \cA_1^{\Enc(\sk,\cdot)}(1^\secp),\ct^*\gets\Enc(\sk,x_0)]\\
&&-\Pr[1\gets \cA_2^{\Enc(\sk,\cdot)}(\mathsf{st},\ct^*):\sk\gets\KeyGen(1^\secp),(x_0,x_1,\mathsf{st})\gets \cA_1^{\Enc(\sk,\cdot)}(1^\secp),\ct^*\gets\Enc(\sk,x_1)]|\\
&&\le\negl(\secp)
\end{eqnarray*}
where $\Enc(\sk,\cdot)$ is a classically-accessible oracle that takes $x$ as input and returns  $\ct\gets \Enc(\sk,x)$. 
\end{definition}
\begin{remark}
In the above definition, we only give a single-copy of $\ct^*$ to $\cA_2$. However, by a standard hybrid argument, we can show that the above security implies the security against $\cA_2$ that obtains $t$ copies of $\ct^*$ for any polynomial $t$.
\end{remark}
\begin{theorem}
If there exists IND-CPA secure QSKE, there exists QPOTP.
\end{theorem}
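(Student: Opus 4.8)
The plan is to obtain a QPOTP scheme by encrypting a long message block by block under a single key of the given IND-CPA secure quantum SKE scheme. The point is that IND-CPA security tolerates polynomially many encryptions under one key, so the plaintext length can be pushed past the (fixed) key length, which is precisely the gap $\kappa<\ell$ demanded in \cref{def:OTP}.

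Concretely, let $(\mathsf{SKE}.\KeyGen,\mathsf{SKE}.\Enc,\mathsf{SKE}.\Dec)$ be IND-CPA secure. First I would reduce to the case of single-bit messages (to encrypt $b\in\bit$ under $\mathsf{SKE}$, encrypt $b0\cdots0$ and decrypt by reading off the first bit; this preserves IND-CPA security), pad the output of $\mathsf{SKE}.\KeyGen(1^\secp)$ to a fixed length $\kappa=\kappa(\secp)$, and let $n=n(\secp)$ be the qubit length of a single-bit ciphertext. Then I would set $\ell\seteq\kappa+1$ and define $\mathsf{OTP}.\KeyGen\seteq\mathsf{SKE}.\KeyGen$ (with the padding), $\mathsf{OTP}.\Enc(\sk,x_1\cdots x_\ell)\seteq\bigotimes_{i=1}^{\ell}\mathsf{SKE}.\Enc(\sk,x_i)$, and $\mathsf{OTP}.\Dec$ decrypting each block separately. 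This has key length $\kappa$, plaintext length $\ell>\kappa$, and ciphertext length $\ell n$, matching the syntax of \cref{def:OTP}. Correctness is immediate from the correctness of $\mathsf{SKE}$ by a union bound over the $\ell=\poly(\secp)$ blocks.

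For security I would run a hybrid argument over the $\ell$ blocks: for $j\in\{0,\ldots,\ell\}$, let $H_j$ encrypt bits $1,\ldots,j$ of $x_1$ together with bits $j+1,\ldots,\ell$ of $x_0$, so that $H_0$ and $H_\ell$ are the two challenge ciphertexts. A QPT distinguisher between $H_{j-1}^{\otimes t}$ and $H_j^{\otimes t}$ would yield an IND-CPA adversary that uses its encryption oracle $\mathsf{SKE}.\Enc(\sk,\cdot)$ to produce $t$ fresh copies of every block except the $j$-th, submits the $j$-th bits of $x_0$ and $x_1$ as its challenge pair, obtains $t$ copies of the challenge ciphertext, assembles $t$ copies of the hybrid state, and runs the distinguisher. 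Summing over the $\ell$ steps bounds the QPOTP advantage by $\negl(\secp)$. The one point needing care — and the only place the argument is not completely mechanical — is that the QPOTP adversary may ask for $t$ copies of the ciphertext while the IND-CPA game as stated hands out only one copy of the challenge ciphertext; this is exactly resolved by the multi-copy strengthening of IND-CPA security noted in the remark preceding this theorem (itself a routine hybrid over copies). Everything else is standard bookkeeping, so I expect no genuine obstacle.
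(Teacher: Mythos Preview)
Your proposal is correct and follows essentially the same approach as the paper: encrypt bit-by-bit under a single SKE key to push the plaintext length past the key length, then use a hybrid over the blocks (with the encryption oracle simulating the non-challenge blocks) together with the multi-copy strengthening of IND-CPA security noted in the remark. The paper's own proof is just a two-sentence sketch of exactly this.
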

\begin{proof}[Proof (sketch)]
By a standard hybrid argument, IND-CPA security implies the multi-instance security, where the adversary is given unbounded-polynomially many ciphertexts. Thus, by taking the message length to be much larger than the secret key length and encrypting the massage by using the IND-CPA secure QSKE in a bit-by-bit manner, we can obtain QPOTP.   
\end{proof}

Combined with  \cref{thm:QPOTP_OWSG,thm:QPOTP_EFI} we obtain the following theorem. 
\begin{theorem}
If there exists IND-CPA secure QSKE, there exist OWSGs and EFI pairs.
\end{theorem}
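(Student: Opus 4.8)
The plan is to obtain this statement as an immediate corollary of the QPOTP construction established just above together with \cref{thm:QPOTP_OWSG} and \cref{thm:QPOTP_EFI}. First I would invoke the preceding theorem (IND-CPA secure QSKE implies QPOTP) to produce, from any IND-CPA secure QSKE scheme, a QPOTP scheme. The points worth checking are that the hybrid argument there actually yields (i) a scheme with $\kappa < \ell$, and (ii) poly-copy security. For (i), one fixes the QSKE key length and encrypts an arbitrarily long plaintext in a bit-by-bit manner, so that $\kappa$ stays fixed while $\ell$ can be taken as large as desired; for (ii), the standard hybrid argument upgrading IND-CPA security to multi-instance security guarantees indistinguishability even when the adversary receives unbounded-polynomially many ciphertexts.

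With such a QPOTP scheme in hand, I would then feed it into \cref{thm:QPOTP_OWSG} to conclude that OWSGs exist. For EFI pairs, I would simply observe that poly-copy security trivially implies single-copy security (specialize to $t=1$), so the QPOTP scheme obtained above is in particular a single-copy-secure QPOTP scheme with $\kappa < \ell$, and hence \cref{thm:QPOTP_EFI} yields EFI pairs. Taking the two conclusions together gives the theorem.

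The only genuine content lies in the QPOTP-from-QSKE construction, i.e. in the preceding theorem; once that is granted, the present statement needs no further argument beyond verifying the side conditions ($\kappa < \ell$ and single-copy security) required to apply \cref{thm:QPOTP_OWSG} and \cref{thm:QPOTP_EFI}. Accordingly I do not expect a real obstacle here — the one thing to be careful about is that the parameters of the QPOTP produced from QSKE indeed satisfy $\kappa < \ell$, which, as noted, is ensured by encrypting a sufficiently long message under a single QSKE key.
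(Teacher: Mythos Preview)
Your proposal is correct and matches the paper's approach exactly: the paper states this theorem immediately after the QSKE-implies-QPOTP result with nothing more than ``Combined with \cref{thm:QPOTP_OWSG,thm:QPOTP_EFI} we obtain the following theorem.'' Your write-up is in fact more explicit than the paper's, since you spell out the side conditions ($\kappa<\ell$ from bit-by-bit encryption, and that poly-copy security specializes to single-copy security for \cref{thm:QPOTP_EFI}).
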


\subsection{Quantum PKE}
We define quantum PKE (with quantum public keys and quantum ciphertexts) following \cite{JC:KKNY12}. 
\begin{definition}[Quantum PKE~\cite{JC:KKNY12}]
A quantum PKE scheme is a tuple of algorithms $(\SKGen,\PKGen,\Enc,\Dec)$ such that
\begin{itemize}
    \item 
    $\SKGen(1^\secp)\to \sk:$
    It is a QPT algorithm that, on input the security parameter $\secp$, outputs
    a classical secret key $\sk$.
    \item 
    $\PKGen(\sk)\to \pk:$
    It is a QPT algorithm that, on input a secret key $\sk$, outputs
    a quantum public key $\pk$.
    \item
    $\Enc(\pk,x)\to\ct:$
    It is a QPT algorithm that, on input a public key $\pk$ and a message $x\in\bit^\ell$, outputs
    a quantum ciphertext $\ct$.
    \item
    $\Dec(\sk,\ct)\to x':$
    It is a QPT algorithm that, on input $\sk$ and $\ct$, outputs $x'\in\bit^\ell$.
\end{itemize}
\paragraph{\bf Correctness:}
For any $x\in\bit^\ell$,
\begin{eqnarray*}
\Pr[x\gets\Dec(\sk,\ct):\sk\gets\SKGen(1^\secp),\pk\gets\PKGen(\sk),\ct\gets\Enc(\pk,x)]
\ge1-\negl(\secp).
\end{eqnarray*}
\paragraph{\bf IND-CPA Security:}
For any QPT adversary $\cA=(\cA_1,\cA_2)$, and any polynomial $t$,
\begin{eqnarray*}
&&\left|\Pr\left[1\gets \cA_2(\mathsf{st},\ct^*):
\begin{array}{l}
\sk\gets\SKGen(1^\secp),\\
\pk\gets \PKGen(\sk),\\ 
(x_0,x_1,\mathsf{st})\gets \cA_1(\pk^{\otimes t}),\\
\ct^*\gets\Enc(\pk,x_0)
\end{array}\right]\right.\\
&-&
\left.
\Pr\left[1\gets \cA_2(\mathsf{st},\ct^*):
\begin{array}{l}
\sk\gets\SKGen(1^\secp),\\
\pk\gets \PKGen(\sk),\\ 
(x_0,x_1,\mathsf{st})\gets \cA_1(\pk^{\otimes t}),\\
\ct^*\gets\Enc(\pk,x_1)
\end{array}\right]
\right|
\le\negl(\secp).
\end{eqnarray*}
\end{definition}

IND-CPA QPKE implies IND-CPA QSKE in a trivial manner similarly to the classical setting. 
On the other hand, we prove that the other direction also holds, which means that the existence of IND-CPA QSKE and QPKE are equivalent.
\begin{theorem}
There exists IND-CPA secure QPKE if and only if there exists IND-CPA secure QSKE.
\end{theorem}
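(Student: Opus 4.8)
The plan is to prove the two implications separately. The direction ``IND-CPA QPKE $\Rightarrow$ IND-CPA QSKE'' is the routine one, mirroring the classical argument; the direction ``IND-CPA QSKE $\Rightarrow$ IND-CPA QPKE'' is the interesting one and it crucially exploits that the public key is a \emph{quantum} state, so that the classical impossibility of building PKE from SKE is circumvented.

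For the easy direction I would first observe that, as for any QPT algorithm, $\PKGen$ may be taken to apply a fixed unitary to $\ket{\sk}\ket{0\cdots0}$ and trace out an ancilla register, so it deterministically outputs a fixed mixed state $\pk_\sk$ that can be recomputed from $\sk$ as often as needed. I would then define a QSKE scheme whose classical secret key is $\sk\gets\SKGen(1^\secp)$, whose encryption of $m$ recomputes $\pk_\sk\gets\PKGen(\sk)$ and outputs $\Enc(\pk_\sk,m)$, and whose decryption is $\Dec(\sk,\cdot)$. Correctness is immediate, and for IND-CPA security, given a QSKE adversary $\cA$ making at most $Q=\poly(\secp)$ encryption-oracle queries, a QPKE adversary $\cB$ requests $Q+1$ copies of the challenge public key $\pk_{\sk^*}$, answers each oracle query $m_j$ by $\Enc(\pk_{\sk^*},m_j)$ using a fresh copy, and forwards $\cA$'s challenge messages to the external QPKE challenger; this is a perfect simulation, so $\cB$ inherits $\cA$'s advantage.

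For the hard direction I would use a bit-by-bit construction that is \emph{unsound classically but sound quantumly}. Let $(\KeyGen_{\QSKE},\Enc_{\QSKE},\Dec_{\QSKE})$ be an IND-CPA QSKE scheme for single-bit messages. Set $\SKGen(1^\secp)$ to output $\sk:=(K_1,\dots,K_\ell)$ with $K_i\gets\KeyGen_{\QSKE}(1^\secp)$; set $\PKGen(\sk)$ to output $\pk:=\bigotimes_{i\in[\ell]}(c_{i,0}\otimes c_{i,1})$ with $c_{i,b}\gets\Enc_{\QSKE}(K_i,b)$; set $\Enc(\pk,m)$ to discard, for each $i$, the component $c_{i,1-m_i}$ and output $\bigotimes_i c_{i,m_i}$; and set $\Dec(\sk,\ct)$ to decrypt the $i$-th component under $K_i$ and concatenate. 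Correctness follows from QSKE correctness and a union bound. For IND-CPA security I would fix $x_0,x_1$, let $I=\{i:x_{0,i}\neq x_{1,i}\}$, and move through hybrids that, one index $i\in I$ at a time, switch the $i$-th component of the challenge ciphertext from $c_{i,x_{0,i}}$ to $c_{i,x_{1,i}}$; since QSKE encryption outputs a \emph{fixed} mixed state, each hybrid step reduces to plain IND-CPA of the single instance $K_i$: the reduction generates all other $K_{i'}$ itself, assembles $\pk^{\otimes t}$ by querying the $K_i$-encryption oracle $t$ times on $0$ and $t$ times on $1$ (each query returning one more copy of $c_{i,0}$ or $c_{i,1}$), and embeds the single external QSKE challenge as the $i$-th component of the QPKE challenge ciphertext. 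No strengthening of IND-CPA beyond allowing polynomially many oracle queries is needed.

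The main obstacle --- and the only real content --- is arguing that this construction is sound. Classically it fails outright: the challenge ciphertext reuses components of the public key, those components are classical strings, and a distinguisher simply tests at each index whether the challenge component equals $c_{i,0}$ or $c_{i,1}$, reading off $m$. Quantumly, that attack and in fact \emph{every} efficient attack using only copies of the public key and of the challenge is ruled out by IND-CPA of the QSKE scheme itself: telling which of two known quantum states a fresh challenge copy ``is'' --- say, by a SWAP test against a reference copy --- is an efficient procedure, so its success would contradict the indistinguishability of $\Enc_{\QSKE}(K,0)$ and $\Enc_{\QSKE}(K,1)$ even given encryption-oracle access; this is exactly the point where the argument breaks in the classical world and survives in the quantum world. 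I expect making this informal intuition into a clean hybrid reduction --- tracking which key each component belongs to and which QSKE-instance security is being invoked at each step --- to be the bulk of the work.
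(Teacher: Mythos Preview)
Your proposal is correct and follows essentially the same approach as the paper. The core idea for the nontrivial direction is identical: the public key consists of QSKE encryptions of $0$ and $1$, encryption of a bit selects the corresponding ciphertext, and security holds because the reduction can simulate arbitrarily many copies of the public key by querying the QSKE encryption oracle on $0$ and $1$ repeatedly. The paper states this for a single key and single-bit messages and then remarks that bit-wise encryption extends to $\ell$-bit messages, whereas you fold the bit-wise extension in from the start with $\ell$ independent keys and a hybrid over positions; this is a cosmetic difference, and your more explicit reduction and your discussion of why the obvious classical attack is blocked are both fine elaborations of what the paper leaves as a one-line sketch.
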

\begin{proof}[Proof (sketch.)]
IND-CPA QPKE trivially implies IND-CPA QSKE by considering the combination of $\SKGen$ and $\PKGen$ of QPKE as $\KeyGen$ of QSKE. We prove the other direction. 
Let $(\QSKE.\KeyGen,\QSKE.\Enc,\QSKE.\Dec)$ be an IND-CPA secure QSKE with one-bit messages. We construct a QPKE scheme $(\QPKE.\PKGen,\QPKE.\SKGen,\QPKE.\Enc,\QPKE.\Dec)$ with one-bit messages as follows. (Note that the message space of IND-CPA secure QPKE schemes can be extended to arbitrarily many bits by bit-wise encryption.)
\begin{itemize}
    \item 
    $\QPKE.\SKGen(1^\secp):$
    Run $\sk\gets \QSKE.\KeyGen(1^\secp)$ and outputs $\sk$.  
    \item 
    $\QPKE.\PKGen(\sk):$
    Run $\ct_0\gets \QSKE.\Enc(\sk,0)$ and $\ct_1\gets \QSKE.\Enc(\sk,1)$ and outputs $\pk\seteq (\ct_0,\ct_1)$. 
    \item
    $\QPKE.\Enc(\pk,x):$
    On input $\pk=(\ct_0,\ct_1)$ and $x\in \bit$, output $\ct=\ct_x$.  
    \item
    $\QPKE.\Dec(\sk,\ct):$
    Output $\QSKE.\Dec(\sk,\ct)$. 
\end{itemize}
The correctness of $\QPKE$ immediately follows from that of $\QSKE$. 
Noting that we can simulate arbitrarily many copies of $\pk= (\ct_0,\ct_1)$ by querying $0$ and $1$ to the encryption oracle $\SKE.\Enc(\sk,\cdot)$ many times, there is a straightforward reduction from  the IND-CPA security of  $\QPKE$ to that of $\QSKE$. 
\end{proof}
\begin{remark}
One may think that this theorem is surprisingly strong because it is unlikely that SKE implies PKE in the classical setting~\cite{C:ImpRud88}. However, we would like to point out that QPKE as defined in \cite{JC:KKNY12} is not as useful as classical PKE because public keys are quantum. Their security model assumes that public keys are delivered to senders without being forged. This itself is the same as classical PKE, but the crucial difference is that forgery of public keys can be prevented by additionally using digital signatures in the classical setting, but that is not possible for QPKE because we cannot sign on quantum messages~\cite{Alagic_2021}. In fact, the assumption that the adversary cannot forge public keys mean that it  also cannot eavesdrop it because eavesdropping may change the state of the public key. Thus, their security model essentially assumes a secure channel to send public keys to the sender. If there is such a secure channel, we could simply send a key for SKE through that channel. Thus, this theorem should not be understood as a new useful feasibility result on PKE. 
We remark that the construction does not work if we consider QPKE with classical public keys~\cite{EC:HhaMorYam23}. 
We conjecture that it is impossible to construct QPKE with classical public keys from QSKE under a certain class of black-box constructions. 
\end{remark}
\newcommand{\rsp}{\mathsf{rsp}}
\newcommand{\Ext}{\mathsf{Ext}}
\newcommand{\Wrong}{\mathsf{Wrong}}

\section{Proof of \cref{thm:amplify_qpuzzle}}\label{sec:proof_amplification}
In this section, we prove \cref{thm:amplify_qpuzzle}. We remark that the following proof is almost identical to that of \cite[Lemma~1]{TCC:CanHalSte05} except for the modification explained in the proof sketch in \cref{sec:amplification}.  

\protocol
{The Adversary $\cA'(\puz^{\otimes t'})$}
{The Adversary $\cA'$ for $(\CheckGen,\PuzzleGen,\Ver)$}
{fig:Aprime}
{~\\
\textbf{Preprocessing Phase}
\begin{enumerate}
    \item Initialize $\prefix$ to be an empty vector.
    \item For $i=1$ to $n-1$, do the following:
    \begin{enumerate}
        \item Run $\che^*\gets \extend(\prefix,i)$.
        \item If $\che^*=\bot$, set $v\gets i$ and go to \textbf{Online Phase}.
        \item Update $\prefix\gets \prefix \circ \che^*$.
    \end{enumerate}
    \item Set $v\gets n$ and go to \textbf{Online Phase}.
\end{enumerate}
\textbf{Online Phase}
\begin{enumerate}
    \item If $v=n$, do the following:
    \begin{enumerate}
        \item Parse $\prefix=(\che_1,\ldots,\che_{n-1})$.
        \item Run $\puz_i^{\otimes t}\gets \PuzzleGen^{\otimes t}(\che_i)$ for $i\in [n-1]$. 
        \item Run $(\ans_1,\ldots ,\ans_n)\gets \cA(\puz_1^{\otimes t},\ldots, \puz_{n-1}^{\otimes t},\puz^{\otimes t})$.
        \item Output $\ans_n$.
    \end{enumerate}
    \item Otherwise, repeat the following $L=\lceil \frac{6q\ln(6q)}{\delta^{n-v+1}}\rceil$ times.
    \begin{enumerate}
    \item Parse $\prefix=(\che_1,\ldots,\che_{v-1})$.
     \item Run $\puz_i^{\otimes t}\gets \PuzzleGen^{\otimes t}(\che_i)$ for $i\in [v-1]$.
        \item For $i=v+1$ to $n$, do the following:
        \begin{enumerate}
        \item Run $\che_i \gets \CheckGen(1^\secp)$ 
        \item Run $\puz_i^{\otimes t}\gets \PuzzleGen^{\otimes t}(\che_i)$. 
    \end{enumerate}
    \item Run $(\ans_1,\ldots ,\ans_n)\gets \cA(\puz_1^{\otimes t},\ldots,\puz_{v-1}^{\otimes t}, \puz^{\otimes t},\puz_{v+1}^{\otimes t},\ldots, \puz_n^{\otimes t})$.
    \item Run $d_i \gets \Ver(\ans_i,\che_i)$ for $i\in \{v+1.\ldots,n\}$.
    \item Output $\ans_v$ if $d_i=\top$ for all $i\in \{v+1.\ldots,n\}$.
    \end{enumerate}
    If none of the above repetitions outputs $\ans_v$, then abort.
\end{enumerate}
\textbf{The Subroutine} $\extend(\prefix,i)$
\begin{enumerate}
    \item Repeat the following $N_i=\lceil \frac{6q}{\delta^{n-i+1}}\ln(\frac{18qn}{\delta})\rceil$ times:
    \begin{enumerate}
        \item Run $\che^*\gets \CheckGen(1^\secp)$.
        \item Run $\bar{\mu}_{\che^*}\gets \estimate(\prefix\circ \che^*, i)$
        \item If  $\bar{\mu}_{\che^*}\geq \delta^{n-i}$, output $\che^*$. 
    \end{enumerate}
    \item Output $\bot$.
\end{enumerate}
\textbf{The Subroutine} $\estimate(\prefix,i)$
\begin{enumerate}
    \item Parse $\prefix=(\che_1,\ldots,\che_{i})$.
    \item Initialize $\mathsf{count}\gets 0$.
    \item Repeat the following $M_i=\lceil \frac{84q^2}{\delta^{n-i}}\ln(\frac{18qnN_i}{\delta})\rceil$ times:
    \begin{enumerate}
        \item Run $\puz_j^{\otimes t}\gets \PuzzleGen^{\otimes t}(\che_j)$ for $j\in [i]$.
        \item For $j=i+1$ to $n$, do the following:
        \begin{enumerate}
        \item Run $\che_{j}\gets \CheckGen(1^\secp)$.
        \item Run $\puz_j^{\otimes t}\gets \PuzzleGen^{\otimes t}(\che_j)$.
        \end{enumerate}
        \item Run $(\ans_1,\ldots ,\ans_n)\gets \cA(\puz_1^{\otimes t},\ldots, \puz_n^{\otimes t})$.
        \item If $\Ver(\ans_j,\che_j)=\top$ for all $j\in \{i+1,\ldots,n\}$, increment $\mathsf{count}\gets \mathsf{count} +1$.
    \end{enumerate}
    \item Output $\frac{\mathsf{count}}{M_i}$.
\end{enumerate}
}

\begin{proof}[Proof of \cref{thm:amplify_qpuzzle}]
We prepare several definitions. We write $\puz^{\otimes t} \gets \PuzzleGen^{\otimes t}(\che)$ to mean that we run $\puz \gets \PuzzleGen(\che)$ $t$ times to generate $\puz^{\otimes t}$. 
For $\vec{\che}=(\che_1,\ldots, \che_i)$ and $\che$, we define $\vec{\che}\circ \che\seteq (\che_1,\ldots, \che_i,\che)$.
We set $t'\seteq \lceil \frac{6q\ln(6q)}{\delta^{n}}\rceil\cdot t$.  
The construction of $\cA'$ is given in \cref{fig:Aprime}. 
Note that $t'$ copies of $\puz$ are indeed sufficient for $\cA'$ since it feeds $t$ copies of $\puz$ to $\cA$ at most $L=\lceil \frac{6q\ln(6q)}{\delta^{n-v+1}}\rceil\leq \lceil \frac{6q\ln(6q)}{\delta^{n}}\rceil$ times.  
We analyze $\cA'$ below. 

For a sequence $\vec{\che}=(\che_1,\ldots,\che_i)$ of $i\le n-1$ check keys, we define 
\begin{align*}
    \rsp(\vec{\che})\seteq \Pr\left[
    \begin{array}{l}
    \forall j\in\{i+1,\ldots,n\}\\
    \Ver(\ans_j,\che_j)=\top
    \end{array}
    :
    \begin{array}{l}
    (\che_{i+1},\ldots,\che_{n}) \gets\CheckGen^{n-i}(1^\secp) \\
       \puz_j^{\otimes t} \gets\PuzzleGen^{\otimes t}(\che_j) \text{~for~}j\in[n]\\
      (\ans_1,\ldots,\ans_n) \gets \cA(\puz_1^{\otimes t},\ldots,\puz_n^{\otimes t})
    \end{array}
    \right].
\end{align*}
For a sequence $\vec{\che}=(\che_1,\ldots,\che_{i-1})$ of $i-1\le n-2$ check keys, we define a subset $\Ext_i(\vec{\che})$ of check keys as follows:
\begin{align*}
    \Ext_i(\vec{\che})\seteq \left\{\che: \rsp(\vec{\che}\circ \che)\ge \delta^{n-i}\left(1+\frac{1}{6q}\right) \right\}.
\end{align*}
Let $\prefix=(\che_1,\ldots,\che_{v-1})$ be the prefix at the end of preprocessing phase.  
Let $\prefix_i$ be the random variable defined to be $(\che_1,\ldots,\che_i)$ if $i\le v-1$ and otherwise $\bot$.  
For convenience, we assign to $\prefix_0$ a special symbol $\Lambda$ that is different from $\bot$.  
For $i\in [n-1]$, let $\Wrong_i$ be the event that $\prefix_{i-1}\neq \bot$ and one of the following holds:
\begin{enumerate}
    \item $\prefix_i\neq \bot$ and $\rsp(\prefix_i)<\delta^{n-i}\left(1-\frac{1}{6q}\right)$.
    \item $\prefix_i=\bot$ and $\Pr[\che \in \Ext_i(\prefix_{i-1}):\che\gets \CheckGen(1^\secp)]\ge \frac{\delta^{n-i+1}}{6q}$. 
\end{enumerate}
By the Chernoff bound, we can show that 
\begin{align}\label{eq:Wrong_prob}
    \Pr[\Wrong_i]\le \frac{\delta}{6qn}
\end{align}
for all $i\in [n-1]$. We omit the proof since it is exactly the same as the proof of \cite[Claim 2]{TCC:CanHalSte05}. 

We move onto the analysis of the online phase. We say that $\cA'$ succeeds if its answer passes the verification.\footnote{Note that the success or failure of $\cA'$ is not determined by the execution of $\cA'$ itself. It is determined after running the verification algorithm to verify the answer output by $\cA'$. This is because we consider non-deterministic verification algorithm unlike~\cite{TCC:CanHalSte05}. Thus, whenever we refer to the success of $\cA$',   we implicitly run the verification algorithm on its output.} 
We prove that $\cA'$ succeeds with probability at least $\delta\left(1-\frac{5}{6q}\right)$ unless $\Wrong_i$ occurs for some $i\in[n-1]$. If this is proven, it finishes the proof of \cref{thm:amplify_qpuzzle} 
since the probability that $\Wrong_i$ occurs for some $i\in[n-1]$ is at most $\frac{\delta}{6q}$ by the union bound and \cref{eq:Wrong_prob}.    

Suppose that the preprocessing phase generates prefix $\prefix_{v-1}=(\che_1,\ldots,\che_{v-1})$ of length $v-1$. 

When $v=n$, we have $\prefix_{i}\ne \bot$ for all $i\in[n-1]$. Thus, if we assume that $\Wrong_{n-1}$ does not occur, we have 
\begin{align*}
    \rsp(\prefix_{n-1})\ge \delta\left(1-\frac{1}{6q}\right).
\end{align*}
This directly means that the probability that $\cA$' succeeds is at least  $\delta\left(1-\frac{1}{6q}\right)\ge \delta\left(1-\frac{5}{6q}\right)$. 

In the following, we consider the case where $v\le n-1$.  
In this case, we have $\prefix_{i}\ne \bot$ for all $i\in[v-1]$ and $\prefix_{v}= \bot$. Then if we assume that neither of $\Wrong_{v-1}$ or $\Wrong_{v}$ occurs, we have 
\begin{align}\label{eq:rsp_lb}
    \rsp(\prefix_{v-1})\ge \delta^{n-v+1}\left(1-\frac{1}{6q}\right)
\end{align}
and
\begin{align}\label{eq:prob_E_ub}
    \Pr[\che\in E:\che\gets \CheckGen(1^\secp)] < \frac{\delta^{n-v+1}}{6q} 
\end{align}
where we define
    $E\seteq \Ext_v(\prefix_{v-1})$ for convenience.
In the following, we fix $\prefix_{v-1}=(\che_1,\ldots,\che_{v-1})$ that satisfies \cref{eq:prob_E_ub,eq:rsp_lb} and show that $\cA$' succeeds with probability at least  $\delta\left(1-\frac{5}{6q}\right)$ for any such fixed $\prefix$. As explained above, this suffices for completing the proof of \cref{thm:amplify_qpuzzle}. 

For any check key $\che$, we define 
\begin{align*}
    &w(\che)\seteq \Pr[\CheckGen(1^\secp)=\che],\\
    &s(\che)\seteq \Pr\left[
    \begin{array}{l}
    \forall j\in\{v,\ldots,n\}\\
    \Ver(\ans_j,\che_j)=\top
    \end{array}
    :
    \begin{array}{l}
    \che_v\seteq \che\\
    (\che_{v+1},\ldots,\che_{n}) \gets\CheckGen^{n-v}(1^\secp) \\
       \puz_j^{\otimes t} \gets\PuzzleGen^{\otimes t}(\che_j) \text{~for~}j\in[n]\\
      (\ans_1,\ldots,\ans_n) \gets \cA(\puz_1^{\otimes t},\ldots,\puz_n^{\otimes t})
    \end{array}
    \right]\\
    &a(\che)\seteq \Pr\left[
    \begin{array}{l}
    \forall j\in\{v+1,\ldots,n\}\\
    \Ver(\ans_j,\che_j)=\top
    \end{array}
    :
    \begin{array}{l}
    \che_v\seteq \che\\
    (\che_{v+1},\ldots,\che_{n}) \gets\CheckGen^{n-v}(1^\secp) \\
       \puz_j^{\otimes t} \gets\PuzzleGen^{\otimes t}(\che_j) \text{~for~}j\in[n]\\
      (\ans_1,\ldots,\ans_n) \gets \cA(\puz_1^{\otimes t},\ldots,\puz_n^{\otimes t})
    \end{array}
    \right].
\end{align*}
Note that  $\prefix_{v-1}=(\che_1,\ldots,\che_{v-1})$ is fixed and hardwired in the above definitions. 
By the definitions of $w(\che)$, $s(\che)$, and $\rsp(\prefix_{v-1})$ and \cref{eq:rsp_lb}, we have
\begin{align}\label{eq:lb_sum_ws}
    \sum_{\che}w(\che)s(\che)=\rsp(\prefix_{v-1})\ge \delta^{n-v+1}\left(1-\frac{1}{6q}\right).
\end{align}
By the definition of $E$ and $a(\che)$, for any $\che\notin E$, we have
\begin{align}\label{eq:lb_a}
    a(\che)\le \delta^{n-v}\left(1+\frac{1}{6q}\right).
\end{align}
By the definitions of $s(\che)$ and $a(\che)$, we have
\begin{align}\label{eq:prob_conditional}
\Pr[\Ver(\cA'(\puz^{\otimes t'}),\che)=\top|\cA'(\puz^{\otimes t'})\text{~does~not~abort}]=\frac{s(\che)}{a(\che)}.
\end{align}
We let 
\begin{align}\label{eq:def_B}
    B\seteq \left\{\che:s(\che)<\frac{\delta^{n-v+1}}{6q}\right\}. 
\end{align}
The for any $\che\notin B$, we have
\begin{align}\label{eq:ub_abort}
\Pr[\cA'(\puz^{\otimes t'})\text{~aborts}:\puz^{\otimes t'}\gets \PuzzleGen^{\otimes t'}(\che)]<\left(1-\frac{\delta^{n-v+1}}{6q}\right)^{\lceil\frac{6q\ln(6q)}{\delta^{n-v+1}}\rceil}<\frac{1}{6q}.
\end{align}
Next, we have 
\begin{align*}
    \sum_{\che\in B\cup E}w(\che)s(\che)&\le \sum_{\che\in B}w(\che)s(\che)+ \sum_{\che\in E}w(\che)s(\che)\\
    &\le \sum_{\che\in B}w(\che)\frac{\delta^{n-v+1}}{6q}+\sum_{\che\in E}w(\che)\\
    &\le \frac{\delta^{n-v+1}}{6q}+\frac{\delta^{n-v+1}}{6q}=\frac{\delta^{n-v+1}}{3q}
\end{align*}
where the second inequality follows from \cref{eq:def_B} and the third inequality follows from \cref{eq:prob_E_ub}. Combined with \cref{eq:lb_sum_ws},  we have
\begin{align}\label{eq:lb_sum_ws_two}
   \sum_{\che\notin B\cup E}w(\che)s(\che) \ge \delta^{n-v+1}\left(1-\frac{1}{2q}\right).
\end{align}
Then we have the following where $\puz\gets \PuzzleGen(\che)$ whenever $\puz$ appears: 
\begin{align*}
    &\Pr_{\che}[\A'(\puz^{\otimes t'})\text{~succeeds}]\\
    &=\sum_{\che}w(\che)\Pr[\Ver(\cA'(\puz^{\otimes t'}),\che)=\top]\\
    &\ge\sum_{\che \notin B\cup E}w(\che)\Pr[\cA'(\puz^{\otimes t'})\text{~does~not~abort}]\Pr[\Ver(\cA'(\puz^{\otimes t'}),\che)=\top|\cA'(\puz^{\otimes t'})\text{~does~not~abort}]\\
    &\ge \sum_{\che \notin B\cup E}w(\che)\left(1-\frac{1}{6q}\right)\frac{s(\che)}{a(\che)}\\
    &\ge \sum_{\che \notin B\cup E}w(\che)\left(1-\frac{1}{6q}\right)\frac{s(\che)}{\delta^{n-v}\left(1+\frac{1}{6q}\right)}\\
    &=\frac{1-\frac{1}{6q}}{\delta^{n-v}\left(1+\frac{1}{6q}\right)}\sum_{\che \notin B\cup E}w(\che)s(\che)\\
    &\ge \frac{1-\frac{1}{6q}}{\delta^{n-v}\left(1+\frac{1}{6q}\right)}\delta^{n-v+1}\left(1-\frac{1}{2q}\right)\\
    &>\delta\left(1-\frac{5}{6q}\right)
\end{align*}
where the second inequality follows from \cref{eq:prob_conditional,eq:ub_abort}, 
the third inequality follows from \cref{eq:lb_a}, 
and the fourth inequality follows from \cref{eq:lb_sum_ws_two}. As already explained, the above lower bound on the success probability suffices for completing the proof of \cref{thm:amplify_qpuzzle}.
\end{proof}

\fi

\end{document}